\newcolumntype{x}[1]{>{\centering\arraybackslash}p{#1}}
\newtheorem{thm}{Theorem}
\newtheorem*{thm*}{Theorem}
\newtheorem{prop}[thm]{Proposition}
\newtheorem*{prop*}{Proposition}
\newtheorem{lemma}[thm]{Lemma}
\newtheorem*{lemma*}{Lemma}
\newtheorem*{cor*}{Corollary}
\newtheorem{cj}[thm]{Conjecture}
\newtheorem*{cj*}{Conjecture}
\newtheorem{Def}[thm]{Definition}
\newtheorem*{Def*}{Definition}
\def\thmhead@plain#1#2#3{%
  \thmname{#1}\thmnumber{\@ifnotempty{#1}{ }\@upn{#2}}%
  \thmnote{ {\the\thm@notefont#3}}}
\let\thmhead\thmhead@plain
\theoremstyle{definition}
\newtheorem{rem}[thm]{Remark}
\newtheorem*{rem*}{Remark}
\newtheorem*{note}{Note}
\newcommand{\bb}{\begin{equation}\begin{aligned}}
\newcommand{\bbb}{\begin{equation*}\begin{aligned}}
\newcommand{\ee}{\end{aligned}\end{equation}}
\newcommand{\eee}{\end{aligned}\end{equation*}}
\newcommand\floor[1]{\left\lfloor#1\right\rfloor}
\newcommand\ceil[1]{\left\lceil#1\right\rceil}
\let\textleq\relax
\let\textgeq\relax
\let\texteq\relax
\newcommand{\texteq}[1]{\stackrel{\mathclap{\scriptsize \mbox{#1}}}{=}}
\newcommand{\textleq}[1]{\stackrel{\mathclap{\scriptsize \mbox{#1}}}{\leq}}
\newcommand{\textgeq}[1]{\stackrel{\mathclap{\scriptsize \mbox{#1}}}{\geq}}
\newcommand{\ketbra}[1]{\ket{#1}\!\!\bra{#1}}
\newcommand{\ketbraa}[2]{\ket{#1}\!\!\bra{#2}}
\newcommand{\sumno}{\sum\nolimits}
\newcommand{\id}{\mathds{1}}
\newcommand{\R}{\mathds{R}}
\newcommand{\N}{\mathds{N}}
\newcommand{\C}{\mathds{C}}
\newcommand{\cptp}{\mathrm{CPTP}}
\newcommand{\ptp}{\mathrm{PTP}}
\newcommand{\locc}{\mathrm{LOCC}}
\newcommand{\ppt}{\mathrm{PPT}}
\DeclareMathOperator{\Tr}{Tr}
\DeclareMathOperator{\rk}{rk}
\DeclareMathOperator{\cl}{cl}
\DeclareMathOperator{\co}{conv}
\DeclareMathOperator{\cone}{cone}
\DeclareMathAlphabet{\pazocal}{OMS}{zplm}{m}{n}
\DeclareMathOperator{\supp}{supp}
\newcommand{\HH}{\pazocal{H}}
\newcommand{\T}{\pazocal{T}}
\newcommand{\B}{\pazocal{B}}
\newcommand{\Tp}{\pazocal{T}_+}
\newcommand{\D}{\pazocal{D}}
\newcommand{\K}{\pazocal{K}}
\newcommand{\Bsa}{\pazocal{B}_{\mathrm{sa}}}
\newcommand{\lsmatrix}{\left(\begin{smallmatrix}}
\newcommand{\rsmatrix}{\end{smallmatrix}\right)}
\newcommand\xxrightarrow[2][]{\mathrel{%
  \setbox2=\hbox{\stackon{\scriptstyle #1}{\scriptstyle#2}}%
  \stackunder[4pt]{%
    \xrightarrow{\makebox[\dimexpr\wd2\relax]{$\scriptstyle#2$}}%
  }{%
   \scriptstyle#1\,%
  }%
}}
\newcommand{\tends}[2]{\xxrightarrow[#2]{\mathrm{#1}}}
\newcommand{\tendsn}[1]{\xxrightarrow[\! n\rightarrow \infty\!]{#1}}
\renewcommand{\tends}[2]{\xxrightarrow[#2]{\vphantom{.}\smash{{\raisebox{-1.8pt}{\scriptsize{#1}}}}}}
\newcommand*\rel@kern[1]{\kern#1\dimexpr\macc@kerna}
\newcommand*\widebar[1]{%
  \begingroup
  \def\mathaccent##1##2{%
    \rel@kern{0.8}%
    \overline{\rel@kern{-0.8}\macc@nucleus\rel@kern{0.2}}%
    \rel@kern{-0.2}%
  }%
  \macc@depth\@ne
  \let\math@bgroup\@empty \let\math@egroup\macc@set@skewchar
  \mathsurround\z@ \frozen@everymath{\mathgroup\macc@group\relax}%
  \macc@set@skewchar\relax
  \let\mathaccentV\macc@nested@a
  \macc@nested@a\relax111{#1}%
  \endgroup
}
\definecolor{Blues5seq1}{RGB}{239,243,255}
\definecolor{Blues5seq2}{RGB}{189,215,231}
\definecolor{Blues5seq3}{RGB}{107,174,214}
\definecolor{Blues5seq4}{RGB}{49,130,189}
\definecolor{Blues5seq5}{RGB}{8,81,156}
\definecolor{Greens5seq1}{RGB}{237,248,233}
\definecolor{Greens5seq2}{RGB}{186,228,179}
\definecolor{Greens5seq3}{RGB}{116,196,118}
\definecolor{Greens5seq4}{RGB}{49,163,84}
\definecolor{Greens5seq5}{RGB}{0,109,44}
\definecolor{Reds5seq1}{RGB}{254,229,217}
\definecolor{Reds5seq2}{RGB}{252,174,145}
\definecolor{Reds5seq3}{RGB}{251,106,74}
\definecolor{Reds5seq4}{RGB}{222,45,38}
\definecolor{Reds5seq5}{RGB}{165,15,21}
\definecolor{Yellow}{RGB}{233, 233, 150}
\definecolor{Orange}{RGB}{255, 230, 179}
\definecolor{Blueish}{RGB}{192, 242, 217}
\definecolor{goldenyellow}{rgb}{1.0, 0.87, 0.0}
\tikzset{cross/.style={cross out, draw=black, minimum size=2*(#1-\pgflinewidth), inner sep=0pt, outer sep=0pt},
cross/.default={1pt}}
\def\l@subsection#1#2{}
\def\l@subsubsection#1#2{}
\def\@sect@ltx#1#2#3#4#5#6[#7]#8{%
    \@ifnum{#2>\c@secnumdepth}{%
        \def\H@svsec{\phantomsection}%
        \let\@svsec\@empty
    }{%
        \H@refstepcounter{#1}%
        \def\H@svsec{%
            \phantomsection
        }%
        \protected@edef\@svsec{{#1}}%
        \@ifundefined{@#1cntformat}{%
            \prepdef\@svsec\@seccntformat
        }{%
            \expandafter\prepdef
            \expandafter\@svsec
            \csname @#1cntformat\endcsname
        }%
    }%
    \@tempskipa #5\relax
    \@ifdim{\@tempskipa>\z@}{%
        \begingroup
        \interlinepenalty \@M
        #6{%
            \@ifundefined{@hangfrom@#1}{\@hang@from}{\csname @hangfrom@#1\endcsname}%
            {\hskip#3\relax\H@svsec}{\@svsec}{#8}%
        }%
        \@@par
        \endgroup
        \@ifundefined{#1mark}{\@gobble}{\csname #1mark\endcsname}{#7}%
        \addcontentsline{toc}{#1}{%
            \@ifnum{#2>\c@secnumdepth}{%
                \protect\numberline{}%
            }{%
                \protect\numberline{\csname the#1\endcsname}%
            }%
            #7}
    }{%
        \def\@svsechd{%
            #6{%
                \@ifundefined{@runin@to@#1}{\@runin@to}{\csname @runin@to@#1\endcsname}%
                {\hskip#3\relax\H@svsec}{\@svsec}{#8}%
            }%
            \@ifundefined{#1mark}{\@gobble}{\csname #1mark\endcsname}{#7}%
            \addcontentsline{toc}{#1}{%
                \@ifnum{#2>\c@secnumdepth}{%
                    \protect\numberline{}%
                }{%
                    \protect\numberline{\csname the#1\endcsname}%
                }%
                #8}%
        }%
    }%
    \@xsect{#5}}%
\pgfplotsset{compat=1.17}
\newcommand{\SEP}{\pazocal{S}}
\newcommand{\PPT}{\pazocal{P\!P\!T}}
\newcommand{\sepp}{\mathrm{NE}}
\newcommand{\pptp}{\mathrm{PPTP}}
\newcommand{\kp}{\mathrm{KP}}
\newcommand{\ten}{E^\tau_N}
\newcommand{\tl}{L^\tau}
\newcommand{\tsr}{R^\tau}
\newcommand{\tn}{N_\tau}
\newcommand{\wt}{\widetilde}
\newcommand{\lset}{\left\{}
\newcommand{\rset}{\right\}}
\renewcommand{\bar}{\,:\,}
\newcommand{\idc}{\mathrm{id}}
\let\epsilon\varepsilon
\DeclareMathAlphabet\mathbfcal{OMS}{cmsy}{b}{n}
\begin{document}

\title{
No second law of entanglement manipulation after all
}

\author{Ludovico Lami}
\email{ludovico.lami@gmail.com}
\affiliation{Institut f\"{u}r Theoretische Physik und IQST, Universit\"{a}t Ulm, Albert-Einstein-Allee 11, D-89069 Ulm, Germany}
\affiliation{QuSoft, Korteweg--de Vries Institute for Mathematics, and Institute for Theoretical Physics, University of Amsterdam, the Netherlands}

\author{Bartosz Regula}
\email{bartosz.regula@gmail.com}
\affiliation{Department of Physics, Graduate School of Science, The University of Tokyo, Bunkyo-ku, Tokyo 113-0033, Japan}
\affiliation{School of Physical and Mathematical Sciences, Nanyang Technological University, 637371, Singapore}

\begin{abstract}
Many fruitful analogies have emerged between the theories of quantum entanglement and thermodynamics, motivating the pursuit of an axiomatic description of entanglement akin to the laws of thermodynamics. A long-standing open problem has been to establish a true second law of entanglement, and in particular a unique function which governs all transformations between entangled systems, mirroring the role of entropy in thermodynamics. Contrary to previous promising evidence, here we show that this is impossible, and no direct counterpart to the second law of thermodynamics can be established. This is accomplished by demonstrating the irreversibility of entanglement theory from first principles --- assuming only the most general microscopic physical constraints of entanglement manipulation, we show that entanglement theory is irreversible under all non-entangling transformations. We furthermore rule out reversibility without significant entanglement expenditure, showing that reversible entanglement transformations require the generation of macroscopically large amounts of entanglement according to certain measures. Our results not only reveal fundamental differences between quantum entanglement transformations and thermodynamic processes, but also showcase a unique property of entanglement which distinguishes it from other known quantum resources.
\end{abstract}

\maketitle

Thermodynamics is perhaps the only physical theory that has withstood the several revolutions that have overturned the scientific paradigm since its inception. It started as a phenomenological theory of heat engines, in which context the first and second laws were first formulated, and it has since evolved to encompass general relativity and quantum mechanics. Arguably, its special status stems from its meta-theoretic character: at its root, thermodynamics is a framework to decide which transformations a closed system can or cannot undergo, independently of the underlying physics.
In accordance with this view, axiomatic approaches have played an important role in the development of thermodynamics, from Clausius's~\cite{Clausius} and Kelvin's~\cite{Kelvin} formulations of the second law to the groundbreaking work of Carath\'eodory~\cite{Caratheodory}, Giles~\cite{GILES}, and recently Lieb and Yngvason~\cite{Lieb-Yngvason}. A key feature and strength of these approaches is that only generic assumptions are made on the physical laws that govern the systems under consideration. Early statements of the second law posit the absolute physical impossibility of realising certain transformations, and already in the minds of Carnot, Clausius, and Kelvin were intended to hold equally well e.g.\ for mechanical and electromagnetic processes~\cite{CARNOT,Kelvin}.
A remarkable success of the axiomatic approach is to arrive at an abstract construction of the entropy as the unique function $S$ that encodes all transformations between comparable equilibrium states: given two such states $X$ and $Y$, $X$ can be transformed into $Y$ adiabatically if and only if $S(X)\leq S(Y)$~\cite{GILES, Lieb-Yngvason}. A logical consequence is that comparable states with the same entropy must be connected by reversible transformations, e.g.\ Carnot cycles~\cite{CARNOT}.

With the advent of quantum information science, the phenomenon of quantum entanglement emerged as a physical resource in its own right~\cite{Horodecki-review}, enabling significant advantages in tasks such as communication~\cite{dense-coding,teleportation,Bennett-error-correction}, computation~\cite{rauss}, and cryptography~\cite{Ekert91}. 
The parallel with thermodynamics prompted a debate concerning the axiomatisation of entanglement theory~\cite{Popescu1997, Vedral1998, Vidal2000,Horodecki2002} and the possible emergence of a single entanglement measure, akin to entropy, which would govern all entanglement transformations and establish the reversibility of this resource~\cite{Popescu1997, Horodecki2000, Horodecki2002, vedral_2002}. 
Although later results suggested that entanglement may often be quite different from thermodynamics, even exhibiting irreversibility in some of the most practically relevant settings~\cite{Vidal-irreversibility,bennett_2000}, hope persisted for an axiomatic framework for entanglement manipulation that would exactly mirror thermodynamic properties. Notably, identifying a unique entropic measure of entanglement was long known to be possible for the special case of 
pure states~\cite{Bennett-distillation,Popescu1997}, and several proposals for general reversible frameworks have been formulated~\cite{Horodecki2002, Martin-exact-PPT, BrandaoPlenio1}. The seminal work of Brand\~{a}o and Plenio~\cite{BrandaoPlenio1, BrandaoPlenio2} then provided 
further evidence in this direction by showing that reversible manipulation may~\cite{berta_gap} be possible when the physical restrictions governing entanglement transformations are suitably relaxed. These findings strengthened the belief that a fully reversible and physically consistent theory of entanglement could be established.

Here, however, we prove a general no-go result which shows that entanglement theory is fundamentally irreversible. Equivalently, we show from first principles that entanglement transformations cannot be governed by a single measure, and that an axiomatic second law of entanglement manipulation cannot be established.

Our sole assumption is that entanglement manipulation by separated parties should be accomplished by means of operations that make the theory fully consistent, namely, that \emph{never transform an unentangled system into an entangled one.}
This can be thought of as the analogue in the entanglement setting of the Kelvin--Planck statement of the second law, which in classical thermodynamics forbids the creation of resources (work) from objects which are not resourceful themselves (a single heat bath)~\cite{Kelvin, PLANCK}.
By imposing only this requirement, we dispense with the need to make any assumptions about the structure of the considered processes: for example, we do not even posit that all intermediate transformations obey the laws of standard quantum mechanics, as previous works implicitly did. Instead, we only look at the initial and final states of the system, and demand that no resource, in this case entanglement, is generated in the overall transformation.
This philosophy, hereafter termed \emph{axiomatic}, is analogous to that followed by the pioneers of thermodynamics --- and more recently by Lieb and Yngvason~\cite{Lieb-Yngvason} --- to establish truly universal versions of the second law. Such a general approach allows us to preclude the reversibility of entanglement under all physically-motivated manipulation protocols.

Importantly, however, our conclusions remain unaffected even when the above assumptions are substantially relaxed. 
It is intuitive to ask whether irreversibility could be avoided with just a small amount of generated entanglement, restoring the hope for reversible transformations in practice. We disprove such a possibility by strengthening our result to show that, with a suitable choice of an entanglement measure such as the entanglement negativity~\cite{negativity}, it is necessary to generate macroscopically large quantities of entanglement in the process --- any smaller amount cannot break the fundamental irreversibility revealed in our work. In particular, as we argue below, macroscopic entanglement generation is the price one 
would have to pay in Brand\~{a}o and Plenio's framework~\cite{BrandaoPlenio1, BrandaoPlenio2} to restore reversibility.

The most surprising aspect of our findings is not only the stark contrast with thermodynamics, but also the fact that 
several other quantum phenomena --- including quantum coherence and purity --- have been shown to be reversible in analogous axiomatic settings~\cite{RT-review}, and no quantum resource has ever been found to be irreversible under similar assumptions.
Our result is thus a first of its kind: it highlights a fundamental difference between entanglement on one side, and thermodynamics and all other quantum resource theories known to date on the other.

The generality of our approach allows for an extension of the results beyond the theory of entanglement of quantum \emph{states}, to the manipulation of quantum \emph{operations}~\cite{no_second_law_channels}. This corresponds to the setting of quantum communication, where the resource in consideration is the ability to reliably transmit quantum systems.
Importantly, thermodynamics allows for reversible manipulation of operations~\cite{Faist2019} as well, so an irreversibility of communication theory is, once again, in heavy contrast with thermodynamics.

\section*{Entanglement manipulation}

The framework of entanglement theory features two separated parties, conventionally named Alice and Bob, who share a large number of identical copies of a bipartite quantum state, and wish to transform them into as many copies as possible of some target state, all while making a vanishingly small error in the asymptotic limit. We introduce this setting in Fig.~\ref{distillation_fig}.
\begin{figure}[h]
\includegraphics{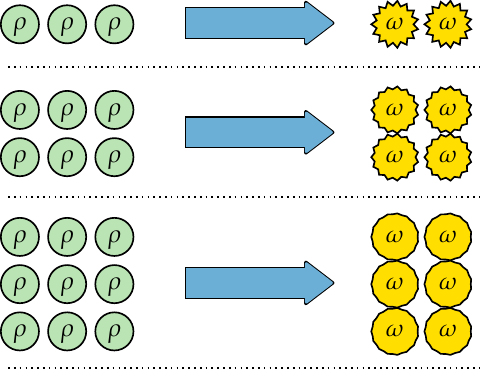}
\caption{\textbf{Asymptotic state conversion.} Here, an entanglement transformation protocol allows us to obtain two copies of a target state $\omega$ for every three copies of an initial state $\rho$, with the transformation error improving as more copies of $\rho$ are provided.\newline
More generally, the initial global state is represented by an $n$-fold tensor product $\rho_{AB}^{\otimes n}$, where $\rho_{AB}$ is a density operator on some tensor product separable Hilbert space $\HH_A\otimes \HH_B$. In contrast with previous works, we do not assume that such Hilbert space is finite-dimensional. By means of some quantum operation $\Lambda: A^n B^n\to A'^m B'^m$ that acts on $n$ copies of $AB$ and outputs $m$ copies of a (different) bipartite system $A'B'$, the initial state will be transformed into $\Lambda \!\left( \rho_{AB}^{\otimes n} \right)$. 
Given a desired target state $\omega_{A'B'}^{\otimes m}$, we thus require that the output state of the protocol be almost indistinguishable from this target state operationally, in the sense that any attempt of discriminating them by means of a quantum measurement should incur an error akin to that of a random guess. By the Helstrom--Holevo theorem~\cite{HELSTROM,Holevo1976}, this property can be captured mathematically by imposing that the distance between the output of the transformation and the 
target state, as quantified by the trace norm $\|\cdot\|_1$, has to vanish. Therefore, by requiring that $\displaystyle \lim_{n\to\infty} \left\| \Lambda \left( \rho_{AB}^{\otimes n} \right) - \omega_{A'B'}^{\otimes m} \right\|_1 = 0$,
we guarantee that the conversion of $\rho_{AB}^{\otimes n}$ into $\omega_{A'B'}^{\otimes m}$ will get increasingly better with more copies of the state $\rho_{AB}$ available, culminating in an asymptotically perfect transformation.
}
\label{distillation_fig}
\end{figure}

The figure of merit in transforming the input quantum state $\rho$ into a target state $\omega$ is the transformation rate $R(\rho\to \omega)$, defined as the maximum ratio $m/n$ that can be achieved in the limit $n\to\infty$ under the condition that $n$ copies of $\rho$ are transformed into $m$ copies of $\omega$ with asymptotically vanishing error. 
Such a rate depends crucially on the set of allowed operations. In keeping with our axiomatic approach, we consider the largest physically consistent class of transformations: namely, those which are incapable of generating entanglement, and can only manipulate entanglement already present in the system.

To formalise this, we introduce the set of separable (or unentangled) states on a bipartite system $AB$, composed of all those states $\sigma_{AB}$ that admit a decomposition of the form~\cite{Werner, Holevo2005}
\begin{equation}
    \sigma_{AB} = \int \ketbra{\psi}_A \otimes \ketbra{\phi}_B\, \mathrm{d}\mu(\psi,\phi)\, ,
    \label{separable_main}
\end{equation}
where $\mu$ is an appropriate probability measure on the set of pairs of local pure states. Our assumption is that any allowed operation $\Lambda$ should transform quantum states on $AB$ into valid quantum states on some output system $A'B'$, in such a way that $\Lambda(\sigma_{AB})$ is separable for all separable states $\sigma_{AB}$. We refer to such operations as \emph{non-entangling} (NE); they are also 
known as \emph{separability preserving}. 
Hereafter, all transformation rates are understood to be with respect to this family of protocols.

We say that two states $\rho,\omega$ can be interconverted reversibly if $R(\rho\to\omega) R(\omega\to\rho)=1$, as visualised in Fig.~\ref{reversibility_fig}. However, to demonstrate or disprove reversibility of entanglement theory as a whole, it is not necessary to check all possible pairs $\rho,\omega$. Instead, we can fix one of the two states, say the second, to be the standard unit of entanglement, the two-qubit maximally entangled 
state (`entanglement bit') $\Phi_2\coloneqq \frac12 \sum_{i,j=1}^2 \ketbraa{ii}{jj}$~\cite{Horodecki-review}. The two quantities $E_d(\rho)\coloneqq R(\rho\to \Phi_2)$ and $E_c(\rho)\coloneqq R(\Phi_2\to \rho)^{-1}$ are referred to as the distillable entanglement and the entanglement cost of $\rho$, respectively. Entanglement theory is then reversible if $E_d(\rho)=E_c(\rho)$ for all states $\rho$.

\begin{figure}[h!]
\includegraphics{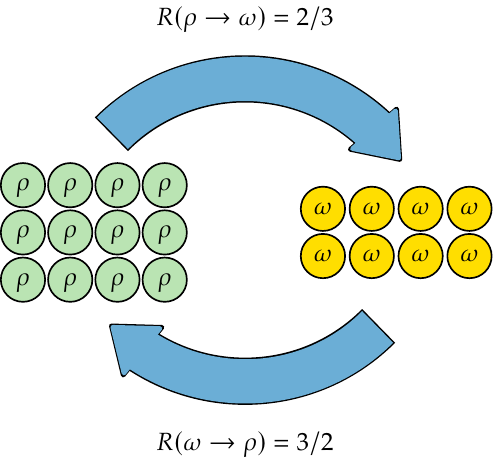}
\caption{
\textbf{Reversible interconversion between two states $\boldsymbol{\rho}$ and $\boldsymbol{\omega}$.} In this example, in the asymptotic many-copy limit it is possible to obtain $2$ copies of $\omega$ from each $3$ copies of $\rho$, and vice versa.}
\label{reversibility_fig}
\end{figure}

\section*{Irreversibility of entanglement manipulation}

By demonstrating an explicit example of a state which cannot be reversibly manipulated, we will show that reversibility of entanglement theory cannot be satisfied in general. We formalise this as follows.

\begin{thm} \label{main_irreversibility_thm}
The theory of entanglement manipulation is irreversible under non-entangling operations. More precisely, for the two-qutrit state $\omega_3 = \frac16 \sum_{i,j=1}^3 \!\left(\ketbraa{ii}{ii} - \ketbraa{ii}{jj}\right)$ it holds that
\begin{equation}
    E_c(\omega_3)=1>E_d(\omega_3)=\log_2(3/2).
\end{equation}
\end{thm}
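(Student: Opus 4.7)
The plan is to exhibit a concrete two-qutrit state $\omega_3$ --- naturally chosen as a mixture of a qubit maximally entangled state embedded in the $3\otimes 3$ Hilbert space and a separable component supported outside the qubit subspace, e.g.\ $\omega_3 = \frac{2}{3}\ketbra{\Phi_2} + \frac{1}{3}\ketbra{22}$ --- and to establish matching bounds showing $E_c(\omega_3)=1$ and $E_d(\omega_3)=\log_2(3/2)$. Four inequalities must be proved: the achievability bounds $E_c(\omega_3)\leq 1$ and $E_d(\omega_3)\geq \log_2(3/2)$, the converse $E_d(\omega_3)\leq \log_2(3/2)$, and the crucial separating inequality $E_c(\omega_3)\geq 1$.

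The three \enquote{easy} bounds I would handle as follows. For $E_c(\omega_3)\leq 1$, I would write down an explicit non-entangling dilution protocol producing arbitrarily good approximations of $\omega_3^{\otimes n}$ from $\Phi_2^{\otimes n}$, mixing in the separable component via shared classical randomness. For $E_d(\omega_3)\geq \log_2(3/2)$, I would construct a distillation scheme --- for instance a local projection onto the qubit subspace $\{\ket{0},\ket{1}\}$, which succeeds on a typical fraction $\approx 2/3$ of the $n$ copies, followed by standard entanglement concentration on the successful block --- and compute the rate through a law-of-large-numbers argument. The converse $E_d(\omega_3)\leq \log_2(3/2)$ should follow by evaluating a standard non-entangling monotone that upper bounds distillable entanglement, such as the regularised relative entropy of entanglement or a Rains-type bound, on $\omega_3$; the explicit block structure of $\omega_3$ should make the computation tractable and yield exactly $\log_2(3/2)$, the value coming from the weight $2/3$ of the entangled component combined with the entropy of the classical flag distinguishing the two branches.

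The heart of the argument is the strict lower bound $E_c(\omega_3)\geq 1$, which separates the entanglement cost from the distillable entanglement and thereby disproves reversibility. Here the key tool is the \emph{tempered logarithmic negativity} announced in the abstract: one defines a regularised variant of the logarithmic negativity, expressible as a semidefinite program, and proves that (i) it is weakly subadditive and asymptotically continuous in the sense needed to serve as a single-letter lower bound on $E_c$, (ii) it is asymptotically monotone under all non-entangling operations --- not merely PPT-preserving maps --- and (iii) its value on $\omega_3$ equals $1$. Step (ii) is the principal obstacle, because the ordinary logarithmic negativity \emph{is not} monotone under arbitrary NE maps: such maps can slightly increase it, so the \enquote{tempering} must be designed to absorb this slack into a correction that vanishes in the asymptotic limit while still leaving a nontrivial single-letter value on $\omega_3$. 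Combining the four inequalities then yields the strict gap $1>\log_2(3/2)$ and the theorem.
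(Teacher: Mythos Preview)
Your proposal has two genuine problems.

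\textbf{The state is wrong.} Your candidate $\omega_3' = \tfrac{2}{3}\ketbra{\Phi_2} + \tfrac{1}{3}\ketbra{22}$ is a \emph{locally flagged} mixture: Alice (or Bob) can perfectly distinguish the branch by projecting onto $\{\ket{0},\ket{1}\}$ versus $\{\ket{2}\}$. For such states the LOCC distillable entanglement and LOCC cost both equal the average, namely $\tfrac{2}{3}$, and since $\mathrm{LOCC}\subseteq\mathrm{NE}$ this forces $E_d(\omega_3')\geq \tfrac{2}{3}$ and $E_c(\omega_3')\leq \tfrac{2}{3}$. Together with $E_d\leq E_c$ this gives $E_d(\omega_3')=E_c(\omega_3')=\tfrac{2}{3}$: your state is \emph{reversible}, and neither of the claimed values $1$ and $\log_2(3/2)$ is correct for it. A direct check confirms this: $\|(\omega_3')^\Gamma\|_1 = 5/3$, so even the ordinary logarithmic negativity is $\log_2(5/3)<1$, and the tempered version can only be smaller. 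The paper's state is instead the maximally correlated state $\omega_3=\tfrac{1}{2}(P_3-\Phi_3)$, which has no such local flag and for which $\ten(\omega_3)=1$ via the explicit witness $X_3=2P_3-3\Phi_3$.

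\textbf{The mechanism you sketch for the tempered negativity is not how it works.} You propose to show that $\ten$ is (i) asymptotically continuous and (ii) asymptotically monotone under NE maps. Neither holds, and the paper explicitly notes that $\tn$ is \emph{not} NE-monotone. The actual argument runs differently: one uses that the \emph{standard robustness} $R^s_{\SEP}$ is NE-monotone and equals $2^k-1$ on $\Phi_2^{\otimes k}$; then the $\epsilon$-lemma, $\tn(\rho'|\rho)\geq(1-2\epsilon)\tn(\rho)$ whenever $\tfrac{1}{2}\|\rho'-\rho\|_1\leq\epsilon$, replaces asymptotic continuity --- the tempering condition $\|X\|_\infty=\Tr X\rho$ is exactly what makes this inequality dimension-free. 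Chaining monotonicity of $R^s_\SEP$, the bound $R^s_\SEP\geq(\tn-1)/2$, the $\epsilon$-lemma, and super-multiplicativity of $\tn$ yields $E_c(\rho)\geq\ten(\rho)$ for every state. The ``absorbing the slack'' picture you describe is not what the tempering does.
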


To show this result, we introduce a general lower bound on the entanglement cost $E_c$ that can be efficiently computed as a semi-definite program. Our approach relies on a new entanglement monotone which we call the \emph{tempered negativity}, defined through a suitable modification of a well-known entanglement measure called the negativity~\cite{negativity}. The situation described by Theorem~\ref{main_irreversibility_thm} is depicted in Fig.~\ref{irreversibility_fig}. The full proof of the result is sketched in the Methods and described in detail in Supplementary Notes \ref{general}--\ref{main_results_note}.

\begin{figure}[h!]
\includegraphics{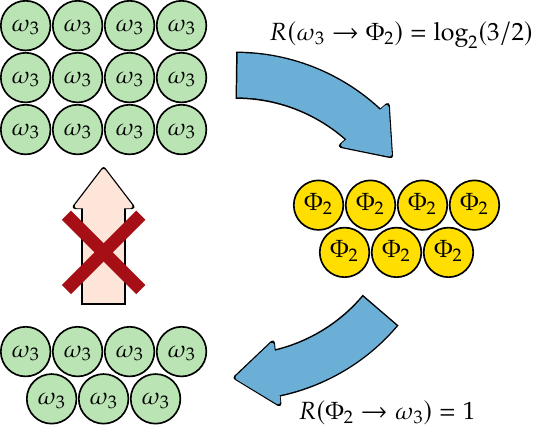}
\caption{\textbf{Irreversibility of entanglement manipulation.} Our main result in Theorem~\ref{main_irreversibility_thm} shows that the two-qutrit state $\omega_3$ cannot be reversibly manipulated under non-entangling transformations: 
we can extract only $\log_2(3/2)\approx 7/12$ entanglement bits per copy of $\omega_3$ asymptotically, but one full entanglement bit per copy is needed to generate it.\\    
Theorem~\ref{main_irreversibility_thm} can be strengthened and extended in several ways, which we overview in the Methods section and expound on in Supplementary Notes \ref{generation_note}--\ref{further_considerations_note}:\\
(1) We show, in particular, that irreversibility persists beyond non-entangling transformations: the 
conclusion of Theorem~\ref{main_irreversibility_thm} holds even when we allow for the generation of small amounts of entanglement (sub-exponential in the number of copies of the state), as quantified by several choices of entanglement measures such as the negativity or the standard robustness of entanglement~\cite{VidalTarrach}. What this means is that, in order to reversibly manipulate the state $\omega_3$, one would need to generate macroscopic (exponential) amounts of entanglement.\\ 
(2) We furthermore show that the irreversibility cannot be alleviated by allowing for a small non-vanishing error in the asymptotic transformation --- a property known as pretty strong converse~\cite{pretty-strong}.\\ 
(3) Finally, Theorem~\ref{main_irreversibility_thm} can also be extended to the theory of point-to-point quantum communication, exploiting the connections between entanglement manipulation and communication schemes~\cite{Bennett-error-correction, Berta2013}. This is considered in detail in a follow-up work~\cite{no_second_law_channels}.
\\
These extensions further solidify the fundamental character of the irreversibility uncovered in our work, showing that it affects both quantum states and channels, and that there are no ways to avoid it without incurring very large transformation errors or generating significant amounts of entanglement.}
\label{irreversibility_fig}
\end{figure}

\section*{Why non-entangling transformations?}

The intention behind our general, axiomatic framework is to prove irreversibility in as broad a setting as possible. The key strength of this approach is that irreversibility under the class of non-entangling transformations 
enforces irreversibility under any smaller class of processes, which includes the vast majority of different types of operations studied in the manipulation of entanglement~\cite{Horodecki-review, RT-review}; furthermore,  our result shows that even enlarging the previously considered classes of processes cannot enable reversibility, as long as the resulting transformations are non-entangling.

To better understand the need for and the consequences of such a general approach, let us compare our framework to another commonly employed model, that where entanglement is manipulated by means of local operations and classical communication (LOCC). In this context, irreversibility was first found by Vidal and Cirac~\cite{Vidal-irreversibility}. Albeit historically important, the LOCC model is built with a `bottom-up' mindset, and rests on the assumption that the two parties can only employ local quantum resources at all stages of the protocol. Already in the early days of quantum information, it was realised that relaxing such restrictions --- e.g.\ by supplying some additional resources --- can lead to improvements in the capability to manipulate entanglement~\cite{Horodecki2002, Martin-exact-PPT}. Although attempts to construct a reversible theory of entanglement along these lines have been unsuccessful~\cite{Horodecki2002, irreversibility-PPT}, the assumptions imposed therein have left open the possibility of the existence of a larger class of operations which could remedy the irreversibility.

The limitations of such bottom-up approaches are best illustrated with a thermodynamical analogy: in this context, they would lead to operational statements of the second law concerning, say, the impossibility of realising certain transformations by means of mechanical processes, but would not tell us much about electrical or nuclear processes. Indeed, since we have no guarantee that the ultimate theory of Nature will be quantum mechanical, it is possible to envision a situation where, for instance, the exploitation of some exotic physical phenomena by one of the parties could enhance entanglement transformations. To construct a theory as powerful as thermodynamics, we followed instead a `top-down', axiomatic approach, which --- as discussed above --- imposes only the weakest possible requirement on the allowed transformations, thereby ruling out reversibility under any physical processes.

The non-entangling operations considered here are examples of `resource non-generating operations', commonly employed in the study of many other quantum resource theories~\cite{RT-review}. In all of these other contexts, such operations have always been shown to lead to the \emph{reversibility} of the given theory. For instance, Gibbs-preserving maps~\cite{Brandao-thermo, horodecki_2013} in quantum thermodynamics are a broad, axiomatic formulation of the constraints governing thermodynamic transformations of quantum systems analogous to non-entangling operations. Under such operations, the theory of thermodynamics is known to be reversible~\cite{Faist2019}. An equivalent result has also been shown in the resource theory of quantum coherence~\cite{Winter2016, one-shot-c-dilution, Chitambar-reversible}, suggesting that reversibility could be a generic feature in the manipulation of different resources under all resource non-generating transformations. Our result, however, shows entanglement theory to be fundamentally different from thermodynamics and from all other known quantum resources: not even the vast class of all non-entangling maps can enable reversible entanglement manipulation. What this means is that, under the exact same assumptions that suffice to facilitate the reversibility of other quantum resources, entanglement remains irreversible.

\section*{Macroscopic entanglement generation is necessary for reversibility}

A similar axiomatic mindset to the one employed in our work has already proved to be useful. Notably, it led Brand\~{a}o and Plenio~\cite{BrandaoPlenio1, BrandaoPlenio2} to construct a theory of entanglement which was claimed to be fully reversible. Recently, an issue that casts some doubts on the validity of their mathematical proof has transpired~\cite{berta_gap}. In spite of this, 
it remains a possibility that the theory of entanglement proposed by Brand\~{a}o and Plenio may actually be reversible~\cite{berta_gap}, so let us discuss it here in detail. 
This theory features so-called asymptotically non-entangling operations, defined as those that may generate some limited amounts of entanglement, provided that any such supplemented resources are vanishingly small in the asymptotic limit.
This, on the surface, appears consistent with how fluctuations are treated in the theory of thermodynamics. However, the key question to ask here is: \emph{according to what measure} should one enforce the generated entanglement to be small? Brand\~{a}o and Plenio choose to quantify entanglement with the generalised robustness~\cite{VidalTarrach}. As we argue below, this a priori arbitrary choice turns out to be crucial to decide between reversibility and irreversibility. That is, there are reasonable entanglement measures using which reversibility only becomes possible at the price of exponential entanglement generation. In fact, even a minor change of the quantifier from the generalised robustness to the closely related standard robustness of entanglement~\cite{VidalTarrach} 
makes reversibility impossible. This entails that 
Brand\~{a}o and Plenio's operations, despite generating vanishingly little entanglement with respect to the generalised robustness, create macroscopically large amounts of it as quantified by other entanglement measures. We show, in fact, that this is not simply an issue with the particular framework of Brand\~{a}o and Plenio~\cite{BrandaoPlenio1, BrandaoPlenio2}, but rather a fundamental property of entanglement theory: \emph{any} attempt to achieve reversibility must necessarily lead to macroscopic entanglement generation.

To make this precise, consider a modified version of asymptotic entanglement manipulation. As previously, given $n$ copies of an initial state $\rho_{AB}$, we want to transform them into $m$ copies of a target state $\omega_{A'B'}$ with asymptotically vanishing error. To define the set of allowed transformations, based on Brand\~{a}o and Plenio's approach~\cite{BrandaoPlenio2}, we fix an entanglement measure $M$ and consider all those transformations $\Lambda_n$ on $n$ copies of the system $AB$ that are $(M,\delta_n)$-approximately non-entangling, in the sense that $M\left(\Lambda_n(\sigma_{A^nB^n})\right)\leq \delta_n$ for all separable states $\sigma_{A^nB^n}$, where the numbers $\delta_n$ quantify the magnitude of the entanglement fluctuations at each step of the process. The maximum ratio $m/n$ that can be achieved in the limit $n\to\infty$ determines the transformation rate under these operations. The modified notion of distillable entanglement, denoted $E_{d,\,\mathrm{NE}^M_{(\delta_n)_n}}\!\!(\rho)$, is then defined by choosing the maximally entangled bit $\Phi_2$ as the target state, and analogously the modified entanglement cost $E_{c,\,\mathrm{NE}^M_{(\delta_n)_n}}\!\!(\rho)$ corresponds to the transformation from $\Phi_2$ to a given state $\rho$. By choosing a suitable measure of entanglement and setting $\delta_n = 0$ for all $n$, we recover our original definition of non-entangling transformations NE.

The problem of choosing what measure $M$ to employ has no straightforward solution, as it is well known that there are many asymptotically inequivalent ways to quantify entanglement~\cite{Horodecki-review}; hence, constraining one such measure cannot guarantee that the supplemented entanglement is truly small according to all measures. From a methodological perspective, this arbitrariness is problematic: 
resource quantifiers should be endowed with an operational interpretation by means of a task defined in purely natural terms; presupposing a particular measure and using it to define the task in the first place makes the framework somewhat contrived, and does not take into consideration what happens when a different monotone is used.

Indeed, the choice of $M$ turns out to be pivotal. Brand\~{a}o and Plenio's main result~\cite{BrandaoPlenio1, BrandaoPlenio2} 
claims~\cite{berta_gap} that, with the specific choice of $M$ being the generalised robustness of entanglement~\cite{VidalTarrach}, entanglement can be manipulated reversibly even if we take $\delta_n \to 0$ as $n \to \infty$. In stark contrast, we now show that a completely opposite conclusion is reached when $M$ is taken to be either the standard robustness~\cite{VidalTarrach} or the entanglement negativity~\cite{negativity}.

\medskip
{
\begin{thm} \label{main_irreversibility_thm_2}
The theory of entanglement manipulation is irreversible under operations that generate sub-exponential amounts of entanglement according to the negativity $N$ or the standard robustness $R^s_\SEP$. Specifically, if $M=N$ or $M=R^s_\SEP$, then for any sequence $(\delta_n)_n$ such that $\delta_n = 2^{o(n)}$ it holds that
\begin{equation}
E_{c\!,\,\mathrm{NE}^M_{(\delta_n)_n}}\!\! (\omega_3)=1>E_{d\!,\,\mathrm{NE}^M_{(\delta_n)_n}}\!\!(\omega_3)=\log_2(3/2)\, .
\end{equation}
\end{thm}}

Comparing this to Brand\~{a}o and Plenio's 
conclusion, we can observe that the operations employed there may only hope to achieve reversibility by generating \emph{exponential} amounts of entanglement, as measured by either the negativity or the standard robustness.

We stress that there is no \emph{a priori} operationally justified reason to prefer the generalised robustness over the other monotones. If anything, the most operationally meaningful monotones to select here would be those defined directly in terms of practical tasks,
such as the entanglement cost $E_c$ itself; however, following this route actually trivialises the theory~\cite{BrandaoPlenio1, BrandaoPlenio2}, entailing that different choices of monotones need to be employed to give meaningful results.
Even between the generalised robustness (as employed by Brand\~ao and Plenio) and the standard robustness $R^s_\SEP$, it is actually the latter that admits a clearer operational interpretation in this context --- $R^s_\SEP$ quantifies exactly the entanglement cost of a state in the one-shot setting~\cite{brandao_2011}, when asymptotic transformations are not allowed. These ambiguities 
in the choice of a `good' measure, and the vastly disparate physical consequences of the different choices, put the physicality of the reversibility result 
claimed by Brand\~ao and Plenio~\cite{BrandaoPlenio1, BrandaoPlenio2} into question: why should one such framework be considered more physical than the other, irreversible ones?

Importantly, since the core concept of separability is independent of the particular choice of a measure, our axiomatic assumption of strict no-entanglement-generation bypasses the above problems completely: it removes the dependence on any entanglement measure and ensures that the physical constraints are enforced at all scales, therefore yielding an unambiguously physical model of general entanglement transformations. However, should such a requirement be considered too strict, our Theorem~\ref{main_irreversibility_thm_2} shows that irreversibility of entanglement is robust to fluctuations in the generated resources.

Let us also point out that the assumptions of Brand\~{a}o and Plenio (and of Theorem~\ref{main_irreversibility_thm_2}) are in fact more permissive than those typically employed in quantum thermodynamic frameworks~\cite{horodecki_2013, weilenmann_2016, thermo-review}, where one usually allows fluctuations in the sense of the consumption of small ancillary resources, but not fluctuations in the very physical laws governing the process. 
In a thermodynamic sense, entanglement transformations under approximately non-entangling maps could be compared to the manipulation of systems under transformations that do not conserve the overall energy --- a relaxation which would go against standard axiomatic assumptions~\cite{weilenmann_2016,thermo-review}. Importantly, no such `unphysical' fluctuations are necessary in order to establish the reversibility of thermodynamics~\cite{horodecki_2013,Faist2019} or other known quantum resources~\cite{RT-review}. We invite the interested reader to Supplementary Note~\ref{intermezzo_note}, where we discuss different notions of resource fluctuations in more detail.

\section*{Discussion}

Our results close a major open problem in the characterisation of entanglement~\cite{OpenProblemArxiv} by showing that a reversible theory of this resource cannot be established under any set of `free' transformations which do not generate entanglement. 
Indeed, from our characterisation we can conclude not only that entanglement generation 
is necessary for reversibility, but also that macroscopically large amounts of it must be supplemented. 
This shows that the framework proposed by Brand\~ao and Plenio~\cite{BrandaoPlenio1, BrandaoPlenio2} is effectively the smallest possible one that could allow reversibility, although only at the cost of significant entanglement expenditure.

That the seemingly small revision of the underlying technical assumptions we advocated by enforcing strict entanglement non-generation can have such far-reaching consequences --- namely, precluding reversibility --- is truly unexpected. 
In fact, as remarked before, the opposite of this phenomenon has been observed in a number of fundamentally important quantum resource theories, where the set of all resource--non-generating operations suffices to enable reversibility.
It is precisely the necessity to generate entanglement in order to reversibly manipulate it that distinguishes the theory of entanglement from thermodynamics and other quantum resources.
This fundamental difference contrasts not only with the previously established information-theoretic parallels, but also with the many links that have emerged between entanglement and thermodynamics in broader contexts such as many-body and relativistic physics~\cite{eisert_2010,eisert_2015,popescu_2006,harlow_2016}.
It then becomes an enthralling foundational problem to understand what makes entanglement theory special in this respect, and where its fundamental irreversibility may come from. Additionally, the axiomatic theory of entanglement manipulation delineated here leaves several outstanding follow-up questions: for instance, it would be very interesting to understand whether a closed expression for the associated entanglement cost can be established, and whether the phenomenon of entanglement catalysis~\cite{Plenio-catalysis,Horodecki-review} can play a role in this setting.

We remark that the recently identified gap in Brand\~ao and Plenio's proof~\cite{berta_gap}, which came to light after this work was completed, does not affect our results or conclusions in any way, since the methods that we use are independent of Refs.~\cite{BrandaoPlenio1, BrandaoPlenio2}. 
Our main finding --- that of entanglement irreversibility under non-entangling operations --- is complementary to the result of Brand\~ao and Plenio~\cite{BrandaoPlenio1, BrandaoPlenio2}, as we discussed above and in Supplementary Notes~\ref{generation_note}--\ref{intermezzo_note}. This recent development does, however, rekindle the question of whether entanglement \emph{can} be reversibly manipulated whatsoever~\cite{OpenProblemArxiv}, even in a more permissive framework such as Brand\~ao and Plenio's.

In conclusion, we have highlighted a fundamental difference between the theory of entanglement manipulation and thermodynamics, proving that no microscopically consistent second law can be established for the former. At its heart, our work reveals an inescapable restriction precipitated by the laws of quantum physics --- one that has no analogue in classical theories, and was previously unknown even within the realm of quantum theory.

\section*{Methods}

In the following we sketch the main ideas needed to arrive at a proof of our main result, Theorem~\ref{main_irreversibility_thm}, and extensions thereof.

\subsection{Asymptotic transformation rates under non-entangling operations}

We start by defining rigorously the fundamental quantities we are dealing with. Given two separable Hilbert spaces $\HH$ and $\HH'$ and the associated spaces of trace class operators $\T(\HH)$ and $\T(\HH')$, a linear map $\Lambda:\T(\HH)\to \T(\HH')$ is said to be \emph{positive and trace preserving} if it transforms density operators on $\HH$ into density operators on $\HH'$. As is well known, physically realisable quantum operations need to be \emph{completely} positive and not merely positive~\cite{NC}. While we could enforce this additional assumption without affecting any of our results, we will only need to assume the positivity of the transformations, establishing limitations also for processes more general than quantum channels.

Since we are dealing with entanglement, we need to make both $\HH$ and $\HH'$ bipartite systems. We shall therefore assume that $\HH = \HH_A\otimes \HH_B$ and $\HH'=\HH_{A'}\otimes \HH_{B'}$ have a tensor product structure. \emph{Separable states} on $AB$ are defined as those that admit a decomposition as in~\eqref{separable_main}. A positive trace-preserving operation $\Lambda: \T(\HH_A\otimes \HH_B)\to \T(\HH_{A'}\otimes \HH_{B'})$, which we shall denote compactly as $\Lambda_{AB\to A'B'}$, is said to be \emph{non-entangling} or \emph{separability-preserving} if it transforms separable states on $AB$ into separable states on $A'B'$. We will denote the set of non-entangling operations from $AB$ to $A'B'$ as $\sepp(AB\to A'B')$.

The central questions in the theory of entanglement manipulation are the following. Given a bipartite state $\rho_{AB}$ and a set of quantum operations, how much entanglement can be extracted from $\rho_{AB}$? How much entanglement does it cost to generate $\rho_{AB}$ in the first place? The ultimate limitations to these two processes, called entanglement distillation and entanglement dilution, respectively, are well captured by looking at the asymptotic limit of many copies. As remarked above, this procedure is analogous to the thermodynamic limit. The resulting quantities are called the \emph{distillable entanglement} and the \emph{entanglement cost}, respectively. We already discussed their intuitive operational definitions, so we now give their mathematical forms:

\begin{align}
    E_d(\rho_{AB}) &\coloneqq \sup\left\{R>0:\, \lim_{n\to\infty} \inf_{\Lambda_n \in \sepp\left(A^nB^n\to A_0^{\ceil{Rn}}B_0^{\ceil{Rn}}\right)} \left\| \Lambda_n\left( \rho_{AB}^{\otimes n} \right) - \Phi_2^{\otimes \ceil{Rn}} \right\|_1 = 0 \right\} , \label{distillable_main} \\[.5ex]
    E_c(\rho_{AB}) &\coloneqq \inf\left\{R>0:\, \lim_{n\to\infty} \inf_{\Lambda_n \in \sepp\left(A_0^{\floor{Rn}}B_0^{\floor{Rn}}\to A^n B^n\right)} \left\| \Lambda_n\left( \Phi_2^{\otimes \floor{Rn}} \right) - \rho_{AB}^{\otimes n} \right\|_1 = 0 \right\} . \label{cost_main}
\end{align}
Here, $A^nB^n$ is the system composed by $n$ copies of $AB$, $A_0B_0$ denotes a fixed two-qubit quantum system, and $\Phi_2=\ketbra{\Phi_2}$, with $\ket{\Phi_2} = \frac{1}{\sqrt2}\left(\ket{00} + \ket{11}\right)$, is the maximally entangled state of $A_0B_0$, also called the `entanglement bit'.

One question that could be raised at this point is: is our definition of transformation rates not too restrictive? Such a reservation could be motivated by the fact that, e.g., in the resource theory of quantum thermodynamics, employing only energy-conserving unitary transformations is known to be insufficient to achieve general transformations~\cite{Brandao-thermo}; to avoid this issue, additional resources are provided in the form of ancillary systems composed of a sublinear number of qubits, allowing one to circumvent the restrictions of energy conservation without affecting the underlying physics~\cite{sparaciari_2017, faist_2019-1}. Such an approach can be adapted to more general resources~\cite{sparaciari_2020}. In our setting, however, this is already implicitly included in the definition of $E_d$ and $E_c$, since such ancillary systems can be absorbed into the asymptotic transformation rates. That is, we could have equivalently defined
\begin{align}
E_{d}(\rho_{AB}) &\coloneqq \sup\left\{R>0:\, \lim_{n\to\infty} \inf_{\Lambda_n \in \sepp\left(A^nB^nA'_nB'_n\to A_0^{\ceil{Rn}}B_0^{\ceil{Rn}}\right)} \frac12 \left\| \Lambda_n\left( \rho_{AB}^{\otimes n} \otimes \tau_n \right) - \Phi_2^{\otimes \ceil{Rn}} \right\|_1 =0 \right\} ,  \\[.5ex]
E_{c}(\rho_{AB}) &\coloneqq \inf\left\{R>0:\, \lim_{n\to\infty} \inf_{\Lambda_n \in \sepp\left(A_0^{\floor{Rn}}B_0^{\floor{Rn}}A'_nB'_n\to A^n B^n\right)} \frac12 \left\| \Lambda_n\left( \Phi_2^{\otimes \floor{Rn}} \otimes \tau_n \right) - \rho_{AB}^{\otimes n} \right\|_1 = 0\right\} ,
\end{align}
where $(\tau_n)_n$ are arbitrary (possibly entangled) systems such that $\dim \tau_n = 2^{o(n)}$. The rates are not affected by the addition of such an ancilla, since its sub-exponential size means that any contributions to the rate due to $\tau_n$ will vanish asymptotically. This is addressed in more detail in Supplementary Note~\ref{intermezzo_note}.

\subsection{The main idea: tempered negativity}

Let us commence by looking at a well-known entanglement measure called the \emph{logarithmic negativity}~\cite{negativity, plenioprl}. For a bipartite state $\rho_{AB}$, this is formally defined by
\begin{equation}
    E_N(\rho_{AB}) \coloneqq \log_2 \left\|\rho_{AB}^\Gamma\right\|_1\, .
    \label{logarithmic_negativity_main}
\end{equation}
Here, $\Gamma$ denotes the \emph{partial transpose}, i.e.\ the linear map $\Gamma: \T(\HH_A \otimes \HH_B)\to \B(\HH_A\otimes \HH_B)$, where $\B(\HH_A\otimes \HH_B)$ is the space of bounded operators on $\HH_A\otimes \HH_B$, that acts as $\Gamma(X_A\otimes Y_B) = (X_A\otimes Y_B)^\Gamma \coloneqq X_A \otimes Y_B^\intercal$, with $^\intercal$ denoting the transposition with respect to a fixed basis, and is extended by linearity and continuity to the whole $\T(\HH_A \otimes \HH_B)$~\cite{PeresPPT}. It is understood that $E_N(\rho_{AB})=\infty$ if $\rho_{AB}^\Gamma$ is not of trace class. Remarkably, the logarithmic negativity does not depend on the basis chosen for the transposition. Also, since $\sigma^\Gamma_{AB}$ is a valid state for any separable $\sigma_{AB}$~\cite{PeresPPT}, this measure vanishes on separable states, i.e.
\begin{equation}
    \text{$\sigma_{AB}$ is separable}\quad \Longrightarrow \quad \left\|\sigma_{AB}^\Gamma\right\|_1 = 1 \quad \Longrightarrow \quad E_N(\sigma_{AB}) =0\, .
    \label{logneg_vanishes_separable}
\end{equation}

Given a non-negative real-valued function on bipartite states $E$ which we think of as an `entanglement measure', when can it be used to give bounds on the operationally relevant quantities $E_d$ and $E_c$? It is often claimed that in order for this to be the case, $E$ should obey, among other things, a particular technical condition known as \emph{asymptotic continuity}. Since a precise technical definition of this term is not crucial for this discussion, it suffices to say that it amounts to a strong form of uniform continuity, in which the approximation error does not grow too large in the dimension of the underlying space. 
While asymptotic continuity is certainly a critical requirement in general~\cite{Horodecki2000, Donald2002}, it is not always indispensable~\cite{Horodecki2000, negativity, wang_2016, GrandTour, nonclassicality, regula_2020-2}. The starting point of our approach is the elementary observation that the logarithmic negativity $E_N$, for instance, is \emph{not} asymptotically continuous, yet it yields an upper bound on the distillable entanglement~\cite{negativity}. The former claim can be easily understood by casting~\eqref{logarithmic_negativity_main} into the equivalent form
\begin{equation}
    E_N(\rho_{AB}) = \log_2 \sup\left\{ \Tr X\rho:\, \left\|X^\Gamma\right\|_\infty\leq 1 \right\} ,
    \label{logarithmic_negativity_variational}
\end{equation}
where $\|Z\|_\infty \coloneqq \sup_{\ket{\psi}}\left\| Z\ket{\psi}\right\|$ is the operator norm of $Z$, and the supremum is taken over all normalised state vectors $\ket{\psi}$. Since the trace norm and the operator norm are dual to each other, the continuity of $E_N$ with respect to the trace norm is governed by the operator norm of $X$ in the optimisation~\eqref{logarithmic_negativity_variational}. However, while the operator norm of $X^\Gamma$ is at most $1$, that of $X$ can only be bounded as $\|X\|_\infty \leq d \left\|X^\Gamma\right\|_\infty \leq d$, where $d \coloneqq \min\left\{ \dim(\HH_A), \dim(\HH_B)\right\}$ is the minimum of the local dimensions. This bound is generally tight; since $d$ grows exponentially in the number of copies, it implies that $E_N$ is not asymptotically continuous.

But then why is it that $E_N$ still gives an upper bound on the distillable entanglement? A careful examination of the proof by Vidal and Werner~\cite{negativity} (see the discussion surrounding Eq.~(46) there) reveals that this is only possible because the exponentially large number $d$ actually matches the value taken by the supremum in~\eqref{logarithmic_negativity_variational} on the maximally entangled state, that is, on the target state of the distillation protocol. Let us try to adapt this capital observation to our needs. Since we want to employ a negativity-like measure to lower bound the entanglement cost instead of upper bounding the distillable entanglement, we need a substantial modification.

The above discussion inspired our main idea: let us tweak the variational program in~\eqref{logarithmic_negativity_variational} by imposing that the operator norm of $X$ be controlled by the final value of the program itself. The logic of this reasoning may seem circular at first sight, but we will see that it is not so. For two bipartite states $\rho_{AB}, \omega_{AB}$, we define the \emph{tempered negativity} by
\begin{align}
\tn(\rho | \omega) &\coloneqq \sup\left\{ \Tr X\rho:\, \left\|X^\Gamma\right\|_\infty \leq 1,\ \|X\|_\infty=\Tr X\omega \right\} , \label{tn_main} \\
\tn(\rho) &\coloneqq \tn(\rho|\rho)\, , \label{stn_main}
\end{align}
and the corresponding \emph{tempered logarithmic negativity} by
\begin{equation}
\ten(\rho) \coloneqq \log_2 \tn(\rho)\, .
\label{ten_main}
\end{equation}
This definition encapsulates the above idea of tying together the value of the function and its continuity properties, and indeed will turn out to yield the desired lower bound on the entanglement cost.
Note the critical fact that in the definition of $\tn(\rho)$ the operator norm of $X$ is given precisely by the value of $\tn(\rho)$ itself.

\subsection{Properties of the tempered negativity}

The tempered negativity $\tn(\rho|\omega)$ given by~\eqref{tn_main} can be computed as a semi-definite program for any given pair of states $\rho$ and $\omega$, which means that it can be evaluated efficiently (in time polynomial in the dimension~\cite{vandenberghe_1996}). Moreover, it obeys three fundamental properties, the proofs of which can be found in Supplementary Note~\ref{tempered}. In what follows, the states $\rho_{AB},\omega_{AB}$ are entirely arbitrary.
\begin{enumerate}[(a)]
    \item Lower bound on negativity: $\left\|\rho^\Gamma\right\|_1 \geq \tn(\rho|\omega)$, and in fact $\left\|\rho^\Gamma\right\|_1 = \sup_{\omega'} \tn(\rho|\omega')$.
    \item Super-additivity:
    \begin{equation}
    \tn(\rho^{\otimes n}) \geq \tn(\rho)^n\, ,\qquad \ten(\rho^{\otimes n}) \geq n\, \ten(\rho)\, .
    \end{equation}
    \item The `$\epsilon$-lemma':
    \begin{equation}
        \frac12 \left\|\rho - \omega\right\|_1\leq \epsilon\quad \Longrightarrow\quad \tn(\rho|\omega) \geq (1-2\epsilon)\, \tn(\omega)\, .
    \end{equation}
\end{enumerate}
The tempered negativity, just like the standard (logarithmic) negativity, is monotonic under several sets of quantum operations commonly employed in entanglement theory, such as LOCC or positive partial transpose operations~\cite{Rains2001}, but \emph{not} under non-entangling operations. Quite remarkably, it still plays a key role in our approach.

\subsection{Sketch of the proof of Theorem~\ref{main_irreversibility_thm}}

To prove Theorem~\ref{main_irreversibility_thm}, we start by establishing the general lower bound
\begin{equation}
    E_c(\rho_{AB}) \geq \ten(\rho_{AB})
    \label{tn_lower_bounds_cost}
\end{equation}
on the entanglement cost of \emph{any} state $\rho_{AB}$ under non-entangling operations. To show~\eqref{tn_lower_bounds_cost}, let $R>0$ be any number belonging to the set in the definition of $E_c$~\eqref{cost} --- in quantum information, this is known as an achievable rate for entanglement dilution. By definition, there exists a sequence of non-entangling operations $\Lambda_n\in \sepp\left(A_0^{\floor{Rn}} B_0^{\floor{Rn}}\to A^nB^n\right)$ such that $\epsilon_n\coloneqq \frac12 \left\| \Lambda_n\left(\Phi_{2^{\floor{Rn}}}\right) - \rho^{\otimes n}\right\|_1\tendsn{} 0$, where we used the notation $\Phi_d\coloneqq \frac1d \sum_{i,j=1}^d \ketbraa{ii}{jj}$ for a two-qudit maximally entangled state, and observed that $\Phi_2^{\otimes k} = \Phi_{2^k}$.

A key step in our derivation is to write $\Phi_d$ --- which is, naturally, a highly entangled state --- as the difference of two multiples of separable states. (In fact, this procedure leads to the construction of a related entanglement monotone called the standard robustness of entanglement~\cite{VidalTarrach}; we consider it in detail in the Supplementary Information.) It has long been known that this can be done by setting
\begin{equation}
    \sigma_+ \coloneqq \frac{\id + d\Phi_d}{d(d+1)}\, ,\qquad \sigma_- \coloneqq \frac{\id-\Phi_d}{d^2-1}\, ,\qquad \Phi_d = d \sigma_+ - (d-1) \sigma_-\, .
\end{equation}
where $\id$ stands for the identity on the two-qudit, $d^2$-dimensional Hilbert space. Crucially, $\sigma_\pm$ are both separable~\cite{Horodecki1999}. Applying a non-entangling operation $\Lambda$ acting on a two-qudit system yields $\Lambda(\Phi_d) = d\Lambda(\sigma_+) - (d-1)\Lambda(\sigma_-)$. Since $\Lambda(\sigma_\pm)$ are again separable, we can then employ the observation that $\left\|\sigma_{AB}^\Gamma\right\|_1=1$ for separable states (recall~\eqref{logneg_vanishes_separable})
together with the triangle inequality for the trace norm, and conclude that
\begin{equation}
    \left\| \Lambda(\Phi_d)^\Gamma \right\|_1 \leq 2d-1\, .
    \label{negativity_transformed_maxent}
\end{equation}

We are now ready to present our main argument, expressed by the chain of inequalities
\begin{align*}
    2^{\floor{Rn}+1}-1\ &\textgeq{\eqref{negativity_transformed_maxent}}\ \left\| \Lambda_n\left(\Phi_{2^{\floor{Rn}}}\right)^\Gamma \right\|_1 \\
    &\textgeq{(a)}\ \tn\left( \Lambda_n \left(\Phi_{2^{\floor{Rn}}}\right) \Big| \rho^{\otimes n} \right) \\
    &\textgeq{(c)}\ \left( 1-2\epsilon_n\right) \tn \left( \rho^{\otimes n}\right) \\
    &\textgeq{(b)}\ \left( 1-2\epsilon_n\right) \tn \left( \rho\right)^n ,
\end{align*}
derived using~\eqref{negativity_transformed_maxent} together with the above properties (a)--(c) of the tempered negativity. Evaluating the logarithm of both sides, diving by $n$, and then taking the limit $n\to\infty$ gives $R\geq \ten(\rho)$. A minimisation over the achievable rates $R>0$ then yields~\eqref{tn_lower_bounds_cost}, according to the definition of $E_c$~\eqref{cost}.

We now apply~\eqref{tn_lower_bounds_cost} to the two-qutrit state
\begin{equation}
    \omega_3 = \frac12 P_3 - \frac12 \Phi_3 = \frac16 \sum_{i,j=1}^3 \!\left(\ketbraa{ii}{ii} - \ketbraa{ii}{jj}\right) ,
\end{equation}
where $P_3\coloneqq \sum_{i=1}^3 \ketbra{ii}$. To compute its tempered logarithmic negativity, we construct an ansatz for the optimisation in the definition~\eqref{tn_main} of $N_\tau$ by setting $X_3\coloneqq 2P_3 - 3\Phi_3$. Since it is straightforward to verify that $\left\|X_3^\Gamma\right\|_\infty = 1$ and $\left\|X_3\right\|_\infty = 2 = \Tr X_3 \omega_3$, this yields
\begin{equation}
    E_c(\omega_3) \geq \ten(\omega_3) \geq 1\, .
    \label{cost_omega3}
\end{equation}
In Supplementary Note~\ref{main_results_note}, we show that the above inequalities are in fact all equalities.

It remains to upper bound the distillable entanglement of $\omega_3$. This can be done by estimating its relative entropy of entanglement~\cite{Vedral1997}, which quantifies its distance from the set of separable states as measured by the quantum relative entropy~\cite{Umegaki1962}. Simply taking the separable state $P_3/3$ as an ansatz shows that
\begin{equation}
    E_d(\omega_3)\leq \log_2 \frac32\, ,
    \label{distillable_omega3}
\end{equation}
and once again this estimate turns out to be tight. Combining~\eqref{cost_omega3} and~\eqref{distillable_omega3} demonstrates a gap between $E_d$ and $E_c$, thus proving Theorem~\ref{main_irreversibility_thm} on the irreversibility of entanglement theory under non-entangling operations. 

\subsection{Consequences and further considerations}

Our result explicitly show that there cannot exist a single quantity that governs asymptotic entanglement transformations, thus ruling out a `second law' of entanglement theory under non-entangling operations. Specifically, it is already known that, were such a quantity to exist, it would have to equal the regularised relative entropy of entanglement $E_{r,\SEP}^\infty$~\cite{Donald2002,Horodecki2002}. But then consider the fact that $E_{r,\SEP}^\infty\big(\Phi_2^{\otimes 2}\big) = 2$ while, as we show in Supplementary Note~\ref{main_results_note}, $E_{r,\SEP}^\infty\big(\omega_3^{\otimes 3}\big) = 3 \log_2 \frac32 \approx 1.75$. Thus, if the second law held, then from two copies of $\Phi_2$ one should be able to obtain three copies of $\omega_3$. But Theorem~\ref{main_irreversibility_thm} explicitly shows that only two copies of $\omega_3$ can be obtained from two copies of $\Phi_2$.

An interesting aspect of our lower bound on the entanglement cost in~\eqref{cost_omega3} is that it can therefore be strictly better than the (regularised) relative entropy bound. Previously known lower bounds on entanglement cost which can be computed in practice are actually worse than the relative entropy~\cite{Piani2009, irreversibility-PPT}, which means that our methods provide a bound that is both computable and can improve on previous approaches.

As a final remark, we note that instead of the class of non-entangling (separability-preserving) operations, we could have instead considered all positive-partial-transpose--preserving maps, which are defined as those that leave invariant the set of states whose partial transpose is positive. Within this latter approach we are able to establish an analogous irreversibility result for the theory of entanglement manipulation, recovering and strengthening the findings of Wang and Duan~\cite{irreversibility-PPT}.
Explicit details are provided in the Supplementary Information.

\subsection{Necessity of macroscopic entanglement generation}

In Theorem~\ref{main_irreversibility_thm_2}, we strengthen the result of Theorem~\ref{main_irreversibility_thm} further by considering operations that are not required to be non-entangling, but only approximately so, allowing for the possibility of microscopic fluctuations in the form of small amounts of entanglement being generated. 

As we discussed in the main text, this mirrors the approach taken by Brand\~{a}o and Plenio
~\cite{BrandaoPlenio1, BrandaoPlenio2}, where \emph{reversibility} of entanglement was claimed 
under similar constraints. The reason we call that framework into question is that the entanglement generated by the `asymptotically non-entangling maps' (ANE) employed there, despite being small when quantified by the generalised robustness, can actually be very large when gauged with another measure, such as the standard robustness or the negativity. Instead of demonstrating this with an explicit example, we prove an even stronger statement, namely, that irreversibility \emph{must} persist if the generated entanglement is required to be small with respect to these other measures. It follows logically that any claimed restoration of reversibility requires macroscopic entanglement generation in the process.

To this end, as described in the main text, we consider a sequence of operations $\Lambda_n$ which are $(M,\delta_n)$-approximately non-entangling, in the sense that
\begin{equation}
    \text{$\sigma_{AB}$ is separable}\quad \Longrightarrow\quad M\left( \Lambda_n (\sigma_{AB})\right) \leq \delta_n,
    \label{ANE_M}
\end{equation}
where $(\delta_n)_{n\in \N} \in \R_+$ is a sequence governing the restrictions on entanglement generation, and $M$ is a choice of an entanglement measure. We denote the above class of operations as $\mathrm{NE}^M_{(\delta_n)_n}$, and the associated distillable entanglement and entanglement cost as $E_{d,\mathrm{NE}^M_{(\delta_n)_n}}$ and $E_{c,\mathrm{NE}^M_{(\delta_n)_n}}$, respectively. Our irreversibility result applies to the cases when either $M(\rho)=N(\rho)\coloneqq\frac12 \left(\big\|\rho^\Gamma\big\|_1-1\right)$ is the negativity~\cite{negativity} (whose logarithmic version we already encountered in Eq.~\eqref{logarithmic_negativity_main}), or $M=R_\SEP^s$ is the standard robustness of entanglement~\cite{VidalTarrach}, defined by $R_\SEP^s(\rho) \coloneqq \inf \left\{ r\geq 0: \exists\, \text{separable state $\sigma$}: \text{$\rho+r\sigma$ separable}\right\}$. Compare this with Brand\~{a}o and Plenio's choice of the generalised robustness, given by $R_\SEP^g(\rho) \coloneqq \inf \big\{ r\geq 0: \exists\, \text{state $\sigma$}: \text{$\rho+r\sigma$ separable}\big\}$; the only difference between the latter two expressions is whether or not $\sigma$ is required to be separable.

Theorem~\ref{main_irreversibility_thm_2} then tells us that as long as the generated entanglement stays sub-exponential according to $M = N$ or $M = R^s_\SEP$, then irreversibility persists. The key step in proving this result is an approximate monotonicity of the two measures under all $(M,\delta_n)$-approximately non-entangling operations; specifically, we can show that under the application of any map $\Lambda_n$ satisfying \eqref{ANE_M}, the corresponding measure cannot increase by more than a factor $O(1+\delta_n)$. But if $\delta_n = 2^{o(n)}$, then any such 
additional term will vanish in the limit $n \to \infty$, meaning that the basic idea of our proof of Theorem~\ref{main_irreversibility_thm} can be applied almost unchanged, as the asymptotic bounds will not be affected by $(M, \delta_n)$ entanglement generation. A full discussion of the proof and the requirements on entanglement creation required to achieve reversibility can be found in Supplementary Note~\ref{generation_note}.

This contrasts with the result claimed by Brand\~{a}o and Plenio~\cite{BrandaoPlenio1, BrandaoPlenio2}: there, choosing as $M$ the generalised robustness $R^g_\SEP$ 
is conjectured~\cite{berta_gap} to yield full reversibility of the theory. 
In support of this conjecture, note that due to 
Brand\~{a}o and Plenio's result concerned with entanglement dilution --- whose proof is not affected by the aforementioned issue~\cite{berta_gap} --- the entanglement cost of an arbitrary state under $\big(R^g_\SEP,\delta_n\big)$-approximately non-entangling operations, with $\delta_n \tendsn{} 0$, coincides with its regularised relative entropy of entanglement. In the case of $\omega_3$, this equals $\log_2(3/2)$, 
which matches its distillable entanglement. Therefore, while we still lack a general proof of reversibility that holds for all states, at least $\omega_3$ \emph{is} a reversible state under Brand\~{a}o and Plenio's asymptotically non-entangling operations provided that one makes the choice $M=R_\SEP^g$. 
However, modifying this choice ever so slightly 
by picking the \emph{standard} instead of the generalised robustness shatters reversibility altogether. 
The choice of the measure in~\eqref{ANE_M} is for all intents and purposes a free parameter, and --- as we just showed --- a crucial one, on which the conclusion hinges. This ambiguity is precisely why no one framework of this type can be deemed more physical than another: 
there does not appear to be a reason to consider $M = R^g_\SEP$ a better motivated choice than $M = R^s_\SEP$. 
Due to the inability to unambiguously define a sensible notion of `small' entanglement, especially when the macroscopic limit is involved, we thus posit that the only way to enforce fully physically consistent manipulation of entanglement is to forbid any entanglement generation whatsoever, as we have done in our approach based on non-entangling operations.

\subsection{Extension to quantum communication}

The setting of quantum communication is a strictly more general framework in which the manipulated objects are quantum channels themselves. Specifically, consider the situation where the separated parties Alice and Bob are attempting to communicate through a noisy quantum channel $\Lambda: \T(\HH_A)\to \T(\HH_{B})$. To every such channel we associate its Choi--Jamiołkowski state, defined through the application of the channel $\Lambda$ to one half of a maximally entangled state: $J_\Lambda \coloneqq \left[\idc_d \otimes \Lambda\right] (\Phi_d)$, where $\idc_d$ denotes the identity channel and $d$ is the local dimension of Alice's system, assumed for now to be finite. Such a state encodes all information about a given channel~\cite{Choi,Jamiolkowski72}. The parallel with entanglement manipulation is then made clear by noticing that communicating one qubit of information is equivalent to Alice and Bob realising a noiseless qubit identity channel, $\idc_2$. But the Choi--Jamiołkowski operator $J_{\idc_2}$ is just the maximally entangled state $\Phi_2$, so the process of quantum communication can be understood as Alice and Bob trying to establish a `maximally entangled state' in the form of a noiseless communication channel. The distillable entanglement in this setting is the \textit{(two-way assisted) quantum capacity} of the channel~\cite{Bennett-error-correction}, corresponding to the rate at which maximally entangled states can be extracted by the separated parties, and therefore the rate at which quantum information can be sent through the channel with asymptotically vanishing error. In a similar way, we can 
consider the \textit{entanglement cost of the channel}~\cite{Berta2013}, that is, the rate of pure entanglement that needs to be used in order to simulate the channel $\Lambda$.

We sketch the basic idea here, as it is very similar to the approach we took for quantum states above. The complete details of the proof in the channel setting will be published elsewhere~\cite{no_second_law_channels}.

The major difference between quantum communication and the manipulation of static entanglement arises in the way that Alice and Bob can implement the processing of their channels. Having access to $n$ copies of a quantum state $\rho_{AB}$ is fully equivalent to having the tensor product $\rho_{AB}^{\otimes n}$ at one's disposal, but the situation is significantly more complex when $n$ uses of a quantum channel $\Lambda$ are available, as they can be exploited in many different ways: they can be used in parallel as $\Lambda^{\otimes n}$; or sequentially, where the output of one use of the channel can be used to influence the input to the subsequent uses; or even in more general ways which do not need to obey a fixed causal order between channel uses, and can exploit phenomena such as superposition of causal orders~\cite{Chiribella-switch,oreshkov_2012}. This motivates us, once again, to consider a general, axiomatic approach that covers all physically consistent ways to manipulate quantum channels, as long as they do not generate entanglement between Alice and Bob if it was not present in the first place. Specifically, we will consider the following. Given $n$ channels $\Lambda_1, \ldots \Lambda_n$, we define an $n$-channel quantum process to be any $n$-linear map $\Upsilon$ such that $\Upsilon(\Lambda_1, \ldots, \Lambda_n)$ is also a valid quantum channel. Now, channels $\Gamma_{A\to B}$ such that $J_\Gamma$ is separable are known as entanglement-breaking channels~\cite{HSR}. We define a non-entangling process to be one such that $\Upsilon(\Gamma_1,\ldots,\Gamma_n)$ is entanglement breaking whenever $\Gamma_1, \ldots, \Gamma_n$ are all entanglement breaking. 

The quantum capacity $Q(\Lambda)$ is then defined as the maximum rate $R$ at which non-entangling $n$-channel processes can establish the noiseless communication channel $\idc_2^{\otimes \ceil{Rn}}$ when the channel $\Lambda$ is used $n$ times. 
As in the case of quantum state manipulation, the transformation error here is only required to vanish asymptotically. 
Analogously, the (parallel) entanglement cost $E_C(\Lambda)$ is given by the rate at which noiseless identity channels $\idc_2$ are required in order to simulate parallel 
copies of the given communication channel $\Lambda$. 

The first step of the extension of our results to the channel setting is then conceptually simple: we define the tempered negativity of a channel as 
\begin{equation}\begin{aligned}
	\ten(\Lambda) \coloneqq \sup_{\rho} \ten\!\left(\left[\idc_d \otimes \Lambda\right](\rho)\right),
\end{aligned}\end{equation}
where the supremum is over all bipartite quantum states $\rho \in \T(\HH_A \otimes \HH_{A})$ on two copies of the Hilbert space of Alice's system. A careful extension of the arguments we made for states --- accounting in particular for the more complicated topological structure of quantum channels --- can be shown~\cite{no_second_law_channels} to give
\begin{equation}\begin{aligned}
	E_C(\Lambda) \geq \ten (\Lambda)
	\label{lower_bound_cost_channel}
\end{aligned}\end{equation}
for any $\Lambda: A \to B$, whether finite- or infinite-dimensional. 

For our example of an irreversible channel, we will use the qutrit-to-qutrit channel $\Omega_3$ whose Choi--Jamiołkowski state is $\omega_3$; namely,
\begin{equation}\begin{aligned}
	\Omega_3 \coloneqq \frac{3}{2} \Delta - \frac{1}{2} \idc_3
\end{aligned}\end{equation}
where $\Delta(\cdot) = \sum_{i=1}^3 \ketbra{i} \cdot \ketbra{i}$ is the completely dephasing channel. Our lower bound~\eqref{lower_bound_cost_channel} on the entanglement cost then gives $E_C(\Omega_3) \geq \ten(\Omega_3) \geq \ten(\omega_3) \geq 1$.

To upper bound the quantum capacity of $\Omega_3$, several approaches are known. If the manipulation protocols we consider were restricted to adaptive quantum circuits, we could follow 
established techniques
~\cite{Bennett-error-correction, muller_thesis,PLOB} and use the relative entropy to obtain a bound very similar to the one we employed in the state case (Eq.~\eqref{distillable_omega3}). However, to maintain full generality, we will instead employ a recent result
~\cite{regula_2020-2} which showed that an upper bound on $Q$ under the action of arbitrary non-entangling protocols --- not restricted to quantum circuits, and not required to have a definite causal order --- is given by the max-relative entropy~\cite{Datta2009} between a channel and all entanglement-breaking channels. Using the completely dephasing channel $\Delta$ as an ansatz, we then get
\begin{equation}\begin{aligned}
	Q(\Omega_3) \leq \log_2 \frac{3}{2} < 1 \leq E_C(\Omega_3),
\end{aligned}\end{equation}
establishing the irreversibility in the manipulation of quantum channels under the most general transformation protocols.\\

\subsection*{Acknowledgments}
We are grateful to Philippe Faist, Martin B.\ Plenio, Mark M.\ Wilde, and Andreas Winter 
for discussions as well as for helpful comments and suggestions on the manuscript. We also thank Seok Hyung Lie for notifying us of a typo in a preliminary version of the paper. L.L.\ was supported by the Alexander von Humboldt Foundation. B.R.\ was supported by the Japan Society for the Promotion of Science (JSPS) KAKENHI Grant No.\ 21F21015, the JSPS Postdoctoral Fellowship for Research in Japan, and the Presidential Postdoctoral Fellowship from Nanyang Technological University, Singapore.

\let\oldaddcontentsline\addcontentsline
\renewcommand{\addcontentsline}[3]{}

\vspace*{-1.5\baselineskip}
\bibliographystyle{apsrev4-1a}
 \bibliography{biblio}
 
 \let\addcontentsline\oldaddcontentsline

\clearpage

\renewcommand{\theequation}{S\arabic{equation}}
\renewcommand{\thethm}{S\arabic{thm}}
\setcounter{equation}{0}
\setcounter{thm}{0}

\begin{center}
\vspace*{\baselineskip}
{\textbf{\large No second law of entanglement manipulation after all\\[2pt] --- Supplementary Information ---}}\\[1pt] \quad \\
\end{center}

\tableofcontents


\section{General definitions}\label{general}

Throughout the Supplementary Information, we work in the full generality of Hilbert spaces which can be infinite dimensional. An operator $X:\HH\to \HH$ on a Hilbert space $X$ is said to be bounded if $\|X\|_\infty \coloneqq \sup_{\ket{\psi}\in \HH,\, \|\ket{\psi}\|\leq 1} \left\|X\ket{\psi}\right\| < \infty$. The expression on the left-hand side is called the operator norm of $X$. The Banach space of bounded operators on a Hilbert space $\HH$ will be denoted with $\B(\HH)$. Its pre-dual is the Banach space of trace class operators, denoted with $\T(\HH)$. We remind the reader that an operator $T:\HH\to \HH$ is said to be of trace class if it is bounded and moreover $\sum_{j=0}^\infty \braket{e_j|\sqrt{T^\dag T}|e_j}<\infty$ converges for some --- and hence for all --- orthonormal bases $\{ \ket{e_j} \}_{j\in \N}$. Here, $\sqrt{T^\dag T}$ is the unique positive square root of the positive semi-definite bounded operator $T^\dag T$, with $T^\dag$ being the adjoint of $T$. Positive semi-definite trace-class operators inside $\T(\HH)$ form a cone, indicated with $\Tp(\HH)$. When normalised to have trace equal to one, operators in $\Tp(\HH)$ form the set of density operators, which we will denote by $\D(\HH)$. Hereafter, the subscript $\mathrm{sa}$ (e.g.\ $\B_{\mathrm{sa}}$) indicates a restriction to self-adjoint operators.

A linear map $\Lambda:\T(\HH_A) \to \T(\HH_B)$, i.e.\ from system $A$ to system $B$, is said to be:
\begin{enumerate}[(i)]
    \item positive, if $\Lambda\left( \Tp(\HH_A) \right) \subseteq \Tp(\HH_B))$;
    \item completely positive, if $\idc_k \otimes \Lambda : \T\big(\C^k\otimes \HH_A\big) \to \T\big(\C^k\otimes \HH_B\big)$ is positive for all $k\in \N$, where $\idc_k$ denotes the identity channel on the space of $k\times k$ complex matrices;
    \item trace preserving, if $\Tr \Lambda (X) = \Tr X$ for all $X$.
\end{enumerate}
We will denote the set of positive (respectively, completely positive) trace preserving maps from $A$ to $B$ with $\ptp_{A\to B}$ (respectively, $\cptp_{A\to B}$). Given a bounded linear map $\Lambda:\T(\HH_A) \to \T(\HH_B)$,\footnote{Here, `bounded' is intended in the Banach space sense; that is, we require that $\left\|\Lambda(T)\right\|_1\leq C \|T\|_1$ for some constant $C<\infty$ and all $T\in \T(\HH_A)$.} we can consider its adjoint $\Lambda^\dag: \B(\HH_B) \to \B(\HH_A)$, defined by the identity 
\bb
\Tr \left[ X\Lambda^\dag(Y) \right] = \Tr \left[\Lambda(X) Y\right] , \qquad \forall\ X\in \T(\HH_A)\, ,\quad \forall\ Y\in \B(\HH_B)
\label{adjoint}
\ee
If $\Lambda:\T(\HH_A) \to \T(\HH_B)$ is a positive and trace preserving linear map, then it satisfies that
\bb
\left\|\Lambda(T)\right\|_1 \leq \|T\|_1\qquad \forall\ T\in \T(\HH)\, .
\label{trace_norm_contraction}
\ee
This in particular implies that $\Lambda$ is bounded in the Banach space sense. Its adjoint $\Lambda^\dag$ is positive and unital, meaning that $\Lambda^\dag(\id_B) = \id_A$, and more generally
\bb
\left\|\Lambda^\dag (X) \right\|_\infty \leq \|X\|_\infty \qquad \forall\ X\in \B(\HH)\, .
\label{operator_norm_contraction}
\ee

\subsection{Robustness measures}

Let $AB$ be a quantum system with Hilbert space $\HH_{AB}\coloneqq \HH_A\otimes \HH_B$. The set of separable states on $AB$ can be defined as the closed convex hull of all product states, in formula
\begin{equation}
    \SEP^{1}_{AB} \coloneqq \cl\left( \co\left\{ \ketbra{\psi}_A \otimes \ketbra{\phi}_B:\, \ket{\psi}_A\in \HH_A,\, \ket{\phi}_B\in \HH_B,\, \braket{\psi|\psi}=1=\braket{\phi|\phi} \right\} \right) .
\end{equation}
As mentioned in the main text, Werner, Holevo, and Shirokov~\cite{Holevo2005} have shown that a state $\sigma_{AB}$ is separable if and only if it can be expressed as 
\begin{equation}
    \sigma_{AB} = \int \ketbra{\psi}_A \otimes \ketbra{\phi}_B\, \mathrm{d}\mu(\psi,\phi)\, ,
    \label{separable}
\end{equation}
where $\mu$ is a Borel probability measure on the product of the sets of local (normalised) pure states.

The cone generated by the set of separable states will be denoted with
\begin{equation}
    \SEP_{AB} \coloneqq \cone \left( \SEP^{1}_{AB} \right) \coloneqq \left\{ \lambda \sigma_{AB}:\, \lambda\geq 0,\, \sigma_{AB}\in \SEP^{1}_{AB}\right\} .
\label{SEP}
\end{equation}
As an outer approximation to $\SEP_{AB}$, one often employs the cone of positive operators that also have a positive partial transpose (PPT). In formula, this is given by
\begin{equation}
    \PPT_{\!AB} \coloneqq \left\{ T_{AB}\in \Tp(\HH_{AB}):\ T_{AB}^\Gamma\geq 0\right\} ,
\label{PPT}
\end{equation}
and $\Gamma$ stands for the partial transpose. As recalled already in the main text, this is defined by the expression
\begin{equation}
    \Gamma(X_A\otimes Y_B) = (X_A\otimes Y_B)^\Gamma \coloneqq X_A \otimes Y_B^\intercal
    \label{PT_simple_tensors}
\end{equation}
on simple tensors, and is extended to the whole $\T(\HH_{AB})$ by linearity and continuity. Some subtleties related to the infinite-dimensional case are discussed in the Supplementary Note~\ref{infinite_dim_note}. It has been long known that~\cite{PeresPPT}
\bb
\SEP_{AB} \subseteq \PPT_{\!AB}
\ee
for all bipartite systems $AB$. Leveraging this fact, one can introduce an easily computable entanglement measure known as the \emph{\textbf{logarithmic negativity}}, given by~\cite{negativity, plenioprl}
\begin{equation}
    E_N(\rho_{AB}) \coloneqq \log_2 \left\|\rho_{AB}^\Gamma\right\|_1 = \log_2 \sup\left\{ \Tr X\rho:\, \left\|X^\Gamma\right\|_\infty\leq 1 \right\} .
    \label{logarithmic_negativity}
\end{equation}

\begin{note}
Hereafter, unless otherwise specified, $\K_{AB}$ will denote one of the two cones $\K_{AB}=\SEP_{AB}$ or $\K_{AB}=\PPT_{\!AB}$ defined by~\eqref{SEP} and~\eqref{PPT}, respectively.
\end{note}

All states from now are understood to be on a bipartite system $AB$, although we will often drop the subscripts for the sake of readability. The \emph{\textbf{(standard) $\mathbfcal{K}$-robustness}} of a state $\rho$ is defined as
\begin{align}
R_\K^s(\rho) &\coloneqq \inf \left\{ \Tr \delta:\, \delta \in \K,\, \rho+\delta\in \K \right\}. \label{std_rob}
\end{align}
Here, it is understood that the variable $\delta$ is a trace-class operator. Much of the appeal of the expression in~\eqref{std_rob} is that it is a convex optimisation program, and even a semi-definite program (SDP) for the special case of $\K=\PPT$ (however, it is an infinite-dimensional optimisation when $\rho$ acts on an infinite-dimensional space). Note that $R_\PPT^s(\rho)\leq R_\SEP^s(\rho)$ holds for all states $\rho$, as a simple inspection of~\eqref{std_rob} reveals.

\begin{note}
Our definition of robustness $R^s_\K$ follows the convention of Vidal and Tarrach~\cite{VidalTarrach}. The robustness as constructed in Ref.~\cite{taming-PRA, taming-PRL}, instead, would be expressed as $1+R_\K$ in our current notation.
\end{note}

It turns out that $1+2R_\K^s(\rho)$ is nothing but the base norm of $\rho=\rho_{AB}$ as computed in the base norm space $\left( \Bsa\left( \HH_{AB}\right),\, \K_{AB},\, \Tr_{AB}\right)$. This is proved in Ref.~\cite[Lemma~25]{taming-PRA} for the case $\K=\SEP$, and in Supplementary Note~\ref{infinite_dim_note} for the case $\K=\PPT$. Leveraging this correspondence, one can establish the dual representation~\cite[Eq.~(23) and Lemma~25]{taming-PRA}
\bb
1+2 R_\K^s(\rho) = \sup\left\{ \Tr X\rho:\ X\in \left[ -\id, \id\right]_{\K^*} \right\} \eqqcolon \|\rho\|_\K ,
\label{std_rob_dual}
\ee
where
\begin{align}
\K^* &\coloneqq \left\{ Z_{AB}\in \Bsa\left( \HH_{AB}\right):\ \Tr[Z_{AB}W_{AB}]\geq 0\quad \forall\ W_{AB}\in \K_{AB} \right\} \subset \Bsa(\HH_{AB})\, , \label{dual_cone} \\
\left[ -\id, \id\right]_{\K^*} &\coloneqq \left\{ Z_{AB}\in \Bsa\left( \HH_{AB}\right):\ \left|\Tr[Z_{AB}W_{AB}]\right|\leq \Tr W_{AB}\quad \forall\ W_{AB}\in \K_{AB} \right\} \subset \Bsa(\HH_{AB}). \label{operator_interval}
\end{align}
The notation $\left[ -\id, \id\right]_{\K^*}$ is motivated by the fact that this set can be understood as an operator interval with respect to the cone $\K^*$, in the sense that $\left[ -\id, \id\right]_{\K^*} = \left( \id-\K^*\right) \cap \left( -\id + \K^*\right)$.
Combining expressions~\eqref{std_rob} and~\eqref{std_rob_dual} allows us to compute the robustness exactly in some cases. For instance, for a pure state $\Psi_{AB} = \ketbra{\Psi}_{AB}$ with Schmidt decomposition $\ket{\Psi}_{AB} = \sum_{j=0}^\infty \sqrt{\lambda_j} \ket{e_j}_A\ket{f_j}_B$ it holds that~\cite{VidalTarrach, taming-PRA, taming-PRL}
\bb
R_\K^s(\Psi) = \left( \sumno_{j=0}^\infty \sqrt{\lambda_j} \right)^2 - 1\, .
\label{robustness_pure}
\ee
The special case of this formula where $\Psi=\Phi_2^{\otimes k}$ is made of $k$ copies of an entanglement bit $\ket{\Phi_2} = \frac{1}{\sqrt2} \left( \ket{00} + \ket{11} \right)$ is especially useful. We obtain that
\bb
R_\K^s\left(\Phi_2^{\otimes k}\right) = 2^k -1\, .
\label{robustness_k_ebits}
\ee

On a different note, by combining the expression in~\eqref{std_rob_dual} with the elementary estimate $\|\rho\|_\K \geq \left\|\rho^\Gamma\right\|_1$ one deduces the following.

\begin{lemma}
For all states $\rho$, it holds that
\bb
R_\SEP^s(\rho) \geq R_\PPT^s(\rho) \geq \frac{\left\|\rho^\Gamma\right\|_1 - 1}{2}\, .
\label{std_rob_vs_negativity}
\ee
\end{lemma}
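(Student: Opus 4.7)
The first inequality $R^s_\SEP(\rho) \geq R^s_\PPT(\rho)$ is essentially free: since $\SEP_{AB} \subseteq \PPT_{\!AB}$, any feasible pair $(\delta, \rho+\delta) \in \SEP \times \SEP$ for the primal program defining $R^s_\SEP(\rho)$ in~\eqref{std_rob} is automatically feasible for $R^s_\PPT(\rho)$, so the PPT infimum can only be smaller or equal. I would simply write one line to this effect.

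For the second, nontrivial inequality, the natural route is through the dual representation~\eqref{std_rob_dual}. It suffices to exhibit a single $X \in [-\id,\id]_{\PPT^*}$ with $\Tr X\rho$ equal to (or approaching) $\|\rho^\Gamma\|_1$, since then $1 + 2R^s_\PPT(\rho) = \|\rho\|_\PPT \geq \|\rho^\Gamma\|_1$, which rearranges to the claim. The plan is to show the stronger inclusion
\begin{equation*}
\bigl\{ X \in \Bsa(\HH_{AB}) \,:\, \|X^\Gamma\|_\infty \leq 1 \bigr\} \;\subseteq\; [-\id,\id]_{\PPT^*},
\end{equation*}
which combined with the variational formula~\eqref{logarithmic_negativity} for $\|\rho^\Gamma\|_1$ gives the desired bound. (Strictly speaking~\eqref{logarithmic_negativity} runs over all bounded $X$, but one may restrict to the self-adjoint part without changing the supremum, since $\rho$ is self-adjoint.)

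To verify the inclusion, take any $X$ with $\|X^\Gamma\|_\infty \leq 1$ and any $W \in \PPT_{\!AB}$. The key identity is that the partial transpose is its own adjoint under the trace pairing, $\Tr[XW] = \Tr[X^\Gamma W^\Gamma]$; in infinite dimensions one should verify this on simple tensors, where it follows from~\eqref{PT_simple_tensors} and the cyclicity of the trace, and then extend by linearity and trace-norm continuity (the subtleties of defining $\Gamma$ on all of $\T(\HH_{AB})$ are handled in Supplementary Note~\ref{infinite_dim_note}, which I would cite). Since $W \in \PPT$ means $W^\Gamma \geq 0$ and is trace class with $\Tr W^\Gamma = \Tr W$, Hölder's inequality gives
\begin{equation*}
\bigl| \Tr[XW] \bigr| \;=\; \bigl|\Tr[X^\Gamma W^\Gamma]\bigr| \;\leq\; \|X^\Gamma\|_\infty \, \|W^\Gamma\|_1 \;\leq\; \Tr W,
\end{equation*}
which is precisely the condition for $X$ to lie in $[-\id,\id]_{\PPT^*}$ per~\eqref{operator_interval}. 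Taking the supremum over such $X$ against $\rho$ then yields $\|\rho\|_\PPT \geq \|\rho^\Gamma\|_1$ via~\eqref{logarithmic_negativity}, and dividing by two after subtracting one delivers the bound.

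The only step that requires any care is the trace-pairing identity $\Tr[XW] = \Tr[X^\Gamma W^\Gamma]$ in the general infinite-dimensional setting, since $X$ lives in $\B(\HH_{AB})$ while $W$ lives in $\T(\HH_{AB})$ and $\Gamma$ is not bounded on $\B$. The rest is essentially a clean application of duality and Hölder, so I do not anticipate further obstacles.
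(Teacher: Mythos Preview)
Your proposal is correct and follows essentially the same approach as the paper, which simply cites the ``elementary estimate $\|\rho\|_\K \geq \|\rho^\Gamma\|_1$'' without further detail; you have spelled out precisely the dual-program argument (the inclusion $\{X:\|X^\Gamma\|_\infty\le 1\}\subseteq[-\id,\id]_{\PPT^*}$ via H\"older) that underlies it. In fact the paper later uses exactly your argument verbatim in the proof of Proposition~\ref{elementary_prop}(d), so your level of detail and the care you flag regarding the infinite-dimensional trace-pairing identity are both appropriate.
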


\subsection{Distillable entanglement and entanglement cost} \label{subsec_distillable_cost}

We will also need some notation for the set of quantum channels that preserve separability and PPT-ness, or --- in short --- $\K$-ness. Remember that we denote with $(\mathrm{C})\ptp_{A\to A'}$ the set of (completely) positive and trace-preserving maps from $A$ to $A'$. Since our results actually hold for all positive transformations, even those that are not completely positive, we will drop the complete positivity assumption from now on.

Thus, let us we define \emph{\textbf{$\mathbfcal{K}$-preserving operations}} between two bipartite systems $AB$ and $A'B'$ by
\bb
\kp\left(AB \to A'B'\right) \coloneqq \left\{ \Lambda\in \ptp\left(AB\to A'B'\right):\ \Lambda(\K_{AB})\subseteq \K_{A'B'}\right\} .
\label{kp}
\ee
When $\K=\SEP$, the above identity defines the set of non-entangling (or separability-preserving) operations employed in the main text. When $\K=\PPT$, we obtain the family of PPT-preserving operations instead. Note that the name of `PPT-preserving' maps has been used in the literature to refer to several distinct concepts; we stress that here we only impose that $\Lambda(\sigma)$ is PPT whenever $\sigma$ is. 

Let us comment briefly on this choice of transformations. As discussed in the main text, the intention here is to be as general as possible --- it would be beside the point to ask ourselves whether all non-entangling transformations are physically implementable in any given theory; in general this might not always be the case. In exactly the same way, not all transformations that obey the second law of thermodynamics are physical: consider e.g.\ one that does not preserve electric charge or angular momentum.
The assumption of no entanglement generation is merely a necessary condition for a transformation to be physical --- any practical process used for entanglement manipulation should not create extra entanglement from nothing, and hence should be in $\kp(AB \to A'B')$.

We record the following elementary yet important observation.

\begin{lemma}\label{monotonicity_lemma}
For $\K=\SEP,\PPT$, the $\K$-robustnesses~\eqref{std_rob} is monotonic under $\K$-preserving operations.
\end{lemma}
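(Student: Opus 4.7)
The plan is to give a direct argument by unfolding the primal definition~\eqref{std_rob} of the $\K$-robustness and chasing a feasible point through the map $\Lambda$. The key ingredients are exactly the two defining properties of any $\Lambda\in\kp(AB\to A'B')$: that it sends the cone $\K_{AB}$ into $\K_{A'B'}$, and that it preserves the trace.

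Concretely, fix $\rho\in\D(\HH_{AB})$ and take any feasible point $\delta\in\T(\HH_{AB})$ for the optimisation defining $R^s_\K(\rho)$, that is, $\delta\in\K_{AB}$ and $\rho+\delta\in\K_{AB}$. I would then argue that $\Lambda(\delta)$ is feasible for $R^s_\K(\Lambda(\rho))$ and attains the same objective value. Indeed, $\K$-preservation gives $\Lambda(\delta)\in\K_{A'B'}$, and applying it to $\rho+\delta\in\K_{AB}$ together with linearity of $\Lambda$ yields $\Lambda(\rho)+\Lambda(\delta)=\Lambda(\rho+\delta)\in\K_{A'B'}$. Trace preservation gives $\Tr\Lambda(\delta)=\Tr\delta$, so the objective agrees, and taking the infimum over admissible $\delta$ on both sides concludes $R^s_\K(\Lambda(\rho))\leq R^s_\K(\rho)$.

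There is effectively no obstacle. The only point one might worry about is the infinite-dimensional setting, but the definition~\eqref{std_rob} already quantifies $\delta$ over trace-class operators, and by~\eqref{trace_norm_contraction} the map $\Lambda$ is a bounded linear operator on $\T(\HH_{AB})$, so $\Lambda(\delta)\in\T(\HH_{A'B'})$ automatically and the displayed identities require no further justification. An alternative route would pass through the dual representation~\eqref{std_rob_dual} and the adjoint $\Lambda^\dag$, but this is strictly more involved and unnecessary for the present statement.
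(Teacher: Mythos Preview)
Your proof is correct and follows essentially the same route as the paper's: pick a feasible $\delta$ for $R^s_\K(\rho)$, push it through $\Lambda$ using $\K$-preservation and trace preservation, and conclude by taking the infimum. The paper phrases the last step via an $\epsilon$-approximation (choosing $\delta$ with $\Tr\delta\leq R^s_\K(\rho)+\epsilon$ and letting $\epsilon\to 0$), but this is merely a stylistic difference from your direct infimum argument.
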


\begin{proof}
The proof follows standard arguments, and indeed an even stronger variant of the monotonicity of the robustnesses (selective monotonicity) has been shown e.g.\ in Ref.~\cite{Regula2017}; we repeat the basic argument here only for the sake of convenience. 
For a bipartite state $\rho_{AB}$, an arbitrary $\epsilon>0$, and some $\Lambda\in \kp\left(AB\to A'B'\right)$, let $\delta_{AB}\in \K_{AB}$ be such that $\rho_{AB}+\delta_{AB}\in \K_{AB}$ and $\Tr\delta_{AB}\leq R_K^s(\rho_{AB})+ \epsilon$. Then $\Lambda(\delta_{AB})\in \K_{A'B'}$ and also $\Lambda(\rho_{AB}) + \Lambda(\delta_{AB}) = \Lambda(\rho_{AB}+\delta_{AB})\in \K_{A'B'}$, so that
\bbb
R_\K^s\left( \Lambda(\rho_{AB})\right) \leq \Tr \Lambda(\delta_{AB}) = \Tr \delta_{AB} \leq R_\K^s(\rho_{AB})+ \epsilon\, .
\eee
Since this holds for arbitrary $\epsilon>0$, we deduce that $R_\K^s\left( \Lambda(\rho_{AB})\right)\leq R_\K^s(\rho_{AB})$, as claimed. 
\end{proof}

We now recall the definitions of distillable entanglement and entanglement cost of a state $\rho_{AB}$ under $\K$-preserving operations. We present here a slightly more general construction than in the Methods section of the main text, namely, one which incorporates a non-zero asymptotic error. Following e.g.\ Vidal and Werner~\cite[Eq.~(43)]{negativity}, for an arbitrary $\epsilon\in [0,1)$ let us set
\begin{align}
    E_{d,\,\kp}^\epsilon(\rho_{AB}) &\coloneqq \sup\left\{R>0:\, \limsup_{n\to\infty} \inf_{\Lambda_n \in \kp\left(A^nB^n\to A_0^{\ceil{Rn}}B_0^{\ceil{Rn}}\right)} \frac12 \left\| \Lambda_n\left( \rho_{AB}^{\otimes n} \right) - \Phi_2^{\otimes \ceil{Rn}} \right\|_1 \leq \epsilon \right\} , \label{distillable} \\[.5ex]
    E_{c,\,\kp}^\epsilon(\rho_{AB}) &\coloneqq \inf\left\{R>0:\, \limsup_{n\to\infty} \inf_{\Lambda_n \in \kp\left(A_0^{\floor{Rn}}B_0^{\floor{Rn}}\to A^n B^n\right)} \frac12 \left\| \Lambda_n\left( \Phi_2^{\otimes \floor{Rn}} \right) - \rho_{AB}^{\otimes n} \right\|_1 \leq \epsilon \right\} . \label{cost}
\end{align}
For a fixed $\rho$, the function $E_{d,\,\kp}^\epsilon(\rho)$ is non-decreasing in $\epsilon$, while $E_{c,\,\kp}^\epsilon(\rho)$ is non-increasing. Also, note that
\begin{equation}
    E_{d,\,\kp}(\rho_{AB})\coloneqq E_{d,\,\kp}^0(\rho_{AB})\, ,\qquad E_{c,\,\kp}(\rho_{AB})\coloneqq E_{c,\,\kp}^0(\rho_{AB})\, , \label{distillable_cost_0}
\end{equation}
coincide with the quantities discussed in the Methods section of the main text.

A variation on the notions of distillable entanglement and entanglement cost can be obtained by looking only at \emph{exact} transformations. The corresponding modified entanglement measures read
\begin{align}
    E_{d,\,\kp}^\mathrm{exact}(\rho_{AB}) &\coloneqq \sup\left\{R>0:\, \exists\ n_0:\ \forall\ n\geq n_0\ \ \exists\ \Lambda_n\in \kp\left(A^nB^n\to A_0^{\ceil{Rn}}B_0^{\ceil{Rn}}\right):\ \Lambda_n\left( \rho_{AB}^{\otimes n} \right) = \Phi_2^{\otimes \ceil{Rn}} \right\} , \label{distillable_exact} \\[.5ex]
    E_{c,\,\kp}^\mathrm{exact}(\rho_{AB}) &\coloneqq \inf\left\{R>0:\, \exists\ n_0:\ \forall\ n\geq n_0\ \ \exists\ \Lambda_n\in \kp\left(A_0^{\floor{Rn}}B_0^{\floor{Rn}}\to A^n B^n\right):\ \Lambda_n\left( \Phi_2^{\otimes \floor{Rn}} \right) = \rho_{AB}^{\otimes n} \right\} . \label{cost_exact}
\end{align}
Although less operationally meaningful than their error-tolerant counterparts~\eqref{distillable}--\eqref{cost}, the exact distillable entanglement and the exact entanglement cost are nevertheless useful sometimes. For instance, they can come in handy in establishing bounds, thanks to the simple inequalities
\bb
E_{d,\,\kp}^{\mathrm{exact}}(\rho) \leq E_{d,\,\kp}(\rho) \leq E_{c,\,\kp}(\rho) \leq E_{c,\,\kp}^{\mathrm{exact}}(\rho)\, , \label{trivial_0}
\ee
which hold for all bipartite states $\rho$. We now introduce formally the notion of reversibility for the theory of entanglement manipulation under $\K$-preserving transformations.

\begin{Def} \label{reversibility_def}
The theory of entanglement manipulation is said to be \emph{\textbf{reversible under $\mathbfcal{K}$-preserving operations}} if $E_{d,\, \kp}(\rho_{AB}) = E_{c,\, \kp}(\rho_{AB})$ holds for all bipartite states $\rho_{AB}$. Otherwise it is said to be \emph{\textbf{irreversible under $\mathbfcal{K}$-preserving operations}}.
\end{Def}

It is not difficult to realise that neither of the two families of $\K$-preserving operations, for $\K=\SEP$ and $\K=\PPT$, is a subset of the other. Hence, one would be tempted to deduce that the distillable entanglement and the entanglement cost under non-entangling and PPT-preserving operations do not obey any general inequality. This is however not the case, and the reason must be traced back to the very high degree of symmetry exhibited by the maximally entangled state, and --- more precisely --- to the fact that for isotropic states the `PPT criterion'~\cite{PeresPPT} is necessary and sufficient for separability~\cite{Horodecki1999}. The operational relation between the two classes can be inferred from~\cite[Remark on pp.~843--844]{BrandaoPlenio2} already, and we can formalise this observation as follows.

\begin{lemma} \label{sepp_vs_pptp_lemma}
For all bipartite states $\rho$ and all $\epsilon\in [0,1)$ it holds that
\begin{align}
E_{d,\,\sepp}^\epsilon(\rho) &\geq E_{d,\, \pptp}^\epsilon(\rho)\, , \qquad E_{c,\,\sepp}^\epsilon(\rho) \geq E_{c,\, \pptp}^\epsilon(\rho)\, , \\[.5ex]
E_{d,\,\sepp}^{\mathrm{exact}}(\rho) &\geq E_{d,\,\pptp}^{\mathrm{exact}}(\rho)\, ,\qquad E_{c,\,\sepp}^{\mathrm{exact}}(\rho)\geq E_{c,\,\pptp}^{\mathrm{exact}}(\rho)\, .
\end{align}
In particular, 
\begin{align}
E_{d,\,\sepp}(\rho) &\geq E_{d,\,\pptp}(\rho)\, ,\qquad E_{c,\,\sepp}(\rho) \geq E_{c,\,\pptp}(\rho)\, .
\end{align}
\end{lemma}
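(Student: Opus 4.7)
The plan is to bridge the two classes $\sepp$ and $\pptp$ via a twirling argument. Let $\T_d$ denote the $U\otimes \bar{U}$ twirl on a $d\otimes d$ bipartite system, defined by $\T_d(X)\coloneqq \int_{U(d)} (U\otimes \bar U)\, X\, (U\otimes \bar U)^{\dag}\, dU$ with $U$ averaged over the Haar measure. Three facts will do all the work: (i) $\T_d$ is an LOCC channel, hence lies in both $\sepp$ and $\pptp$; (ii) $\T_d(\Phi_d)=\Phi_d$; and (iii) its image consists only of isotropic states, which by Horodecki's criterion~\cite{Horodecki1999} are separable iff they are PPT. Together these properties allow us to turn any $\pptp$ protocol into a $\sepp$ one of equal performance (yielding the distillation inequalities) and, symmetrically, to turn any $\sepp$ protocol into a $\pptp$ one (yielding the entanglement-cost inequalities).

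For the distillation direction, I would take any rate $R$ achievable by a PPT-preserving sequence $\Lambda_n\in\pptp\bigl(A^nB^n\to A_0^{\lceil Rn\rceil}B_0^{\lceil Rn\rceil}\bigr)$ within error $\epsilon$, and set $\widetilde\Lambda_n\coloneqq \T_{2^{\lceil Rn\rceil}}\circ\Lambda_n$. Since both factors preserve PPT-ness, any separable input is mapped to a PPT isotropic output, which is separable by (iii); hence $\widetilde\Lambda_n\in\sepp$. Invariance of $\Phi_2^{\otimes\lceil Rn\rceil}=\Phi_{2^{\lceil Rn\rceil}}$ under the twirl together with trace-norm monotonicity under CPTP maps gives $\tfrac12\bigl\|\widetilde\Lambda_n(\rho^{\otimes n})-\Phi_2^{\otimes\lceil Rn\rceil}\bigr\|_1\leq \tfrac12\bigl\|\Lambda_n(\rho^{\otimes n})-\Phi_2^{\otimes\lceil Rn\rceil}\bigr\|_1$. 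Taking $\limsup_{n\to\infty}$ shows that the set of rates achievable under $\pptp$ is contained in the set of rates achievable under $\sepp$, so $E_{d,\sepp}^\epsilon(\rho)\geq E_{d,\pptp}^\epsilon(\rho)$. For the exact variant, the identity $\Lambda_n(\rho^{\otimes n})=\Phi_2^{\otimes\lceil Rn\rceil}$ is preserved by post-composing with the twirl by the same fixed-point property, so the same construction applies verbatim.

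For the entanglement-cost direction, I would mirror the construction on the \emph{input} side. Given a SEP-preserving dilution $\Lambda_n\in\sepp\bigl(A_0^{\lfloor Rn\rfloor}B_0^{\lfloor Rn\rfloor}\to A^nB^n\bigr)$, define $\widetilde\Lambda_n\coloneqq \Lambda_n\circ\T_{2^{\lfloor Rn\rfloor}}$. Any PPT input $\tau$ is first twirled into an isotropic PPT state, hence separable by (iii); then $\Lambda_n$ sends it into $\SEP\subseteq\PPT$, so $\widetilde\Lambda_n\in\pptp$. On the intended input $\Phi_2^{\otimes\lfloor Rn\rfloor}$ the twirl acts as the identity by (ii), so $\widetilde\Lambda_n(\Phi_2^{\otimes\lfloor Rn\rfloor})=\Lambda_n(\Phi_2^{\otimes\lfloor Rn\rfloor})$ and the approximation error (or exact equality, respectively) is preserved. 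This places every rate achievable by $\sepp$ at error $\epsilon$ inside the set of rates achievable by $\pptp$ at error $\epsilon$, yielding $E_{c,\sepp}^\epsilon(\rho)\geq E_{c,\pptp}^\epsilon(\rho)$ and, by the same argument, the exact counterpart.

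The conceptual point is that both inequalities reduce to the single coincidence $\SEP=\PPT$ on the isotropic slice of state space, which is precisely what the Haar twirl isolates. I do not anticipate any genuine obstacle: the twirls always act on the finite-dimensional maximally entangled systems $A_0^kB_0^k$, so no measurability or integrability subtleties arise from the possibly infinite-dimensional $A,B$; the rest is bookkeeping on the inclusion of the sets of achievable rates.
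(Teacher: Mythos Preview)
Your proposal is correct and follows essentially the same approach as the paper: both rely on the $U\otimes\bar U$ twirl and the coincidence $\SEP=\PPT$ on isotropic states, post-composing a $\pptp$ distillation map with the twirl to obtain a $\sepp$ one. The paper explicitly writes out only the distillation inequality and declares the rest ``completely analogous''; your pre-composition argument for the cost direction is exactly the natural analogue, so you have simply made explicit what the paper leaves implicit.
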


\begin{proof}
We prove only the first inequality, as all the others are completely analogous. Let $R$ be an achievable rate for $E_{d,\,\pptp}^\epsilon(\rho)$ at error threshold $\epsilon\in [0,1)$, and let $\Lambda_n\in \pptp$ be PPT-preserving operations satisfying that $\limsup_{n\to\infty}\frac12 \left\| \Lambda_n\left( \rho^{\otimes n}\right) - \Phi_2^{\otimes \ceil{Rn}} \right\|_1 \leq \epsilon$. To proceed further, define the twirling operation $\pazocal{T}_m$ on a $2^{m}\times 2^{m}$ bipartite system by~\cite{Horodecki1999}
\bb
\pazocal{T}_m (X) \coloneqq &\ \int \left(U\otimes U^*\right) X \left(U\otimes U^*\right)^\dag \mathrm{d}U \\
=&\ \Tr\left[ X \Phi_2^{\otimes m}\right] \Phi_2^{\otimes m} + \Tr\left[ X \left( \id - \Phi_2^{\otimes m}\right) \right] \frac{\id-\Phi_2^{\otimes m}}{4^{m}-1} \, ,
\label{twirling}
\ee
where $\mathrm{d}U$ denotes the Haar measure over the (local) unitary group. Note that $\pazocal{T}_m\in \sepp\cap \pptp$ --- in fact, $\pazocal{T}_m$ can be physically implemented with local operations and shared randomness --- and that the output states of $\pazocal{T}_m$ are all isotropic, that is, they are linear combinations of $\Phi_2^{\otimes m}$ and the maximally mixed state. For states of this form, it is known that the PPT criterion is necessary and sufficient for separability~\cite{Horodecki1999}.

We now claim that $\pazocal{T}_{\ceil{Rn}}\circ \Lambda_n\in \sepp$. To see why this is the case, note that for all states $\sigma \in \SEP \subseteq \PPT$ it holds that $\Lambda_n(\sigma)\in \PPT$ and hence $\left(\pazocal{T}_{\ceil{Rn}}\circ \Lambda_n\right)(\sigma)\in \PPT$. However, since the latter is an isotropic state, we conclude that in fact $\left(\pazocal{T}_{\ceil{Rn}}\circ \Lambda_n\right)(\sigma)\in \SEP$, proving the claim. Observe also that
\bb
\left\| \left( \pazocal{T}_{\ceil{Rn}}\circ\Lambda_n\right)\left( \rho^{\otimes n}\right) - \Phi_2^{\otimes \ceil{Rn}} \right\|_1 = \left\| \left( \pazocal{T}_{\ceil{Rn}}\circ\Lambda_n\right)\left( \rho^{\otimes n}\right) - \pazocal{T}_{\ceil{Rn}}\left(\Phi_2^{\otimes \ceil{Rn}}\right) \right\|_1 \leq \left\| \Lambda_n\left( \rho^{\otimes n}\right) - \Phi_2^{\otimes \ceil{Rn}} \right\|_1\, ,
\label{twirling_decreases_trace_norm}
\ee
where the last inequality is a consequence of the contractivity of the trace norm under positive trace preserving maps~\cite{Ruskai1994}, or more mundanely of the triangle inequality applied to the integral representation~\eqref{twirling} of $\pazocal{T}_{\ceil{Rn}}$. The above relation implies that the distillation rate $R$ is achievable at error threshold $\epsilon$ by means of the separability-preserving operations $\pazocal{T}_{\ceil{Rn}} \circ \Lambda_n$, i.e.\ $E_{d,\,\sepp}(\rho)\geq R$. Taking the infimum in $R$ we obtain the sought inequality $E_{d,\,\sepp}(\rho) \geq E_{d,\,\pptp}(\rho)$.
\end{proof}

\section{Tempered robustness and tempered negativity}\label{tempered}

The main idea is to introduce a modified version of the standard $\K$-robustness in~\eqref{std_rob} by modifying the dual program in~\eqref{std_rob_dual}. Namely, for a pair of states $\rho,\omega$, let us define the \emph{\textbf{$\boldsymbol{\omega}$-tempered $\mathbfcal{K}$-robustness}} by
\begin{align}
1+2\tsr_{\!\K}(\rho | \omega) &\coloneqq \sup\left\{ \Tr X\rho:\ X\in \left[ -\id, \id\right]_{\K^*},\ \|X\|_\infty=\Tr X\omega \right\} , \label{tsr} \\
\tsr_{\!\K} (\rho ) &\coloneqq \tsr_{\!\K}(\rho|\rho)\, . \label{stsr}
\end{align}
Here, the operator interval $\left[ -\id, \id\right]_{\K^*}$ is defined by~\eqref{operator_interval}. Note that the constraint $\|X\|_\infty=\Tr X\omega$ can be rewritten as $-\left( \Tr X\omega\right)\id \leq X\leq \left( \Tr X\omega\right)\id$. Therefore, the expression in~\eqref{tsr} is a convex program, and even an SDP for the special case $\K=\PPT$. What this additional constraint is trying to tell us is that the support $\supp\omega$ of $\omega$ lies entirely within the eigenspace of $X$ corresponding to the eigenvalue with the largest modulus. At this point, a little thought shows that $\tsr_{\!\K}(\rho | \omega)$ depends in fact only on $\rho$ and $\supp \omega$. Analogously, $\tsr_{\!\K}(\rho)$ depends only on $\supp \rho$.

We also introduce a further quantity, the \emph{\textbf{$\boldsymbol{\omega}$-tempered negativity}}, defined by
\begin{align}
\tn(\rho | \omega) &\coloneqq \sup\left\{ \Tr X\rho:\, \left\|X^\Gamma\right\|_\infty \leq 1,\ \|X\|_\infty=\Tr X\omega \right\} , \label{tn} \\
\tn(\rho) &\coloneqq \tn(\rho|\rho)\, . \label{stn}
\end{align}
Exactly as above, it does not take long to realise that the expression in~\eqref{tn} is in fact an SDP, and that $\tn(\rho|\omega)$ depends only on $\rho$ and $\supp\omega$, while $\tn(\rho)$ depends only on $\supp\rho$. The corresponding \emph{\textbf{tempered logarithmic negativity}} is
\bb
\ten(\rho) \coloneqq \log_2 \tn(\rho)\, .
\label{ten}
\ee

The main elementary properties of the tempered robustness and negativity --- related to their monotonicity, multiplicativity, and various bounds between the quantities --- are gathered in the following proposition.

\begin{prop} \label{elementary_prop}
For $\K=\SEP, \PPT$ and for all pairs of states $\rho,\omega$ on a bipartite system $AB$, it holds that:
\begin{enumerate}[(a)]
\item $0\leq \tsr_{\!\K}(\rho|\omega)\leq R_\K^s(\rho)$, and $R_\K^s(\rho) = \sup_{\omega'} \tsr_\K(\rho|\omega')$;
\item $1\leq \tn(\rho|\omega)\leq \left\|\rho^\Gamma\right\|_1$, and $\left\|\rho^\Gamma\right\|_1 = \sup_{\omega'} \tn(\rho|\omega')$;
\item $\tsr_{\!\K}$ is monotonic under the simultaneous action of any $\K$-preserving map $\Lambda\in \kp\left(AB\to A'B'\right)$, in formula
\bb
\tsr_{\!\K} \left( \Lambda(\rho)\, \big|\, \Lambda(\omega) \right) \leq \tsr_{\!\K}\left(\rho|\omega\right) .
\ee
\item the inequalities
\bb
\tsr_{\! \SEP}(\rho|\omega) \geq \tsr_{\! \PPT}(\rho|\omega) \geq \frac{\tn(\rho|\omega)-1}{2}
\label{inequality_tsr_tn}
\ee
are satisfied.
\item $\tn$ is super-multiplicative and hence $\ten$ is super-additive, in formula
\begin{align}
\tn\left( \rho_1\otimes \rho_2 \big| \omega_1\otimes \omega_2\right) &\geq \tn(\rho_1|\omega_1)\, \tn(\rho_2|\omega_2) \, , \label{supermultiplicativity_tn}\\
\ten\!\left( \rho_1\otimes \rho_2 \big| \omega_1\otimes \omega_2\right) &\geq \ten(\rho_1|\omega_1) + \ten(\rho_2|\omega_2)\, , \label{superadditivity_ten}
\end{align}
for all states $\rho_1,\rho_2,\omega_1,\omega_2$.
\end{enumerate}
\end{prop}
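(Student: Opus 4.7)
The plan is to treat parts (a)--(e) one by one, each reducing to a manipulation of the variational programs~\eqref{tsr}--\eqref{ten} combined with duality tools already collected in Supplementary Note~\ref{general}.

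For parts (a) and (b), the inequalities $\tsr_\K(\rho|\omega) \leq R_\K^s(\rho)$ and $\tn(\rho|\omega) \leq \|\rho^\Gamma\|_1$ are immediate, since the extra constraint $\|X\|_\infty = \Tr X\omega$ in~\eqref{tsr} and~\eqref{tn} shrinks the feasible set relative to the unconstrained duals~\eqref{std_rob_dual} and~\eqref{logarithmic_negativity}. The trivial ansatz $X = \id$ lies in $[-\id, \id]_{\K^*}$, satisfies $\|\id^\Gamma\|_\infty = 1$ and $\|\id\|_\infty = 1 = \Tr \id\, \omega$, and gives objective value $1$, yielding $\tsr_\K(\rho|\omega) \geq 0$ and $\tn(\rho|\omega) \geq 1$. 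For the identities $R_\K^s(\rho) = \sup_{\omega'} \tsr_\K(\rho|\omega')$ and $\|\rho^\Gamma\|_1 = \sup_{\omega'} \tn(\rho|\omega')$, the strategy would be: take an $\epsilon$-optimal witness $X$ for the unconstrained program; when $\lambda_{\max}(X) = \|X\|_\infty$, pick $\omega'$ to be any state supported on the corresponding top eigenspace, so that $\Tr X\omega' = \|X\|_\infty$ by construction; when instead $-\lambda_{\min}(X) > \lambda_{\max}(X)$, replace $X$ by a renormalised convex combination $X' = (X + \alpha \id)/(1+\alpha)$ with $\alpha > 0$, which stays in $[-\id,\id]_{\K^*}$ (respectively still obeys $\|(X')^\Gamma\|_\infty \leq 1$) and shifts the spectrum upward, and then use continuity to keep $\Tr X'\rho$ arbitrarily close to the unconstrained supremum while forcing $\lambda_{\max}(X') = \|X'\|_\infty$.

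For (c), the key device is the adjoint $\Lambda^\dag$. Because $\Lambda$ is $\K$-preserving, $\Lambda^\dag$ sends $\K^*_{A'B'}$ into $\K^*_{AB}$; being the adjoint of a trace-preserving map, it is unital, hence it maps the operator interval $[-\id, \id]_{\K^*_{A'B'}}$ into $[-\id, \id]_{\K^*_{AB}}$. Given a near-optimal $Y$ for $\tsr_\K(\Lambda(\rho)|\Lambda(\omega))$, the ansatz $X \coloneqq \Lambda^\dag(Y)$ satisfies $\Tr X\rho = \Tr Y \Lambda(\rho)$, $\Tr X\omega = \Tr Y\Lambda(\omega) = \|Y\|_\infty$, and $\|X\|_\infty \leq \|Y\|_\infty$ by~\eqref{operator_norm_contraction}; combined with the automatic $\Tr X\omega \leq \|X\|_\infty$ for states, this pins $\|X\|_\infty = \Tr X\omega$ and makes $X$ feasible with the required objective. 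For (d), the first inequality rests on $\SEP \subseteq \PPT$, which reverses to $\PPT^* \subseteq \SEP^*$ and thereby enlarges the operator interval in~\eqref{tsr}; the second follows from the observation that, for every $W \in \PPT$, $|\Tr XW| = |\Tr X^\Gamma W^\Gamma| \leq \|X^\Gamma\|_\infty \|W^\Gamma\|_1 = \Tr W$ (using $W^\Gamma \in \Tp$), so every $X$ feasible for~\eqref{tn} lies in $[-\id, \id]_{\PPT^*}$.

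For (e), I would use a tensor-product ansatz. Near-optimal witnesses $X_i$ for $\tn(\rho_i|\omega_i)$ yield the candidate $X_1 \otimes X_2$, whose partial transpose with respect to the natural bipartition $A_1 A_2 | B_1 B_2$ factorises as $X_1^\Gamma \otimes X_2^\Gamma$, so $\|(X_1 \otimes X_2)^\Gamma\|_\infty \leq 1$ and $\|X_1 \otimes X_2\|_\infty = \|X_1\|_\infty \|X_2\|_\infty = \Tr X_1 \omega_1 \cdot \Tr X_2 \omega_2 = \Tr (X_1 \otimes X_2)(\omega_1 \otimes \omega_2)$, confirming feasibility for $\tn(\rho_1 \otimes \rho_2 | \omega_1 \otimes \omega_2)$. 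The objective factorises accordingly, and passing to the limit along optimising sequences gives~\eqref{supermultiplicativity_tn}; taking logarithms then yields~\eqref{superadditivity_ten}. The hardest step is the supremum identity in (a) and (b): extracting from a generic near-optimal $X$ a witness with $\lambda_{\max}(X) = \|X\|_\infty$ cannot be done by a naive shift $X \mapsto X + \lambda \id$, which exits the operator interval or inflates $\|X^\Gamma\|_\infty$, and the rescaled convex-combination-with-$\id$ trick needs a careful tuning of the parameter to trade spectral balance against loss in objective value; the remaining parts are essentially bookkeeping once the right adjoint or tensor-product ansatz is in place.
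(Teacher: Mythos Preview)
Your treatment of (c), (d), (e) and of the elementary bounds in (a) and (b) is essentially the paper's argument, carried out correctly. One small difference worth noting: in (d) you work with $W\in\PPT$ and so obtain $\tn\leq 1+2\tsr_{\PPT}$ directly, which is exactly the stated second inequality; the paper's proof argues with $\sigma\in\SEP$ and therefore literally establishes only the weaker outer bound $\tn\leq 1+2\tsr_{\SEP}$.

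Where your sketch diverges more substantially from the paper is in the supremum identities $R_\K^s(\rho)=\sup_{\omega'}\tsr_\K(\rho|\omega')$ and $\|\rho^\Gamma\|_1=\sup_{\omega'}\tn(\rho|\omega')$. The paper's device here is different: it first reduces to \emph{finite-rank} witnesses via local truncations $X_N=(P_A^N\otimes P_B^N)X(P_A^N\otimes P_B^N)$, checking that these stay feasible and recover the objective in the limit. This step is essential in infinite dimensions --- without it, $\lambda_{\max}(X)$ need not be an eigenvalue at all, so there is no ``top eigenspace'' on which to support $\omega'$. You omit this reduction. Moreover, your proposed repair for the case $-\lambda_{\min}(X)>\lambda_{\max}(X)$, namely $X'=(X+\alpha\id)/(1+\alpha)$, does not close the argument: to force $\lambda_{\max}(X')\geq -\lambda_{\min}(X')$ one needs $\alpha\geq \tfrac12\bigl(-\lambda_{\min}(X)-\lambda_{\max}(X)\bigr)$, which is bounded away from zero in the bad case, and then the loss $\Tr X\rho-\Tr X'\rho=\tfrac{\alpha}{1+\alpha}\bigl(\Tr X\rho-1\bigr)$ is not controllable. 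So the ``careful tuning'' you allude to cannot in fact be carried out as a simple continuity argument; the paper instead sidesteps this by the finite-rank reduction and then places $\omega_X$ on the extremal eigenspace.
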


\begin{proof}
We proceeed one claim at a time.
\begin{enumerate}[(a)]

\item Taking $X=\id$ in the definition of $\tsr_{\!\K}$~\eqref{tsr} yields immediately that $\tsr_{\!\K}(\omega|\tau)\geq 0$. Also, since we obtained~\eqref{tsr} by adding one more constraint to the dual program~\eqref{std_rob_dual} for the standard robustness, it is clear that the value of the the supremum can only decrease, implying that $\tsr_{\!\K}(\rho|\omega)\leq R_\K^s(\rho)$.

On the other hand, it is not difficult to verify that the operators $X$ in the dual formulation of $R_\K^s$~\eqref{std_rob_dual} can always be assumed to be compact and in fact even of finite rank. Indeed, thanks to the fact that $\HH_A$ and $\HH_B$ are separable Hilbert spaces, we can pick sequences of finite-dimensional projectors $(P_A^N)_{N\in \N}$ and $(P_B^N)_{N\in \N}$ such that $(P_A^N\otimes P_B^N) \rho_{AB} (P_A^N\otimes P_B^N)\tends{}{N\to\infty} \rho_{AB}$ in trace norm. For any given $X_{AB} \in [-\id,\id]_{\K^*}$, the finite-rank operators $X_N\coloneqq (P_A^N\otimes P_B^N) X_{AB} (P_A^N\otimes P_B^N)$ satisfy that $X_N\in [-\id,\id]_{\K^*}$, simply because $\sigma_{AB} \mapsto (P_A^N\otimes P_B^N) \sigma_{AB} (P_A^N\otimes P_B^N)$ sends $\K$ into itself and is trace non-increasing. Moreover, $\Tr[\rho X_N]\tends{}{N\to\infty} \Tr \rho X$.

This shows that $X$ in~\eqref{std_rob_dual} can be taken to be of finite rank. For any such $X$, there will exist a state $\omega_X$ such that $\|X\|_\infty = \Tr X\omega_X$; in fact, it suffices to have the support of $\omega_X$ span the eigenspace of $X$ corresponding to the eigenvalue with maximum modulus. Therefore,
\bbb
\sup_{\omega'} \tsr_{\!\K}(\rho|\omega') = \sup_{\substack{X\in [-\id, \id]_{\K^*},\, \omega': \\ \|X\|_\infty = \Tr X\omega'}} \Tr X\rho \geq \sup_{\substack{X\in [-\id, \id]_{\K^*},\\ \rk X<\infty}} \Tr X\rho = R_\K^s(\rho)\, .
\eee

\item The lower bound $\tn(\rho|\omega)\geq 1$ can be retrieved by setting $X=\id$ in~\eqref{tn}. The fact that $\tn(\rho|\omega)\leq \left\|\rho^\Gamma\right\|_1$ follows by comparing~\eqref{tn} with the dual form of the negativity on the rightmost side of~\eqref{logarithmic_negativity}. The equality $\left\|\rho^\Gamma\right\|_1 = \sup_{\omega'} \tn(\rho|\omega')$ is proved as for claim~(a). One starts by showing that the operator $X$ in the rightmost side of~\eqref{logarithmic_negativity} can be assumed to have finite rank. To see this, it suffices to observe that $X_N= (P_A^N\otimes P_B^N) X_{AB} (P_A^N\otimes P_B^N)$ defined as before satisfies that
\bbb
\left\|X_N^\Gamma\right\|_\infty = \left\| P_A^N \otimes \left( P_B^N\right)^\intercal X_{AB}^\Gamma\, P_A^N \otimes \left( P_B^N\right)^\intercal\right\|_1\leq \left\|X^\Gamma\right\|_1\, .
\eee
Since $\Tr[\rho X_N]\tends{}{N\to\infty} \Tr \rho X$, considering the sequence of finite-rank operators $(X_N)_N$ instead of $X$ in~\eqref{logarithmic_negativity} leads to the same value of the optimisation.

\item It suffices to write that
\bb
    \tsr_{\!\K} \left( \Lambda(\rho)\, \big|\, \Lambda(\omega) \right) &= \sup\left\{ \Tr X\Lambda(\rho):\ X\in \left[ -\id, \id\right]_{\K^*},\ \|X\|_\infty=\Tr X\Lambda(\omega) \right\} \\
    &\texteq{(i)} \sup\left\{ \Tr \Lambda^\dag(X)\rho:\ X\in \left[ -\id, \id\right]_{\K^*},\ \|X\|_\infty=\Tr \Lambda^\dag(X)\omega \right\} \\
    &\textleq{(ii)} \sup\left\{ \Tr Y\rho:\ Y\in \left[ -\id, \id\right]_{\K^*},\ \|Y\|_\infty=\Tr Y\omega \right\} \\
    &= \tsr_{\!\K}(\rho|\omega)\, .
\ee
Note that in~(i) we just used the definition of adjoint map. Justifying~(ii) requires a bit more care. We start by observing that if $X\in \left[ -\id, \id\right]_{\K^*}$ then $Y= \Lambda^\dag(X)$ satisfies
\bbb
\sup_{\sigma\in \K,\, \Tr \sigma=1} \left| \Tr Y\sigma \right| = \sup_{\sigma\in \K,\, \Tr \sigma=1} \left| \Tr X\Lambda(\sigma) \right| \leq \sup_{\sigma'\in \K,\, \Tr \sigma'=1} \left| \Tr X \sigma' \right| \leq 1\, ,
\eee
where the inequality holds because $\Lambda(\sigma)$ is a normalised quantum state and belongs to $\K$. (We remind the reader that our definition of $\K$-preserving maps imposes that any such map is also positive and trace preserving.) Moreover, $\left\|Y\right\|_\infty \leq \|X\|_\infty = \Tr X \Lambda(\omega) = \Tr Y \omega$ thanks to the positivity and unitality of $\Lambda^\dagger$ (see~\eqref{operator_norm_contraction}); this is in fact an equality, because on the other hand $\Tr Y\omega \leq \|Y\|_\infty \|\omega\|_1 = \|Y\|_\infty$. Since $Y=\Lambda^\dag(X)$ satisfies that $Y\in \left[ -\id, \id\right]_{\K^*}$ and moreover $\|Y\|_\infty=\Tr Y\omega$, we deduce the inequality in~(ii).

\item It is easy to see from~\eqref{tsr} that $\tsr_{\!\K}(\rho|\omega)$ is monotonically decreasing with respect to the inclusion ordering on the cone $\K$ for all fixed $\rho$ and $\omega$, meaning that $\K_1\subseteq \K_2$ implies that $\tsr_{\!\K_1}(\rho|\omega)\geq \tsr_{\!\K_2}(\rho|\omega)$. Since $\SEP\subseteq \PPT$, the first inequality in~\eqref{inequality_tsr_tn} follows.

We now move on to the second. Note that $\left\|X^\Gamma\right\|_\infty\leq 1$ 
entails that $X\in \left[-\id, \id\right]_{\SEP^*}$, simply because for all $\sigma\in \SEP$ with $\Tr\sigma=1$ one has that
\bbb
\left|\Tr [X\sigma]\right| = \left| \Tr\left[X^\Gamma \sigma^\Gamma\right] \right| \leq \left\|X^\Gamma\right\|_\infty \left\|\sigma^\Gamma\right\|_1\leq 1\, ,
\eee
where we remembered that $\sigma^\Gamma\geq 0$ and hence $\left\|\sigma^\Gamma \right\|_1=\Tr \sigma^\Gamma = \Tr \sigma =1$. Hence, the set on the right-hand side of~\eqref{tn} is contained in that on the right-hand side of~\eqref{tsr}, which shows that $\tn(\rho|\omega)\leq 1+2\tsr_{\!\SEP}(\rho|\omega)$.

\item To show the super-multiplicativity of $\tn$, it suffices to make a tensor product ansatz inside~\eqref{tn}, obtaining that
\bb
\tn\left( \rho_1\otimes \rho_2 \big| \omega_1\otimes \omega_2\right) &= \sup\left\{ \Tr X_{12} (\rho_1\otimes \rho_2):\ \left\| X_{12}^\Gamma \right\|_\infty \leq 1,\ \|X_{12}\|_\infty=\Tr X_{12}(\omega_1\otimes \omega_2) \right\} \\
&\textgeq{(iii)} \sup\left\{ \Tr (X_1\!\otimes\! X_2) (\rho_1\!\otimes \!\rho_2):\ \left\|X_i^\Gamma \right\|_\infty\leq 1,\ \left\|X_i\right\|_\infty = \Tr X_i \omega_i \right\} \\
&= \tn(\rho_1|\omega_1)\, \tn(\rho_2|\omega_2) \, .
\ee
Here, the inequality in~(iii) can be proved by noting that $\left\|X_i^\Gamma \right\|_\infty\leq 1$ entails that $\left\|(X_1\otimes X_2)^\Gamma \right\|_\infty = \left\|X_1^\Gamma \right\|_\infty \left\| X_2^\Gamma \right\|_\infty \leq 1$. The super-additivity of $\ten$ in~\eqref{superadditivity_ten} follows immediately. \qedhere
\end{enumerate}
\end{proof}

In addition to the basic properties established above, our main results will rely on one more technical property of the tempered quantities. This is a perturbative version of Proposition~\ref{elementary_prop}(a), allowing us to relate the robustness $R^s_\K(\rho')$ of a given state with the tempered robustness $\tsr_{\!\K}(\rho)$ of another state which is sufficiently close to it. The following lemma can quite rightly be regarded as lying at the heart of our method.

\begin{lemma}[(The $\epsilon$-lemma)] \label{crucial_lemma}
For all states $\rho,\rho'$ such that
\bb
\epsilon\coloneqq \frac12 \left\|\rho - \rho'\right\|_1 \leq \frac12\, ,
\ee
it holds that
\bb
R^s_\K (\rho') \geq \tsr_{\!\K}(\rho'|\rho) \geq (1-2\epsilon) \tsr_{\!\K}(\rho) - \epsilon
\label{crucial}
\ee
and also
\bb
\tn(\rho'|\rho) \geq (1-2\epsilon)\, \tn(\rho)\, .
\label{crucial_tn}
\ee
\end{lemma}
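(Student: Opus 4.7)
The first inequality $R^s_\K(\rho') \geq \tsr_{\!\K}(\rho'|\rho)$ is immediate from Proposition~\ref{elementary_prop}(a), which yields $R^s_\K(\rho') = \sup_{\omega'} \tsr_{\!\K}(\rho'|\omega')$. The substance therefore lies in the middle and last bounds. The plan is to exploit what I take to be the defining virtue of the tempered programs~\eqref{tsr} and~\eqref{tn}: the calibration constraint $\|X\|_\infty = \Tr X\omega$ depends only on the conditioning state $\omega$, so any $X$ that is near-optimal for $\tsr_{\!\K}(\rho)=\tsr_{\!\K}(\rho|\rho)$ is automatically \emph{feasible} for $\tsr_{\!\K}(\rho'|\rho)$, with no re-optimisation needed when $\rho$ is swapped for $\rho'$ in the first slot alone.

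Concretely, for arbitrary $\eta > 0$ I would pick $X \in [-\id,\id]_{\K^*}$ with $\|X\|_\infty = \Tr X\rho \geq 1 + 2\tsr_{\!\K}(\rho) - \eta$ (available by definition of the supremum). Plugging this $X$ into the definition~\eqref{tsr} of $\tsr_{\!\K}(\rho'|\rho)$ and applying the H\"older-type bound $|\Tr X(\rho-\rho')| \leq \|X\|_\infty \|\rho-\rho'\|_1 = 2\epsilon \|X\|_\infty$, I would obtain
\[
1 + 2\tsr_{\!\K}(\rho'|\rho) \geq \Tr X\rho' \geq \Tr X\rho - 2\epsilon\|X\|_\infty = (1-2\epsilon)\Tr X\rho \geq (1-2\epsilon)\bigl(1 + 2\tsr_{\!\K}(\rho) - \eta\bigr).
\]
Letting $\eta \to 0$ and rearranging delivers exactly $(1-2\epsilon)\tsr_{\!\K}(\rho) - \epsilon$. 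The second claim $\tn(\rho'|\rho) \geq (1-2\epsilon)\tn(\rho)$ falls out of the identical two-line argument applied to an $X$ near-optimal for $\tn(\rho)$: the constraints $\|X^\Gamma\|_\infty \leq 1$ and $\|X\|_\infty = \Tr X\rho$ are both inherited by the program defining $\tn(\rho'|\rho)$, and the same perturbation estimate gives $\tn(\rho'|\rho) \geq \Tr X\rho' \geq (1-2\epsilon)\Tr X\rho$, whence the conclusion on letting $\Tr X\rho \to \tn(\rho)$.

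I expect no deep obstacle. The whole argument rests on the observation that the calibration constraint ties $\|X\|_\infty$ directly to the value of the program, so the perturbation error $\|X\|_\infty \|\rho-\rho'\|_1$ — which would otherwise be essentially uncontrolled, since $\|X\|_\infty$ can be as large as the dimension, recall the discussion around~\eqref{meth_logarithmic_negativity_variational} — becomes proportional to the very quantity being estimated, yielding multiplicative rather than additive stability. This is presumably the whole point of the tempered construction, and this lemma is its technical payoff. The only minor care required is that the suprema in~\eqref{tsr} and~\eqref{tn} need not be attained in infinite dimensions, which is exactly why I would work with the $\eta$-approximating witness $X$ above rather than with an exact optimiser.
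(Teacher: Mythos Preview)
Your proof is correct and matches the paper's own argument essentially line for line: the paper also invokes Proposition~\ref{elementary_prop}(a) for the first inequality, then uses the feasibility of the same witness $X$ for both programs together with H\"older's inequality and the calibration $\|X\|_\infty = \Tr X\rho$ to turn the perturbation into the multiplicative factor $(1-2\epsilon)$. The only cosmetic difference is that the paper keeps the supremum throughout rather than passing to an $\eta$-approximate optimiser, but the content is identical.
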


\begin{proof}
The first inequality in~\eqref{crucial} is just an application of Proposition~\ref{elementary_prop}(a). As for the second, using the definition~\eqref{tsr} of $\tsr_{\!\K}$ as well as H\"older's inequality we see that
\bb
1+2\tsr_{\!\K}(\rho' | \rho) &= \sup\left\{ \Tr X\rho':\ X\in \left[ -\id, \id\right]_{\K^*},\ \|X\|_\infty=\Tr X\rho \right\} \\
&\geq \sup\left\{ \Tr X\rho - \left\|X\right\|_\infty \left\|\rho - \rho'\right\|_1 :\ X\in \left[ -\id, \id\right]_{\K^*},\ \|X\|_\infty=\Tr X\rho \right\} \\
&= \sup\left\{ (1-2\epsilon)\Tr X\rho :\ X\in \left[ -\id, \id\right]_{\K^*},\ \|X\|_\infty=\Tr X\rho \right\} \\
&= (1-2\epsilon) \left( 1+ 2\tsr_{\!\K}(\rho)\right) ,
\ee
which becomes~\eqref{crucial} upon elementary algebraic manipulations. The proof of~\eqref{crucial_tn} is entirely analogous:
\bb
\tn(\rho'|\rho) &= \sup\left\{ \Tr X\rho':\ \left\|X^\Gamma\right\|_\infty \leq 1,\ \|X\|_\infty=\Tr X\rho \right\} \\
&\geq \sup\left\{ \Tr X\rho - \left\|X\right\|_\infty \left\|\rho - \rho'\right\|_1:\ \left\|X^\Gamma\right\|_\infty \leq 1,\ \|X\|_\infty=\Tr X\rho \right\} \\
&= \sup\left\{ (1-2\epsilon)\Tr X\rho :\ \left\|X^\Gamma\right\|_\infty \leq 1,\ \|X\|_\infty=\Tr X\rho \right\} \\
&= (1-2\epsilon)\, \tn(\rho)\, .
\ee
This concludes the proof.
\end{proof}

\section{Main results: irreversibility of entanglement manipulation} \label{main_results_note}

Here we state our main results concerning the theory of entanglement manipulation for quantum states. The extension of the argument to quantum channels will be tackled in full detail separately~\cite{no_second_law_channels}.

\begin{thm} \label{main_tool_thm}
For $\K=\SEP$ or $\K=\PPT$, the entanglement cost under $\K$-preserving operations satisfies that
\bb
\inf_{\epsilon\, \in\, [0,\,1/2)} E_{c,\,\kp}^\epsilon (\rho) \geq \tl_{\!\K}(\rho)\, ,
\label{main_tool_formal}
\ee
where
\bb
\tl_{\!\K}(\rho) \coloneqq \limsup_{n\to\infty} \frac1n \log_2 \left( 1+\tsr_{\!\K}\left(\rho^{\otimes n}\right) \right) \geq \ten(\rho)\, ,
\label{main_tool}
\ee
and the tempered logarithmic negativity $\ten$ is defined by~\eqref{ten}.
\end{thm}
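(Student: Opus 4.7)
The plan is to extend the argument sketched in the Methods section of the main text in two directions: to cover both $\K=\SEP$ and $\K=\PPT$ simultaneously, and to accommodate nonzero asymptotic error $\epsilon \in [0, 1/2)$. The statement breaks naturally into two pieces --- the lower bound $E_{c,\,\kp}^\epsilon(\rho) \geq \tl_{\!\K}(\rho)$ for each such $\epsilon$, and the comparison $\tl_{\!\K}(\rho) \geq \ten(\rho)$ --- with the former being the main substance and the latter a quick corollary of Proposition~\ref{elementary_prop}.

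For the lower bound on the entanglement cost, I would fix $\epsilon\in[0,1/2)$ and an achievable dilution rate $R>0$ at error $\epsilon$, extracting from~\eqref{cost} a sequence of $\K$-preserving maps $\Lambda_n \in \kp\!\left(A_0^{\lfloor Rn \rfloor} B_0^{\lfloor Rn \rfloor} \to A^n B^n\right)$ with $\epsilon_n \coloneqq \tfrac{1}{2}\left\|\Lambda_n\!\left(\Phi_2^{\otimes \lfloor Rn\rfloor}\right) - \rho^{\otimes n}\right\|_1$ and $\limsup_n \epsilon_n \leq \epsilon$. The central step, which replaces the negativity estimate~\eqref{meth_negativity_transformed_maxent} of the main-text sketch, is the monotonicity of $R^s_\K$ under $\K$-preserving maps (Lemma~\ref{monotonicity_lemma}) combined with the explicit evaluation~\eqref{robustness_k_ebits}:
\begin{equation*}
R^s_\K\!\left(\Lambda_n\!\left(\Phi_2^{\otimes \lfloor Rn \rfloor}\right)\right) \leq R^s_\K\!\left(\Phi_2^{\otimes \lfloor Rn \rfloor}\right) = 2^{\lfloor Rn \rfloor} - 1,
\end{equation*}
which treats $\SEP$ and $\PPT$ uniformly and effectively subsumes the Vidal--Tarrach decomposition. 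I would then invoke the $\epsilon$-lemma (Lemma~\ref{crucial_lemma}), valid once $\epsilon_n \leq 1/2$ (eventually true since $\epsilon<1/2$), to lower bound the very same quantity by $(1-2\epsilon_n)\tsr_{\!\K}(\rho^{\otimes n}) - \epsilon_n$. Chaining the two estimates and rearranging gives $1+\tsr_{\!\K}(\rho^{\otimes n}) \leq \left(2^{\lfloor Rn\rfloor} - \epsilon_n\right)/(1-2\epsilon_n)$; taking $\tfrac{1}{n}\log_2$ and passing to $\limsup$ then yields $\tl_{\!\K}(\rho) \leq R$, since the correction $-\tfrac{1}{n}\log_2(1-2\epsilon_n)$ vanishes as $\epsilon_n$ is asymptotically bounded away from $1/2$. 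An infimum over achievable $R$, followed by one over $\epsilon$, delivers the desired bound.

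For the comparison $\tl_{\!\K}(\rho) \geq \ten(\rho)$: Proposition~\ref{elementary_prop}(d) gives $\tsr_{\!\K}(\rho^{\otimes n}) \geq \tfrac{\tn(\rho^{\otimes n})-1}{2}$ for both choices of $\K$ (using $\tsr_{\!\SEP}\geq \tsr_{\!\PPT}$ in the separability case), while part~(e) supplies the super-multiplicativity $\tn(\rho^{\otimes n}) \geq \tn(\rho)^n$. Combining, $1+\tsr_{\!\K}(\rho^{\otimes n}) \geq \tn(\rho)^n/2$; taking $\tfrac1n \log_2$ followed by $\limsup$ recovers $\tl_{\!\K}(\rho) \geq \ten(\rho)$. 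The main delicate point I anticipate is that the finite-error extension crucially exploits the linear-in-$\epsilon$ perturbative form of the $\epsilon$-lemma: any merely uniformly continuous estimate on $\tsr_{\!\K}$ would produce an error growing with $n$ and fail to close the argument.
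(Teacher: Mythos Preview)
Your proposal is correct and follows essentially the same route as the paper's proof: both combine the exact value $R^s_\K(\Phi_2^{\otimes k}) = 2^k-1$, monotonicity of $R^s_\K$ under $\K$-preserving maps, and the $\epsilon$-lemma to bound $1+\tsr_{\!\K}(\rho^{\otimes n})$ by $2^{\lfloor Rn\rfloor}/(1-2\epsilon_n)$ up to asymptotically negligible terms, and both deduce $\tl_{\!\K}(\rho)\geq \ten(\rho)$ from Proposition~\ref{elementary_prop}(d) and~(e). The only cosmetic difference is the order of algebraic rearrangement; the paper writes the chain as $2^{\lfloor Rn\rfloor}\geq (1-2\epsilon_n)(1+\tsr_{\!\K}(\rho^{\otimes n}))$ directly rather than solving for $1+\tsr_{\!\K}(\rho^{\otimes n})$, but the limiting step is identical.
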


\begin{rem}
An interesting consequence of the above result is that the tempered logarithmic negativity is a lower bound on the standard entanglement cost under local operations and classical communication (LOCC), denoted $E_{c,\, \locc}$, in formula
\bb
E_{c,\, \locc}(\rho_{AB}) \geq \ten(\rho_{AB}) \qquad \forall\ \rho_{AB}\, .
\ee
The entanglement cost under LOCC is a notoriously hard quantity to compute; it is given by the regularised entanglement of formation~\cite{Hayden-EC}, and the regularisation is known to be necessary due to Hastings's counterexample to the additivity conjectures~\cite{Shor2004, Hastings2008}. Previously known lower bounds include the regularised relative entropy of entanglement~\cite{Vedral1997} and the squashed entanglement~\cite{Tucci1999, squashed, faithful}, both of which are extremely hard to evaluate in general (albeit for different reasons). 
The former can be in turn lower bounded by either Piani's measured relative entropy of entanglement~\cite{Piani2009}, which has the advantage of doing away with regularisations, or by the $E_\eta$ measure recently proposed by Wang and Duan~\cite{irreversibility-PPT}, which is particularly convenient computationally because it is given by a semi-definite program (SDP). Both of these lower bounds on the LOCC entanglement cost, that inferred by Piani's results and that relying on $E_\eta$, are quite useful, but are known to be weaker than that given by the regularised relative entropy of entanglement.

The tempered negativity provides us with an independent lower bound on the LOCC entanglement cost that can strictly improve on the regularised relative entropy one. This latter fact will be apparent from the proof of Theorem~\ref{irreversibility_thm}. Also, since it is also given by an SDP, our bound is still computationally friendly. We are aware of no other quantity possessing these properties.
\end{rem}

\begin{proof}[Proof of Theorem~\ref{main_tool_thm}]
The following argument could be marginally simplified at the level of notation by resorting to the results of Brand\~{a}o and Plenio~\cite{BrandaoPlenio2}. However, for the sake of readability we prefer to give a more direct and self-contained proof.

Call $AB$ the bipartite system where $\rho$ lives. Let $R$ be an achievable rate for the entanglement cost $E_{c,\, \kp}^\epsilon (\rho)$ at some error threshold $\epsilon\in [0,1/2)$. Consider a sequence of operations $\Lambda_n\in \kp_{A_0^{\floor{Rn}} B_0^{\floor{Rn}}\to A^n B^n}$, with $A_0,B_0$ being single-qubit systems, such that
\bb
\epsilon_n\coloneqq \frac12 \left\| \Lambda_n\left( \Phi_2^{\otimes \floor{Rn}}\right) - \rho^{\otimes n} \right\|_1
\label{main_tool_proof_eq1}
\ee
with
\bb
\limsup_{n\to\infty} \epsilon_n\leq \epsilon < \frac12\, .
\label{main_tool_proof_eq2}
\ee
For all sufficiently large $n$, we then write
\bb
2^{\floor{Rn}} &\texteq{(i)} 1 + R_\K^s\left( \Phi_2^{\otimes \floor{Rn}}\right) \\
&\textgeq{(ii)} 1 + R_\K^s\left( \Lambda_n\left(\Phi_2^{\otimes \floor{Rn}}\right)\right) \\
&\textgeq{(iii)} (1-2\epsilon_n) \left( 1+\tsr_{\!\K}\left(\rho^{\otimes n}\right) \right) + \epsilon_n \\
&\geq (1-2\epsilon_n) \left( 1+\tsr_{\!\K}\left(\rho^{\otimes n}\right) \right) \\
\label{epsilon_lemma_at_work}
\ee
Here, in~(i) we just recalled the value of the standard robustness of maximally entangled states~\eqref{robustness_k_ebits}, (ii)~comes from the monotonicity of $R_\K^s$ under $\K$-preserving operations (Lemma~\ref{monotonicity_lemma}), and (iii)~is an application of the $\epsilon$-lemma (Lemma~\ref{crucial_lemma}). Taking the logarithm, dividing by $n$, and computing the limit for $n\to\infty$ yields
\bb
R &= \lim_{n\to\infty} \frac{\floor{Rn}}{n} \\
&\geq \limsup_{n\to\infty} \frac1n \log_2 \left( (1-2\epsilon_n) \left( 1+\tsr_{\!\K}\left(\rho^{\otimes n}\right) \right) \right) \\
&\geq \limsup_{n\to\infty} \frac1n \log_2 \left( 1+\tsr_{\!\K}\left(\rho^{\otimes n}\right) \right) + \liminf_{n\to\infty} \frac1n \log_2 (1-2\epsilon_n) \\
&\texteq{(iv)} \limsup_{n\to\infty} \frac1n \log_2 \left( 1+\tsr_{\!\K}\left(\rho^{\otimes n}\right) \right) \\
&= \tl_{\!\K} (\rho) \, .
\ee
where (iv)~is a consequence of the fact that $\epsilon_n$ is bounded away from $1/2$, as per~\eqref{main_tool_proof_eq2}. This completes the proof of the first inequality~\eqref{main_tool_formal}.

As for the second inequality~\eqref{main_tool}, we observe that
\bb
\tl_{\!\K} (\rho) &= \limsup_{n\to\infty} \frac1n \log_2 \left( 1+\tsr_{\!\K}\left(\rho^{\otimes n}\right) \right) \\
&\textgeq{(v)} \limsup_{n\to\infty} \frac1n \log_2\frac{\tn\left(\rho^{\otimes n}\right) + 1}{2} \\
&\geq \limsup_{n\to\infty} \left( \frac1n \log_2 \tn\left(\rho^{\otimes n}\right) - \frac1n \right) \\
&\textgeq{(vi)} \limsup_{n\to\infty} \left( \frac1n \log_2 \left( \tn\left(\rho\right)^n \right) - \frac1n \right) \\
&\texteq{(vii)} \ten(\rho)\, .
\ee
Here, (v)~is an application of the lower bound in Proposition~\ref{elementary_prop}(d), in~(vi) we leveraged the super-multiplicativity of the tempered negativity (Proposition~\ref{elementary_prop}(e)), and finally~(vii) is just the definition~\eqref{ten} of tempered logarithmic negativity.
\end{proof}

Before we state and prove our result on the irreversibility of entanglement, we need to recall and discuss two well-known bounds on the distillable entanglement. Lower bounds on $E_{d,\,\kp}(\rho)$ can be obtained by looking at smaller classes of operations included in the set of all $\K$-preserving ones. A typical choice is the set of local operations assisted by one-way classical communication, say from Alice to Bob, denoted with $\locc_\to$. In this setting, Devetak and Winter's hashing inequality~\cite{devetak2005} states that
\bb
E_{d,\,\locc_\to}(\rho_{AB}) \geq I_{\mathrm{coh}}(A\rangle B)_\rho \coloneqq S(\rho_B) - S(\rho_{AB})\, ,
\label{hashing}
\ee
where $S(\omega) \coloneqq -\Tr \omega \log_2\omega$ is the von Neumann entropy, and $\rho_B\coloneqq \Tr_A \rho_{AB}$ is the reduced state of $\rho_{AB}$ on Bob's side. Since local operations assisted by one-way classical communication are both non-entangling and PPT-preserving, in formula $\locc_\to\subseteq \kp$, we see that $E_{d,\,\locc_\to}(\rho_{AB})\leq E_{d,\, \kp}(\rho_{AB})$. In particular, the rightmost side of the hashing inequality~\eqref{hashing} lower bounds the distillable entanglement under $\K$-preserving operations, i.e.
\bb
E_{d,\, \kp}(\rho_{AB}) \geq I_{\mathrm{coh}}(A\rangle B)_\rho\, .
\label{hashing_kp}
\ee
To establish an upper bound on $E_{d,\, \kp}(\rho_{AB})$, instead, we can introduce a relative entropy measure defined by~\cite{Vedral1997}
\bb
E_{r,\,\K}(\rho_{AB}) \coloneqq \inf_{\sigma_{AB}\in \K_{AB} \cap \D(\HH)} D(\rho_{AB}\|\sigma_{AB})\, ,
\label{relent_kness}
\ee
Since $\K_{AB}\otimes \K_{A'B'} \subseteq \K_{AA'BB'}$, the function $E_{r,\, \K}$ is sub-additive, and then Fekete's lemma~\cite{Fekete1923} implies that its regularisation
\bb
E_{r,\,\K}^\infty(\rho_{AB}) \coloneqq \lim_{n\to\infty} \frac1n E_{r,\,\K}\left(\rho_{AB}^{\otimes n}\right) ,
\label{Er_regularised}
\ee
is well defined and satisfies that $E_{r,\,\K}^\infty(\rho_{AB})\leq E_{r,\,\K}(\rho_{AB})$. It does not take long to realise that $E_{r,\,\K}$ is monotonic under $\K$-preserving operations. This amounts to an elementary observation once one remembers that the relative entropy is non-increasing under the simultaneous application of any positive trace preserving map~\cite{Alex2017}. Since $E_{r,\, \K}$ is also asymptotically continuous~\cite{Donald1999} (see also~\cite[Lemma~7]{tightuniform}), its regularisation can be shown to be an upper bound on the distillable entanglement under $\K$-preserving operations~\cite{Vedral1998,hayashi_book}:
\begin{equation}\label{eq:distillable_upper_Er}
E_{d,\,\kp}(\rho_{AB}) \leq E_{r,\,\K}^\infty(\rho_{AB}).
\end{equation}

We are now ready to make use of the above Theorem~\ref{main_tool_thm} to prove irreversibility of entanglement manipulation under both non-entangling and PPT-preserving operations. To this end, according to Definition~\ref{reversibility_def} (cf.~\eqref{trivial_0}) it suffices to exhibit an example of a bipartite state $\rho_{AB}$ for which $E_{d,\, \kp}(\rho_{AB}) < E_{c,\, \kp}(\rho_{AB})$. Our candidate is a two-qutrit state, with Hilbert space $\HH_A\otimes \HH_B = \C^3\otimes \C^3$. Denote the local computational basis of the two qutrits $A,B$ with $\{\ket{j}\}_{j=1,2,3}$. Define the projector onto the maximally correlated subspace and the maximally entangled state by
\bb
P_3\coloneqq \sum_{j=1}^3 \ketbra{jj}\, ,\qquad \ket{\Phi_3} = \frac{1}{\sqrt3} \sum_{j=1}^3 \ket{jj}\, ,\qquad \Phi_3\coloneqq \ketbra{\Phi_3}\, ,
\label{P_3_and_Phi_3}
\ee
respectively. Then, construct the state
\bb
\omega_3 = \omega_3^{AB} \coloneqq \frac12 \left( P_3 - \Phi_3 \right) .
\label{omega_3}
\ee
We now show the following, proving and extending Theorem 1 from the main text of the paper.

\begin{thm} \label{irreversibility_thm}
The two-qutrit state $\omega_3$ defined by~\eqref{omega_3} satisfies that
\bb
E_{d,\, \sepp}(\omega_3) = E_{d,\, \pptp}(\omega_3) = \log_2 \frac32 \approx 0.585
\label{omega_3_distillable}
\ee
but
\bb
E_{c,\, \sepp}^\epsilon (\omega_3) = E_{c,\, \pptp}^\epsilon (\omega_3) = 1
\label{omega_3_cost}
\ee
for all $\epsilon\in [0,\, 1/2)$. In particular, the resource theory of entanglement is irreversible under either non-entangling or PPT-preserving operations. 
\end{thm}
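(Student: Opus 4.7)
The plan is to establish four matching bounds --- an upper and a lower bound each for $E_{c,\kp}^\epsilon(\omega_3)$ and $E_{d,\kp}(\omega_3)$ --- in a way that applies uniformly to both $\K=\SEP$ and $\K=\PPT$, and then to read off the strict gap $\log_2(3/2)<1$.

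The most substantial step is the lower bound $E_{c,\kp}^\epsilon(\omega_3)\geq 1$ for every $\epsilon\in[0,1/2)$, for which Theorem~\ref{main_tool_thm} does the heavy lifting: it suffices to show $\ten(\omega_3)\geq 1$. I would establish this by plugging the ansatz $X_3\coloneqq 2P_3-3\Phi_3$ into the program~\eqref{tn}. The block decomposition $\C^3\otimes\C^3=\Span\{\ket{ii}\}\oplus\Span\{\ket{ij}+\ket{ji}:i<j\}\oplus\Span\{\ket{ij}-\ket{ji}:i<j\}$ simultaneously diagonalises $X_3$ and its partial transpose $X_3^\Gamma=2P_3-F$ (with $F$ the qutrit swap, so that $\Phi_3^\Gamma=F/3$): on the diagonal block $X_3$ has eigenvalue $-1$ on $\ket{\Phi_3}$ and $+2$ on $\supp\omega_3$ (the two-dimensional orthocomplement of $\ket{\Phi_3}$ inside $P_3$), while $X_3^\Gamma$ is identically $+1$; on the symmetric and antisymmetric off-diagonal blocks $X_3$ vanishes while $X_3^\Gamma$ is identically $-1$ and $+1$ respectively. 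Hence $\|X_3^\Gamma\|_\infty=1$ and $\|X_3\|_\infty=2$, with the top-modulus eigenspace of $X_3$ coinciding with $\supp\omega_3$, so that $\Tr[X_3\omega_3]=2=\|X_3\|_\infty$ and both constraints of~\eqref{tn} are satisfied, yielding $\ten(\omega_3)\geq 1$.

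For the matching upper bound $E_{c,\kp}^\epsilon(\omega_3)\leq 1$, I would use the inclusion $\locc\subseteq\sepp\cap\pptp$ together with the Hayden--Horodecki--Terhal identity that the LOCC entanglement cost equals the regularised entanglement of formation $E_f^\infty$, chaining $E_{c,\kp}^\epsilon(\omega_3)\leq E_{c,\kp}^0(\omega_3)\leq E_{c,\locc}(\omega_3)=E_f^\infty(\omega_3)\leq E_f(\omega_3)$, the first step being monotonicity in $\epsilon$. A direct expansion verifies the convex decomposition
\begin{equation*}
\omega_3=\frac13\sum_{1\leq i<j\leq 3}\ketbra{\psi_{ij}},\qquad \ket{\psi_{ij}}\coloneqq\tfrac{1}{\sqrt{2}}\left(\ket{ii}-\ket{jj}\right),
\end{equation*}
and since each $\ket{\psi_{ij}}$ has Schmidt coefficients $(1/\sqrt{2},1/\sqrt{2})$ and hence entanglement entropy $1$, we get $E_f(\omega_3)\leq 1$. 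For the distillable entanglement, the upper bound $E_{d,\kp}(\omega_3)\leq\log_2(3/2)$ follows from~\eqref{eq:distillable_upper_Er} with the separable ansatz $P_3/3$: since $\omega_3$ has spectrum $\{1/2,1/2,0,\ldots,0\}$ and $\supp\omega_3\subseteq\supp P_3$, a quick computation gives $D(\omega_3\|P_3/3)=-1+\log_2 3=\log_2(3/2)$. The matching lower bound follows from the hashing inequality~\eqref{hashing_kp}: with Bob's reduced state $\omega_3^B=I_3/3$ of entropy $\log_2 3$ and $S(\omega_3)=1$, we obtain $E_{d,\kp}(\omega_3)\geq I_{\mathrm{coh}}(A\rangle B)_{\omega_3}=\log_2(3/2)$. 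Assembling the four bounds establishes the theorem uniformly for $\K\in\{\SEP,\PPT\}$ and delivers the advertised irreversibility.

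The main obstacle is pinpointing the ansatz $X_3$ for the SDP defining $\ten$: the conditions $\|X_3^\Gamma\|_\infty\leq 1$ and $\|X_3\|_\infty=\Tr[X_3\omega_3]$ pull in opposite directions, and the whole method hinges on the non-trivial coincidence that the top eigenspace of $X_3=2P_3-3\Phi_3$ coincides with $\supp\omega_3$ while $X_3^\Gamma$ still has spectrum in $[-1,1]$. Once this $X_3$ is in hand, every remaining step --- the convex decomposition, the relative entropy computation, and the coherent information computation --- reduces to one-line verifications.
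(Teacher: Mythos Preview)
Your proof is correct. The lower bound on the entanglement cost via the ansatz $X_3=2P_3-3\Phi_3$ in Theorem~\ref{main_tool_thm}, and both bounds on the distillable entanglement (hashing inequality below, relative entropy with the ansatz $P_3/3$ above), are exactly what the paper does.

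The one genuine difference is in the \emph{upper} bound $E_{c,\kp}^\epsilon(\omega_3)\leq 1$. You invoke the inclusion $\locc\subseteq\kp$ together with the Hayden--Horodecki--Terhal identity $E_{c,\locc}=E_f^\infty$ and the explicit decomposition $\omega_3=\tfrac13\sum_{i<j}\ketbra{\psi_{ij}}$ into Bell-like states of entropy $1$, giving $E_f(\omega_3)\leq 1$. The paper instead builds an explicit non-entangling map $\Lambda(X)=\Tr[X\Phi_2]\,\omega_3+\Tr[X(\id-\Phi_2)]\,\tau_3$ and verifies that $\frac12(\omega_3+\tau_3)$ is separable via a group-averaging argument, thereby obtaining the stronger conclusion $E_{c,\,\sepp}^{\mathrm{exact}}(\omega_3)\leq 1$ (and then $E_{c,\pptp}^\epsilon\leq E_{c,\sepp}^\epsilon$ via Lemma~\ref{sepp_vs_pptp_lemma}). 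Your route is shorter and more self-contained --- it avoids the somewhat delicate separability verification --- but it relies on the external (and non-trivial) Hayden--Horodecki--Terhal theorem, and it does not deliver the zero-error statement $E_{c,\kp}^{\mathrm{exact}}(\omega_3)=1$ that the paper records as a byproduct. For the purpose of proving Theorem~\ref{irreversibility_thm} as stated, however, your argument is entirely adequate.
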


\begin{rem}
The above result not only guarantees that the entanglement cost of the state $\omega_3$ under non-entangling operations is $1$. It also establishes a `pretty strong' converse~\cite{pretty-strong} for this value of the rate. Namely, every protocol that attempts to prepare $\omega_3$ from entanglement bits at a rate smaller than $1$ must incur an asymptotic error that is not only non-vanishing, but actually larger than a constant. This constant is $1/2$ in the current formulation of Theorem~\ref{irreversibility_thm}. However, we will see in Lemma~\ref{error_rate_tradeoff_lemma} that a careful analysis actually yields a slightly larger value of $2/3$. An even stronger statement (strong converse) can be shown for distillable entanglement, where no error smaller than $1$ can improve the transformation rates whatsoever. For simplicity, we have omitted these extensions from the statement of Theorem~\ref{irreversibility_thm}, and we instead refer the interested reader to Supplementary Note~\ref{strong_converses_note} for a more in-depth discussion of (pretty) strong converses and error-rate trade-offs.
\end{rem}

\begin{proof}[Proof of Theorem~\ref{irreversibility_thm}]
We have that
\begin{equation}
\log_2 \frac32 \texteq{(i)} I_{\mathrm{coh}}\left( A\rangle B\right)_{\omega_3} \,\textleq{(ii)}\, E_{d,\, \kp}(\omega_3) \,\textleq{(iii)}\, E_{r,\, \K}^\infty (\omega_3) \,\textleq{(iv)}\, E_{r,\, \K} (\omega_3) \,\textleq{(v)}\, D\left( \omega_3 \| P_3/3 \right) \,\texteq{(vi)}\, \log_2 \frac32\, .
\end{equation}
Here, (i)~is an elementary computation, (ii)~follows from the hashing inequality~\eqref{hashing_kp},
(iii)~is a consequence of the upper bound on distillable entanglement in~\eqref{eq:distillable_upper_Er}, (iv)~descends from the aforementioned sub-additivity of the relative entropy of $\K$-ness, (v)~is deduced by taking as ansatz in~\eqref{relent_kness} the state $\sigma_{AB} = P_3/3\in \SEP_{AB}\subseteq \PPT_{AB}$, and finally (vi)~comes again from a direct calculation. This proves~\eqref{omega_3_distillable}.

As for the entanglement cost, irreversibility of entanglement under $\K$-preserving operations hinges on the crucial inequality $E_{c,\, \kp}^\epsilon(\omega_3)\geq 1$. Hereafter, $\epsilon\in [0,\,1/2)$ is a fixed constant. Thanks to Theorem~\ref{main_tool_thm}, it suffices to show that $\tn(\omega_3)\geq 2$. To this end, using the notation defined in~\eqref{P_3_and_Phi_3}, let us consider the operator
\bb
X_3 \coloneqq 2P_3 - 3 \Phi_3\, ,
\label{X_3}
\ee
Its eigenvalues are $2$ (with multiplicity $2$), $0$ (with multiplicity $6$) and $-1$ (with multiplicity $1$). Since $X_3$ is normal (i.e.\ it commutes with its adjoint --- in fact, $X_3$ is Hermitian), its operator norm coincides with the maximum modulus of an eigenvalue. Therefore,
\bb
\|X_3\|_\infty = 2\, .
\label{irreversibility_proof_eq3}
\ee
Calling $F_3\coloneqq \sum_{i,j=1}^3 \ketbraa{ij}{ji}$ the swap operator, it does not take long to verify that the partial transpose of $X_3$ evaluates to
\bb
X_3^\Gamma = 2P_3 - F_3\, .
\label{irreversibility_proof_eq4}
\ee
Since $X_3^\Gamma$ has eigenvalues $+1$ (with multiplicity $6$) and $-1$ (with multiplicity $3$),
\bb
\left\|X_3^\Gamma\right\|_\infty = 1\, .
\label{irreversibility_proof_eq5}
\ee
Also,
\bb
\Tr X_3\omega_3 = 2 = \|X_3\|_\infty\, .
\label{irreversibility_proof_eq6}
\ee
Thanks to~\eqref{irreversibility_proof_eq5} and~\eqref{irreversibility_proof_eq6}, we see immediately that $X$ is a suitable ansatz for~\eqref{stn}. Using it, we find that $\tn(\omega_3) \geq 2$. Thanks to Theorem~\ref{main_tool_thm}, this implies that
\bb
E_{c,\, \kp}(\omega_3)\geq \ten(\omega_3) = \log_2 \tn(\omega_3)\geq 1\, .
\label{decisive}
\ee
This shows that the theory of entanglement manipulation is irreversible under $\K$-preserving operations, i.e.\ under either non-entangling or PPT-preserving operations.

For completeness we now show that the inequalities in~\eqref{decisive} are in fact all tight; this will establish~\eqref{omega_3_cost} and conclude the proof. Start by observing that
\bb
E_{c,\,\pptp}^\epsilon (\omega_3) \textleq{(vii)} E_{c,\,\sepp}^\epsilon (\omega_3) \textleq{(viii)} E_{c,\,\sepp}^{\mathrm{exact}}(\omega_3)\, ,
\ee
where (vii)~follows from Lemma~\ref{sepp_vs_pptp_lemma}, while (viii)~is an application of the elementary inequality~\eqref{trivial_0}. We now argue that $E_{c,\,\sepp}^{\mathrm{exact}}(\omega_3)\leq 1$, by providing an explicit example of a non-entangling operation $\Lambda$ from a two-qubit to a two-qutrit system such that $\Lambda(\Phi_2) = \omega_3$. Construct
\bb
\Lambda(X) \coloneqq \Tr[X\Phi_2] \omega_3 + \Tr\left[X (\id-\Phi_2)\right] \tau_3\, ,
\ee
where $\tau_3\coloneqq \frac{\id-P_3}{6} = \frac16 \sum_{j\neq k} \ketbra{jk}$. Since $\left\{\Tr[\sigma \Phi_2]:\, \sigma\in \SEP\cap \D(\HH)\right\} = [0,\, 1/2]$, to show that $\Lambda$ is non-entangling it suffices to prove that $\lambda \omega_3 + (1-\lambda)\tau_3\in \SEP$ for all $\lambda\in [0,\,1/2]$. 
Since the claim is trivial for $\lambda=0$, because $\tau_3$ is manifestly separable, by convexity it suffices to prove it for $\lambda=1/2$. Let us write
\bb
\frac12 \left(\omega_3 + \tau_3\right) = \frac{1}{12}\id + \frac16 P_3 -\frac14 \Phi_3 = \pazocal{P}\left( \ketbra{+}\otimes \ketbra{-} \right) ,
\label{irreversibility_proof_eq0}
\ee
where $\ket{\pm}\coloneqq \frac{1}{\sqrt2}\left( \ket{1}\pm \ket{2}\right)$, and $\pazocal{P}$ is the non-entangling quantum operation defined by
\bb
\pazocal{P}(X) &\coloneqq \frac16 \sum_{\pi\in S_3} \int_0^{2\pi} \left( U_{\theta,\pi}\otimes U_{-\theta,\pi} \right) X \left( U_{\theta,\pi}\otimes U_{-\theta,\pi} \right)^\dag \frac{\mathrm{d}^3\theta}{(2\pi)^3}\, , \\
U_{\theta,\pi} &\coloneqq \sum_{j=1}^3 e^{i\theta_j} \ketbraa{\pi(j)}{j}\, .
\label{irreversibility_proof_eq1}
\ee
with $S_3$ denoting the symmetric group over a set of $3$ elements. Note that the last equality in~\eqref{irreversibility_proof_eq0}, which can be proved by inspection using the representation in~\eqref{irreversibility_proof_eq1}, amounts to the sought separable decomposition of the state $\frac12 (\omega_3 + \tau_3)$. This establishes that $E_{c,\, \pptp}^\epsilon(\omega_3) \leq E_{c,\, \sepp}^\epsilon (\omega_3) \leq 1$ and concludes the proof.
\end{proof}

\begin{rem}
One can wonder what other types of states cannot be reversibility manipulated. This is far from obvious, since the axiomatic classes of operations NE or PPTP are typically much more powerful than previously employed types of transformations; in particular, several types of states which have been used to show irreversibility in specific settings are actually \emph{reversible} under NE or PPTP transformations.

The prime example of this is the antisymmetric state, defined on a bipartite system with Hilbert space $\C^d\otimes \C^d$ by
\bb
\alpha_d \coloneqq \frac{\id-F}{d(d-1)}\, ,
\label{antisymmetric}
\ee
where $F\coloneqq \sum_{i,j=1}^d \ketbraa{ij}{ji}$ is the flip operator. This state gained fame as the `universal counterexample' which violates many properties obeyed by other types of quantum states~\cite{math-ent}: for example, it is known that its manipulation is highly irreversible under LOCC --- its distillable entanglement is of order $1/d$, while its entanglement cost is lower bounded by a $d$-independent non-zero constant.~\cite{Christandl2012}.
Curiously, however, $\alpha_d$ was also the first example of a mixed state whose manipulation is reversible under all PPT operations~\cite{Martin-exact-PPT} --- these transformations (hereafter simply denoted with PPT) are all maps $\Lambda$ such that $\idc_R \otimes \Lambda$ is PPT-preserving for all ancillary systems $R$~\cite{Rains2001}, and are therefore a strict subset of the PPT-preserving operations considered herein. The reason why reversibility can be achieved in this setting is that the entanglement cost of $\alpha_d$ can be significantly lowered by considering PPT transformations instead of LOCC, allowing it to reach order $1/d$.
 
However, in Ref.~\cite{irreversibility-PPT}, a related class of states supported on the asymmetric subspace was used to show the \emph{ir}reversibility of entanglement manipulation under PPT operations. In particular, for the state $\rho_p \coloneqq p \ketbra{v_1} + (1-p) \ketbra{v_2}$ with $\ket{v_1} = (\ket{01}-\ket{10})/\sqrt{2}$ and $\ket{v_2} = (\ket{02} - \ket{20})/\sqrt{2}$, it was shown that $E_{d,\rm{PPT}}(\rho_p) < 1 = E_{c,\rm{PPT}} (\rho_p)$. One might then wonder if these states could serve as a similar example of irreversibility for the larger class of PPT-preserving operations. However, this cannot be the case. To see this, we can use the fact that the quantity $E_\eta$ considered in~\cite{irreversibility-PPT} constitutes a lower bound on the distillable entanglement $E_{d,\rm{PPTP}}$, but already in~\cite{irreversibility-PPT} it was shown that $E_\eta(\rho_p) = 1$, meaning that  $E_{d,\rm{PPTP}}(\rho_p) = 1$ and this state is actually reversible under PPT-preserving maps.
 
In a way, this suggests that the state $\omega_3$ is somewhat special, since its entanglement cost cannot be brought down low enough to match its distillable entanglement, even if we allow the extended classes of operations NE or PPTP. It would be interesting to study in more detail the special properties of $\omega_3$ which induce this behaviour, and to understand exactly what types of states exhibit irreversibility in entanglement manipulation under NE and PPTP.
\end{rem}

\begin{rem}
The proof of Theorem~\ref{irreversibility_thm} actually allows us to compute also the zero-error costs of $\omega_3$, namely
\bb
E_{c,\, \sepp}^{\mathrm{exact}}(\omega_3) = E_{c,\, \pptp}^{\mathrm{exact}}(\omega_3) = 1\, .
\ee
As it turns out, the same entanglement cost of exact preparation of $1$ can be achieved by means of a strict subset of PPT-preserving operations, namely, the aforementioned PPT operations. In fact, already the result by Audenaert et al.~\cite{Martin-exact-PPT} guarantees that $E_{c,\, \ppt}^{\mathrm{exact}}(\omega_3) = E_N(\omega_3)=1$, where $E_N$ is the logarithmic negativity~\eqref{logarithmic_negativity}, because $\omega_3$ has vanishing `binegativity',\footnote{This just means that $\big|\omega_3^\Gamma\big|^\Gamma\geq 0$, where $\left|X\right|\coloneqq \sqrt{X^\dag X}$ is the operator absolute value.} as a straightforward check reveals. We can arrive at the same conclusion thanks to the complete characterisation of the exact PPT entanglement cost recently proposed by Wang and Wilde~\cite{Xin-exact-PPT}.
\end{rem}

\section{How much entanglement must be generated to achieve reversibility?} \label{generation_note}

We first recall the framework and the claimed results of Brand\~ao and Plenio~\cite{BrandaoPlenio2} in detail. To begin, we need to fix some notation. For two bipartite quantum systems $AB$, $A'B'$, a given non-negative function $M:\D(\HH_{A'B'})\to \R_+ \cup \{+\infty\}$ on the set of states on $A'B'$ that vanishes on $\K_{A'B'}\cap \D(\HH_{A'B'})$, and some $\delta\geq 0$, we define the set of $\left(M, \delta\right)$-approximately $\K$-preserving maps by
\bb
\kp^{M}_\delta \left(AB\to A'B'\right) \coloneqq \left\{ \Lambda\in \ptp\left(AB\to A'B'\right):\ M\left(\Lambda(\sigma_{AB})\right) \leq \delta \ \ \forall \sigma_{AB} \in \K_{AB} \cap \D(\HH_{AB}) \right\} .
\label{M_approximately_K_preserving}
\ee
Typically, $M$ will be chosen to be an entanglement measure~\cite{Vedral1997,Horodecki-review}. In what follows, we will in fact assume that $M$ is actually a family of functions defined on each bipartite quantum system. Given a sequence $(\delta_n)_{n\in \N}$ with $\delta_n\geq 0$ for all $n$, one can then define the distillable entanglement and entanglement cost under 
\textbf{$\boldsymbol{\left(M, (\delta_n)_n\right)}$-approximately $\mathbfcal{K}$-preserving maps} by setting
\begin{align}
    E_{d,\,\kp^{M}_{(\delta_n)}}^\epsilon\!(\rho_{AB}) &\coloneqq \sup\left\{R>0:\, \limsup_{n\to\infty} \inf_{\Lambda_n \in \kp^{M}_{\delta_n} \left(A^nB^n\to A_0^{\ceil{Rn}}B_0^{\ceil{Rn}}\right)} \frac12 \left\| \Lambda_n\left( \rho_{AB}^{\otimes n} \right) - \Phi_2^{\otimes \ceil{Rn}} \right\|_1 \leq \epsilon \right\} \label{distillable2} \\[.5ex]
    E_{c,\,\kp^{M}_{(\delta_n)}}^\epsilon\!(\rho_{AB}) &\coloneqq \inf\left\{R>0:\, \limsup_{n\to\infty} \inf_{\Lambda_n \in \kp^{M}_{\delta_n} \left(A_0^{\floor{Rn}}B_0^{\floor{Rn}}\to A^n B^n\right)} \frac12 \left\| \Lambda_n\left( \Phi_2^{\otimes \floor{Rn}} \right) - \rho_{AB}^{\otimes n} \right\|_1 \leq \epsilon \right\} . \label{cost2}
\end{align}
We also set $E_{d,\,\kp^{M}_{(\delta_n)}} \coloneqq E_{d,\,\kp^{M}_{(\delta_n)}}^0$ and $E_{c,\,\kp^{M}_{(\delta_n)}} \coloneqq E_{c,\,\kp^{M}_{(\delta_n)}}^0$. With this notation, we say that the theory of entanglement manipulation is reversible under $\left(M, (\delta_n)_n\right)$-approximately $\K$-preserving operations if it holds that $E_{d,\,\kp^{M}_{(\delta_n)}}(\rho_{AB}) = E_{c,\,\kp^{M}_{(\delta_n)}}(\rho_{AB})$ for all states $\rho_{AB}$ on all bipartite quantum systems $AB$. We also say that the theory of entanglement manipulation is reversible under \textbf{$\boldsymbol{M}$-asymptotically $\mathbfcal{K}$-preserving operations} if for all states $\rho_{AB}$ there exists a sequence $(\delta_n)_{n\in \N}$ such that $\delta_n \tendsn{} 0$ and $E_{d,\,\kp^{M}_{(\delta_n)}}(\rho_{AB}) = E_{c,\,\kp^{M}_{(\delta_n)}}(\rho_{AB})$.

To state Brand\~ao and Plenio's conjecture in this framework, we first need to introduce another entanglement monotone closely related to the standard robustness. Recalling first the definition~\eqref{std_rob} of $R^s_\K$, namely, $R_\K^s(\rho) = \inf \left\{ \Tr \delta\!: \delta \!\in\! \K,\, \rho+\delta\!\in\! \K \right\}$, the \emph{\textbf{generalised robustness}} (or \emph{global robustness})~\cite{VidalTarrach, harrow_2003,Steiner2003} is defined similarly as
\begin{equation}
R_\K^g(\rho) = \inf \left\{ \Tr \delta:\, \delta\geq 0,\, \rho+\delta\in \K \right\}.\end{equation}
It is also an entanglement monotone, and many similarities between the two robustness measures have been found; for example, for any pure state $\Psi$ it holds that $R^s_\K(\Psi)=R^g_\K(\Psi)$~\cite{harrow_2003,Steiner2003,taming-PRL,taming-PRA}. The two can, however, exhibit very different properties, as we will explicitly demonstrate below (see also Supplementary  Note~\ref{further_considerations_note}).

With this language, Brand\~ao and Plenio's 
claim 
is that entanglement becomes reversible under $R^g_\SEP$-asymptotically $\SEP$-preserving maps.%
\footnote{We bring to the reader's attention the recently discovered technical issues underlying the proof of the main result of~\cite{BrandaoPlenio2}, as detailed in~\cite{berta_gap}. For this reason, we state the result here as a `conjecture', and its validity is an open question. We nevertheless find it useful to discuss the result here in detail as we take a conceptual inspiration from the framework of~\cite{BrandaoPlenio2}. Our findings are independent of whether this result is ultimately found to be correct or not.}
Employing the simplified notation $\kp^{g}_{(\delta_n)} \coloneqq \kp^{\vphantom{\dag}}_{(\delta_n)}{\vphantom{\kp^t}}^{\hspace{-2.6ex}R^g_\SEP}\hspace{.5ex}$, we formalise their 
claim as follows.

\begin{cj}[(Reversibility under asymptotically non-entangling operations~\cite{BrandaoPlenio2})]
For any state $\rho_{AB}$ acting on a finite-dimensional Hilbert space, there exists a sequence $(\delta_n)_{n\in \N}$ such that $\displaystyle \lim_{n \to \infty} \delta_n = 0$ and
\begin{equation}\begin{aligned}
    E_{d,\,\kp^{g}_{(\delta_n)}}(\rho_{AB}) = E_{c,\,\kp^{g}_{(\delta_n)}}(\rho_{AB}) = E_{r,\K}^\infty(\rho_{AB})\, ,
\end{aligned}\end{equation}
where $E_{r,\K}^\infty$ is the regularised relative entropy measure defined by~\eqref{Er_regularised}.
\end{cj}
Brand\~ao and Plenio argue that the above means that entanglement can be reversibly interconverted without generating macroscopic amounts of it, since the supplemented entanglement is constrained by $\delta_n$ which vanishes in the asymptotic limit. This is certainly true if one quantifies entanglement with the generalised robustness. However, this is an a priori arbitrary choice: one could analogously choose the standard robustness $R^s_\K$ as a quantifier, and consider the (sequence of) sets of operations $\kp^{s}_{(\delta_n)} \coloneqq \kp^{R^s_\SEP}_{(\delta_n)}$ defined by~\eqref{M_approximately_K_preserving} for the special case $M=R^s_\SEP$.
Alternatively, when the negativity $N(\rho) \coloneqq \frac{1}{2}\big(\big\|\rho^\Gamma\big\|_1-1\big)$ is used as the entanglement measure, we can instead look at the operations $\kp^N_{(\delta_n)}$.
Recalling that $R^s_\K(\rho)  \geq  \frac{1}{2}\big(\big\|\rho^\Gamma\big\|_1-1\big) = N(\rho)$, this choice of definition ensures that $\kp^s_\delta \subseteq \kp^N_\delta$.

An extension of our result is then as follows.
\begin{thm} \label{irreversibility_thm2}
For any sequence $(\delta_n)_n$ such that $\delta_n = 2^{o(n)}$, the two-qutrit state $\omega_3$ defined by~\eqref{omega_3} satisfies that
\bb
E_{d,\, \sepp^s_{(\delta_n)}}(\omega_3) &= E_{d,\, \pptp^s_{(\delta_n)}}(\omega_3) = E_{d,\, \sepp^N_{(\delta_n)}}(\omega_3) = E_{d,\, \pptp^N_{(\delta_n)}}(\omega_3) \\& = \log_2 \frac32 \approx 0.585
\label{omega_3_distillable2}
\ee
but
\bb
E_{c,\, \sepp^s_{(\delta_n)}}^\epsilon (\omega_3) &= E_{c,\, \pptp^s_{(\delta_n)}}^\epsilon (\omega_3) = E_{c,\, \sepp^N_{(\delta_n)}}^\epsilon (\omega_3) = E_{c,\, \pptp^N_{(\delta_n)}}^\epsilon (\omega_3) \\ &= 1
\label{omega_3_cost2}
\ee
for all $\epsilon\in [0,1/2)$. In particular, the resource theory of entanglement is irreversible under any class of operations which does not generate an amount of entanglement that grows exponentially in $n$, as quantified by either the standard robustness $R^s_\K$ or by the negativity $N$.
\end{thm}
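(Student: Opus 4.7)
My plan is to mirror the proof of Theorem~\ref{irreversibility_thm}, modifying each step to absorb the slack $\delta_n$ in the approximate non-entangling condition, and then to check that the hypothesis $\delta_n = 2^{o(n)}$ is tight enough to preserve the strict gap. The first move is a reduction: since $R^s_\K(\rho)\geq N(\rho)$ for every state $\rho$, we have the inclusions $\kp \subseteq \kp^s_{(\delta_n)} \subseteq \kp^N_{(\delta_n)}$, while also $\locc\subseteq \kp$. The bound $E^\epsilon_{c,\kp}(\omega_3)\leq 1$ proved in Theorem~\ref{irreversibility_thm} and the hashing inequality bound $E_{d,\locc}(\omega_3)\geq\log_2(3/2)$ from~\eqref{hashing} thus immediately give the easy halves of~\eqref{omega_3_distillable2}--\eqref{omega_3_cost2} for both $M\in\{R^s_\SEP, N\}$, so it suffices to establish the reverse inequalities in the larger class $\kp^N_{(\delta_n)}$.

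The cost lower bound follows by rerunning the $\epsilon$-lemma chain~\eqref{epsilon_lemma_at_work}, except that the exact monotonicity step of Lemma~\ref{monotonicity_lemma} is replaced by a direct estimate that pays a factor $1+2\delta_n$ per separable direction. Let $\Lambda_n \in \kp^N_{(\delta_n)}$ dilute at rate $R$ with error $\epsilon_n\to \epsilon<1/2$ and set $d=2^{\floor{Rn}}$. Expanding $\Phi_d = d\sigma_+ - (d-1)\sigma_-$ with $\sigma_\pm\in\SEP$ and using $\|\Lambda_n(\sigma)^\Gamma\|_1\leq 1+2\delta_n$ for every separable $\sigma$, the trace-norm triangle inequality on partial transposes yields
\[\bigl\|\Lambda_n(\Phi_d)^\Gamma\bigr\|_1 \ \leq\ d(1+2\delta_n)+(d-1)(1+2\delta_n) \ =\ (2d-1)(1+2\delta_n).\]
Chaining this with Proposition~\ref{elementary_prop}(b), the $\epsilon$-lemma~\eqref{crucial_tn}, and super-multiplicativity of the tempered negativity~\eqref{supermultiplicativity_tn} gives
\[(2^{\floor{Rn}+1}-1)(1+2\delta_n)\ \geq\ \tn\!\bigl(\Lambda_n(\Phi_d)\,\big|\,\omega_3^{\otimes n}\bigr)\ \geq\ (1-2\epsilon_n)\,\tn(\omega_3)^n.\]
Taking $\log_2$, dividing by $n$, and letting $n\to\infty$, the hypothesis $\delta_n=2^{o(n)}$ kills the $\tfrac1n\log_2(1+2\delta_n)$ contribution while $\epsilon_n$ stays bounded away from $1/2$, leaving $R\geq\ten(\omega_3)\geq 1$ by the ansatz $X_3$ of Theorem~\ref{irreversibility_thm}.

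The main obstacle I expect is the distillable entanglement upper bound, which must extend~\eqref{eq:distillable_upper_Er} beyond exactly non-entangling maps. Given $\Lambda_n\in\kp^N_{(\delta_n)}$ producing $\Phi_2^{\otimes\ceil{Rn}}$ with vanishing error, the plan is to apply monotonicity of the quantum relative entropy to $D(\Lambda_n(\omega_3^{\otimes n})\|\Lambda_n(\sigma^*))$ for a separable $\sigma^*$ attaining $E_{r,\SEP}(\omega_3^{\otimes n})=n\log_2(3/2)$. Because $\Lambda_n(\sigma^*)$ is now only $O(\delta_n)$-close in trace norm to a genuinely separable state --- this follows from the fact that any $\rho$ with $R^s_\K(\rho)\leq\delta_n$ can be written as $\rho = (1+\Tr\delta)\sigma' - \delta$ with $\sigma'\in\SEP$ and $\Tr\delta\leq\delta_n$ --- invoking asymptotic continuity of $E_{r,\SEP}$ at the target $\Phi_2^{\otimes\ceil{Rn}}$ produces a correction of order $\delta_n\log(\dim)$. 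The difficulty is that the naive ambient dimension $9^n$ blows this up to $n\delta_n$, which is not $o(n)$ in the sub-exponential regime. The refinement I plan to use is that $\omega_3$ is maximally correlated: composing $\Lambda_n$ with a local pinching onto the $\{|ii\rangle\}$-diagonal of the input (a non-entangling channel) restricts the analysis to a $3^n$-dimensional subspace, and combining this with a Brand\~ao--Plenio-style smoothing $\Lambda_n \mapsto (1-\lambda_n)\Lambda_n + \lambda_n\Gamma$ against a fixed entanglement-breaking $\Gamma$ produces an \emph{exactly} non-entangling approximant whose rate differs from $R$ by a vanishing amount. Running the standard relative-entropy upper bound on the approximant then yields $E_{d,\kp^N_{(\delta_n)}}(\omega_3)\leq\log_2(3/2)$ and closes the proof.
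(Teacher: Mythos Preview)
Your cost argument is correct and essentially matches the paper's Theorem~\ref{main_tool_thm2} via Lemma~\ref{lem:approx_monotone2}: the decomposition $\Phi_d=d\sigma_+-(d-1)\sigma_-$ together with the triangle inequality on $\|\cdot^\Gamma\|_1$ is exactly how the paper absorbs the factor $1+2\delta_n$, and the sub-exponential hypothesis kills it in the limit.

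The distillation half has a genuine gap, specifically in the $\kp^N$ case. Your claim that $\Lambda_n(\sigma^*)$ is ``$O(\delta_n)$-close in trace norm to a genuinely separable state'' because $R^s_\K(\Lambda_n(\sigma^*))\leq\delta_n$ does not apply: you are working in $\kp^N_{(\delta_n)}$, where the only guarantee is $N(\Lambda_n(\sigma^*))\leq\delta_n$. Small negativity does \emph{not} force proximity to separable (or even PPT) states --- bound entangled PPT states have $N=0$ yet can be far from $\SEP$ --- so the relative-entropy chain breaks down there. Your proposed fixes do not rescue this: pinching the \emph{input} onto the maximally correlated subspace leaves $\omega_3^{\otimes n}$ unchanged and does nothing to the \emph{output} dimension, which is what enters the asymptotic-continuity correction; and the convex smoothing $(1-\lambda_n)\Lambda_n+\lambda_n\Gamma$ cannot be made exactly non-entangling (or even PPT-preserving) from a negativity bound alone, since mixing with a fixed separable channel need not cancel the negative part of $\Lambda_n(\sigma)^\Gamma$ uniformly in $\sigma$. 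Even for $\kp^s$, your stated correction ``$\delta_n\log(\dim)$'' is the wrong object: the clean inequality is $\Lambda_n(\sigma^*)\leq(1+\delta_n)\sigma''$ with $\sigma''\in\K$ (via $R^g\leq R^s$), giving a $\log_2(1+\delta_n)=o(n)$ penalty, not $n\delta_n$.

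The paper avoids all of this by a completely different route (Lemma~\ref{lem:distillation_allequal}): it characterises the optimal one-shot distillation fidelity under each of $\kp^s_\delta$, $\kp^g_\delta$, $\kp^N_\delta$ as the \emph{same} optimisation $\varphi_{n,m}(\delta)=\sup\{\Tr\rho^{\otimes n}W:\,0\leq W\leq\id,\ \Tr W\sigma\leq(1+\delta)2^{-m}\ \forall\sigma\in\K\}$ (with $\delta\mapsto 2\delta$ for the negativity class). Since $\delta$ enters only through the product $(1+\delta)2^{-m}$, a sub-exponential $\delta_n$ can be absorbed into an $o(1)$ shift of the rate $m/n$, yielding $E_{d,\kp^M_{(\delta_n)}}=E_{d,\kp}$ directly --- no relative entropy, no continuity estimates, and no need to convert a negativity bound into proximity to $\K$.
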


Here $\delta_n = 2^{o(n)}$ means that $\delta_n$ has a sub-exponential behaviour in $n$: for any $k > 0$, there exists $m \in \N$ such that $\delta_n < 2^{kn}$ for all $n \geq m$. Consequently, to have any hope of recovering reversibility, one needs $\delta_n$ (and hence the generated entanglement) to grow exponentially: there must exist a choice of $k>0$ such that, for all $m \in \N$, $\delta_n \geq 2^{kn}$ for at least one $n \geq m$; in other words, $\delta_n$ is lower bounded by $2^{kn}$ infinitely often.\footnote{Following the original notation introduced by Hardy and Littlewood~\cite{hardy_1914}, we could denote this behaviour as $\delta_n = 2^{\Omega(n)}$. However, the commonly used notation $\Omega(n)$ actually refers to a stronger property~\cite{knuth_1976}, so we do not use it here.}

\begin{rem*}
One can wonder whether $R^g_\K$ can be considered as a more operationally meaningful measure of the supplemented entanglement, justifying its use over other measures such as $R^s_\K$ or $N$ and thus substantiating the reversibility 
conjecture of~\cite{BrandaoPlenio2} over the irreversibility result of Theorem~\ref{irreversibility_thm2}. We do not believe that there is any compelling reason to do so: although $R^g_\K$ admits a very general operational interpretation as the quantifier of the advantage that a given state provides in channel discrimination tasks~\cite{Ryuji-Bart,Bae2019,taming-PRL}, $R^s_\K$ has an arguably even more relevant application, as it exactly quantifies the one-shot entanglement cost under non-entangling operations~\cite{brandao_2011}. On the technical side, both of the quantities suffer from very similar issues in the many-copy limit, as they do not satisfy asymptotic continuity~\cite{Donald1999}. 

This is no coincidence, as Brand\~{a}o and Plenio have shown that the choice of an asymptotically continuous monotone to quantify the supplemented entanglement leads to the trivialisation of the framework~\cite[Section~V]{BrandaoPlenio2}. Note that almost all the most commonly used entanglement measures and all of those with the strongest operational meanings are in fact asymptotically continuous. Examples include the entanglement of formation~\cite{nielsen_2000, tightuniform}, the (LOCC) entanglement cost~\cite{tightuniform}, the squashed entanglement~\cite{squashed, Alicki-Fannes}, and the (regularised) relative entropy of entanglement~\cite{Donald1999, tightuniform}. In fact, among the most widely used entanglement measures, the only one that is \emph{not} asymptotically continuous is the logarithmic negativity~\cite{negativity,plenioprl}. For this reason, we regard the failure of asymptotic continuity for the robustnesses as an issue of some conceptual importance, one that may cast some doubts on the status of approximately $\K$-preserving maps.

The choice of $R^g_\K$ in~\cite{BrandaoPlenio2} is motivated \textit{a posteriori} by the fact that it 
is claimed to lead to reversibility, rather than by \textit{a prori} physical considerations. We are therefore inclined to believe that there is \emph{no} unique and indisputable choice of a suitable entanglement measure, and we consider Theorem~\ref{irreversibility_thm2} to serve as evidence that the irreversibility of entanglement revealed in our work is very robust, and that avoiding it requires a very careful and deliberate choice of an entanglement monotone --- according to other, equally reasonable choices, the generated entanglement must be exponentially large. 
\end{rem*}

\begin{rem*}
We should also note in passing that between the two sets of operations $\sepp^s_{(\delta_n)}$ and $\sepp^N_{(\delta_n)}$ that we considered in  Theorem~\ref{irreversibility_thm2} above, the former is perhaps more adherent to our intuitive notion of approximately non-entangling maps. Indeed, since the standard robustness of entanglement $R^s_\SEP$ is a faithful measure, i.e.\ it is strictly positive on all entangled states, transformations in $\sepp^s_0$ are in fact non-entangling. Transformations in $\sepp^N_0$, on the contrary, map separable states to PPT states that can very well be entangled. However, since $\sepp^s_\delta \subseteq \sepp^N_\delta$, showing the irreversibility of entanglement under the operations $\sepp^N_{\delta}$ constitutes a strictly stronger result, and indeed shows also that generating PPT entangled states is not sufficient to recover reversibility --- any reversible protocol must create highly non-PPT entanglement.
\end{rem*}

The first step in proving the Theorem is the following lemma, which establishes an approximate monotonicity of $R^s_\K$ under approximately $\K$-preserving maps, whether quantified by $R^s_\K$ itself or by the negativity.
\begin{lemma}\label{lem:approx_monotone2}
For any $\Lambda \in \kp^s_\delta({AB \to A'B'})$, it holds that
\begin{equation}\begin{aligned}
    R^s_\K(\Lambda(\rho_{AB}))+1 \leq  ( 1 + 2 \delta) \, (R^s_\K(\rho_{AB})+1) .
\end{aligned}\end{equation}
Similarly, for any $\Lambda \in \kp^N_\delta({AB \to A'B'})$, it holds that
\begin{equation}\begin{aligned}
    \frac{1}{2}\left(\left\|\Lambda(\rho_{AB})^\Gamma\right\|_1 + 1\right) \leq  ( 1 + 2 \delta) \, (R^s_\K(\rho_{AB})+1) .
\end{aligned}\end{equation}
\end{lemma}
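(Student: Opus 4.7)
The plan is to leverage the robustness decomposition together with two separate invocations of the approximate $\K$-preservation hypothesis. Fix $\epsilon'>0$ and, by definition of $R^s_\K(\rho_{AB})$, pick $D\in \K_{AB}$ with $\Tr D\leq R^s_\K(\rho_{AB})+\epsilon'$ and $\rho_{AB}+D\in \K_{AB}$; then the normalised states $\sigma\coloneqq (\rho_{AB}+D)/(1+\Tr D)$ and (provided $\Tr D>0$) $D/\Tr D$ both lie in $\K_{AB}\cap \D(\HH_{AB})$. Applying $\Lambda$ produces the identity $\Lambda(\rho_{AB})=(1+\Tr D)\Lambda(\sigma)-\Lambda(D)$. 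The corner case $\Tr D=0$ corresponds to $\rho_{AB}\in \K$, in which case both inequalities follow at once from the defining property of $\kp^s_\delta$ or $\kp^N_\delta$ applied directly to $\rho_{AB}$.

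For part (i), I would invoke $\Lambda\in \kp^s_\delta$ twice, once on $\sigma$ and once on $D/\Tr D$, producing $E,E'\in \K$ with $\Tr E,\Tr E'\leq \delta$ such that $\Lambda(\sigma)+E\in \K$ and $\Lambda(D)+(\Tr D) E'\in \K$. Adding the two cone elements $(1+\Tr D)(\Lambda(\sigma)+E)$ and $(\Tr D)E'$ and regrouping shows that $\Lambda(\rho_{AB})+F\in \K$, where $F\coloneqq [\Lambda(D)+(\Tr D)E']+(1+\Tr D)E$ is itself a sum of cone elements and hence lies in $\K$. Using trace preservation one computes $\Tr F\leq \Tr D(1+2\delta)+\delta$, so $R^s_\K(\Lambda(\rho_{AB}))\leq \Tr D(1+2\delta)+\delta$; sending $\epsilon'\to 0$ and adding $1$ to both sides delivers the target, since $R^s_\K(\rho_{AB})(1+2\delta)+\delta+1\leq (R^s_\K(\rho_{AB})+1)(1+2\delta)$ reduces to $\delta\leq 2\delta$.

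For part (ii), I would apply the triangle inequality to the partial transpose of the decomposition, obtaining $\|\Lambda(\rho_{AB})^\Gamma\|_1\leq (1+\Tr D)\|\Lambda(\sigma)^\Gamma\|_1+\|\Lambda(D)^\Gamma\|_1$. Unpacking $\Lambda\in \kp^N_\delta$ via $N(\cdot)=\tfrac12(\|\cdot^\Gamma\|_1-1)$ yields $\|\Lambda(\tau)^\Gamma\|_1\leq 1+2\delta$ for all $\tau\in \K\cap \D$, and homogeneity gives $\|\Lambda(D)^\Gamma\|_1\leq \Tr D(1+2\delta)$. Combining these shows $\|\Lambda(\rho_{AB})^\Gamma\|_1\leq (1+2\Tr D)(1+2\delta)$, after which a one-line algebraic check — equivalent to $1\leq 1+2\delta$ — reduces the bound to the claim, once $\epsilon'\to 0$.

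The delicate point is that $\Lambda(D)$ is not guaranteed to lie in $\K$: only approximately so. The trick that closes the bookkeeping is to apply the approximate $\K$-preservation hypothesis to $D/\Tr D$ in addition to $\sigma$, since both are normalised states in $\K\cap \D$; this second application produces the extra cone element $(\Tr D)E'$ needed to absorb $\Lambda(D)$ into $\K$, and it is the ultimate source of the factor $1+2\delta$ (rather than $1+\delta$) appearing in the statement. All remaining steps are straightforward algebra and the triangle inequality.
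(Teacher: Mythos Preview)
Your proof is correct and follows essentially the same route as the paper's: decompose $\rho$ via its standard robustness into (scalar multiples of) two normalised $\K$-states, apply the approximate $\K$-preservation hypothesis to each piece, and reassemble either via an explicit cone element $F$ (part~(i)) or the triangle inequality for $\|\cdot^\Gamma\|_1$ (part~(ii)). The only minor slip is that $R^s_\K$ is defined as an infimum, so strictly you should take $\Tr E,\Tr E'\leq \delta+\epsilon''$ for arbitrary $\epsilon''>0$ rather than $\leq \delta$ --- the paper does exactly this --- but the extra $\epsilon''$ is absorbed in the final limit and changes nothing.
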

\begin{proof}
Let us take $\Lambda \in \kp^s_\delta({AB \to A'B'})$ and consider any feasible decomposition for the standard robustness of $\rho$ as $\rho = \sigma - \tau$ where $\sigma, \tau \in \K$ (noting that these are in general only unnormalised states). Since $R^s_\K\left(\frac{\Lambda(\sigma)}{\Tr \sigma}\right) \leq \delta$ and $R^s_\K\left(\frac{\Lambda(\tau)}{\Tr \tau}\right) \leq \delta$, for any $\epsilon >0 $ there exist decompositions
\begin{equation}\begin{aligned}
    \frac{\Lambda(\sigma)}{\Tr \sigma} =  \sigma' -  \tau', \qquad  \frac{\Lambda(\tau)}{\Tr \tau} =  \sigma'' -  \tau'',
\end{aligned}\end{equation}
for some $\sigma', \sigma'', \tau', \tau'' \in \K$ such that $\Tr \tau', \Tr \tau'' \leq \delta + \epsilon$. Then
\begin{equation}\begin{aligned}
    \Lambda(\rho) = \left[ (\Tr \sigma) \sigma' + (\Tr \tau) \tau'' \right] - \left[ (\Tr \sigma) \tau' + (\Tr \tau) \sigma'' \right].
\end{aligned}\end{equation}
This constitutes a valid feasible solution for the robustness of $\Lambda(\rho)$, giving
\begin{equation}\begin{aligned}
    R^s_\K(\Lambda(\rho)) + 1 &\leq \Tr \sigma \Tr \sigma' + \Tr \tau \Tr \tau''\\
    &= (\Tr \tau + 1) (\Tr \tau' + 1) + \Tr \tau \Tr \tau''\\
    &\leq (\Tr \tau + 1) (\Tr \tau' + 1) + (\Tr \tau + 1) \Tr \tau''\\
    &\leq (\Tr \tau + 1) (\delta + \epsilon + 1) + (\Tr \tau + 1) (\delta + \epsilon).
\end{aligned}\end{equation}
Since this holds for any feasible value of $\Tr \tau$, it must also hold that $R^s_\K(\Lambda(\rho)) + 1 \leq (1+2\delta + 2 \epsilon) (R^s_\K(\rho)+1)$, as $R^s_\K$ is defined precisely as the infimum of all feasible values of $\Tr \tau$. Since $\epsilon > 0$ was arbitrary, the desired statement follows.

The case of $\Lambda \in \kp_\delta^N$ is similar. For any decomposition $\rho = \sigma - \tau$ with $\sigma, \tau \in \K$, the triangle inequality gives
\begin{equation}\begin{aligned}
    \left\|\Lambda(\rho)^\Gamma\right\|_1 &\leq \Tr \sigma \left\|\frac{\Lambda(\sigma)^\Gamma}{\Tr \sigma}\right\|_1 + \Tr \tau \left\|\frac{\Lambda(\tau)^\Gamma}{\Tr \tau}\right\|_1\\
    &\leq (\Tr \sigma)(1 + 2\delta) + (\Tr \tau) (1 + 2\delta)\\
    &= (1 + 2 \Tr \tau) (1 + 2\delta)
\end{aligned}\end{equation}
where we used that $\frac{1}{2}\big(\big\|\Lambda(\sigma)^{\Gamma}\big\|_1-1\big)\leq \delta$ and analogously for $\tau$. Rearranging, we get
\begin{equation}\begin{aligned}
    \frac{1}{2} \left(\left\|\Lambda(\rho)^\Gamma\right\|_1 + 1\right) &\leq \delta + 2 \delta \Tr \tau + \Tr \tau + 1\\
    &\leq (1 + 2 \delta) (\Tr \tau + 1).
\end{aligned}\end{equation}
Minimising over all feasible values of $\Tr \tau$ gives $\frac{1}{2} \left(\big\|\Lambda(\rho)^\Gamma\big\|_1 + 1\right) \leq (1+2\delta) (R^s_\K(\rho)+1)$, as was to be shown.
\end{proof}

Theorem~\ref{irreversibility_thm2} then relies on the following extension of Theorem~\ref{main_tool_thm}.

\begin{thm} \label{main_tool_thm2}
For $\K=\SEP$ or $\K=\PPT$, the entanglement cost under approximately $\K$-preserving operations satisfies that
\bb
\inf_{\epsilon\, \in\, [0,\,1/2)} E_{c,\,\kp^s_{(\delta_n)}}^\epsilon (\rho) \geq \tl_{\!\K}(\rho)\, \geq \ten(\rho)
\label{main_tool_formal2}
\ee
and
\bb
\inf_{\epsilon\, \in\, [0,\,1/2)} E_{c,\,\kp^N_{(\delta_n)}}^\epsilon (\rho) \geq \ten(\rho)
\label{main_tool_formal2b}
\ee
for any sequence $(\delta_n)_n$ such that $\delta_n = 2^{o(n)}$.
\end{thm}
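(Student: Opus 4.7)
The plan is to follow the proof of Theorem~\ref{main_tool_thm} \emph{mutatis mutandis}, replacing the exact monotonicity of the standard robustness under $\K$-preserving maps (Lemma~\ref{monotonicity_lemma}) by the approximate monotonicity already packaged in Lemma~\ref{lem:approx_monotone2}, and then verifying that the sub-exponential growth assumption $\delta_n = 2^{o(n)}$ is precisely what is needed to absorb the extra multiplicative factor $1+2\delta_n$ into an $o(1)$ correction after we take logarithms and divide by $n$.

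For the standard-robustness branch~\eqref{main_tool_formal2}, fix $\epsilon \in [0,1/2)$, let $R$ be an achievable rate for $E_{c,\,\kp^s_{(\delta_n)}}^\epsilon(\rho)$, and let $\Lambda_n \in \kp^s_{(\delta_n)}\!\left(A_0^{\floor{Rn}}B_0^{\floor{Rn}} \to A^n B^n\right)$ witness this achievability with error $\epsilon_n \coloneqq \tfrac{1}{2}\left\|\Lambda_n(\Phi_2^{\otimes \floor{Rn}}) - \rho^{\otimes n}\right\|_1$ satisfying $\limsup_n \epsilon_n \leq \epsilon$. Combining the exact value $R^s_\K(\Phi_2^{\otimes k}) = 2^k - 1$ from~\eqref{robustness_k_ebits}, the first half of Lemma~\ref{lem:approx_monotone2}, and the $\epsilon$-lemma (Lemma~\ref{crucial_lemma}) applied to the pair $\left(\Lambda_n(\Phi_2^{\otimes \floor{Rn}}),\,\rho^{\otimes n}\right)$ gives
\begin{align*}
(1+2\delta_n)\, 2^{\floor{Rn}} &= (1+2\delta_n)\left(1 + R_\K^s(\Phi_2^{\otimes \floor{Rn}})\right) \\
&\geq 1 + R_\K^s\!\left(\Lambda_n(\Phi_2^{\otimes \floor{Rn}})\right) \\
&\geq (1-2\epsilon_n)\left(1 + \tsr_{\!\K}(\rho^{\otimes n})\right).
\end{align*}
Taking logarithms, dividing by $n$, and passing to $\limsup_{n\to\infty}$, the left-hand side contributes $R + \limsup_n \tfrac{1}{n}\log_2(1+2\delta_n)$; the hypothesis $\delta_n = 2^{o(n)}$ forces $\log_2(1+2\delta_n) = o(n)$, so this extra term vanishes. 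The factor $1-2\epsilon_n$ on the right likewise contributes nothing in the limit because $\epsilon_n$ stays bounded away from $1/2$, and what remains is precisely $\tl_{\!\K}(\rho)$. The inequality $\tl_{\!\K}(\rho)\geq \ten(\rho)$ was already established at the end of the proof of Theorem~\ref{main_tool_thm}.

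For the wider class $\kp^N_{(\delta_n)}$, the second half of Lemma~\ref{lem:approx_monotone2} controls only the output negativity in terms of the input standard robustness, so the $\tsr_{\!\K}$-chain above is unavailable and we must route the argument directly through $\tn$. Applying this second branch at $\Phi_2^{\otimes \floor{Rn}}$, then the lower bound $\|\rho^\Gamma\|_1 \geq \tn(\rho|\omega)$ from Proposition~\ref{elementary_prop}(b), then the $\tn$-form of the $\epsilon$-lemma in~\eqref{crucial_tn}, and finally the super-multiplicativity of $\tn$ from Proposition~\ref{elementary_prop}(e) yields
\begin{align*}
2(1+2\delta_n)\, 2^{\floor{Rn}} &\geq 1 + \left\|\Lambda_n(\Phi_2^{\otimes \floor{Rn}})^\Gamma\right\|_1 \\
&\geq 1 + \tn\!\left(\Lambda_n(\Phi_2^{\otimes \floor{Rn}}) \,\big|\, \rho^{\otimes n}\right) \\
&\geq 1 + (1-2\epsilon_n)\, \tn(\rho)^n.
\end{align*}
Taking logarithms, dividing by $n$, and letting $n\to\infty$ delivers $R \geq \log_2 \tn(\rho) = \ten(\rho)$ by exactly the same asymptotic argument as above, proving~\eqref{main_tool_formal2b}.

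The only genuinely conceptual point, and the step where one must be careful, is the sharpness of the sub-exponential hypothesis: if $\delta_n$ were permitted to grow as $2^{kn}$ for some $k>0$, the correction $\tfrac{1}{n}\log_2(1+2\delta_n)$ would contribute a strictly positive $k$ on the left-hand side of each chain, eroding the lower bound by exactly that amount and reducing the statement to a vacuous one precisely when $\delta_n$ reaches the exponential scale of $R^s_\K(\Phi_2^{\otimes \floor{Rn}}) = 2^{\floor{Rn}} - 1$ itself. Everything else is a routine transcription of the proof of Theorem~\ref{main_tool_thm} once Lemma~\ref{lem:approx_monotone2} is in hand.
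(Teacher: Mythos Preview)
Your proof is correct and follows essentially the same route as the paper's: the same chain of inequalities combining~\eqref{robustness_k_ebits}, Lemma~\ref{lem:approx_monotone2}, the $\epsilon$-lemma, and (for the second branch) Proposition~\ref{elementary_prop}(b),(e), with the $(1+2\delta_n)$ factor absorbed via $\delta_n = 2^{o(n)}$ after taking $\tfrac{1}{n}\log_2(\cdot)$. The only differences are cosmetic --- you carry $(1+2\delta_n)$ on the left-hand side rather than its reciprocal on the right, and you apply super-multiplicativity before rather than after passing to logarithms.
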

\begin{proof}
In complete analogy with the proof of Theorem~\ref{main_tool_thm}, for any feasible sequence $(\Lambda_n)_{n\in \N}$ of maps such that $\Lambda_n \in \kp^s_{\delta_n}$ and $\big\|\Lambda_n(\Phi_2^{\otimes \floor{Rn}}) - \rho^{\otimes n}\big\|_1 \eqqcolon \epsilon_n \tendsn{} \epsilon<1/2$, we can write
\bb
2^{\floor{Rn}} &= 1 + R_\K^s\left( \Phi_2^{\otimes \floor{Rn}}\right) \\
&\geq (1+2\delta_n)^{-1} \left( 1 + R_\K^s\left( \Lambda_n\left(\Phi_2^{\otimes \floor{Rn}}\right)\right) \right) \\
&\geq (1+2\delta_n)^{-1} (1-2\epsilon_n) \left( 1+\tsr_{\!\K}\left(\rho^{\otimes n}\right) \right)
\ee
where now we used Lemma~\ref{lem:approx_monotone2} to incorporate the approximate entanglement non-generation. Then
\bb
R &= \lim_{n\to\infty} \frac{\floor{Rn}}{n} \\
&\geq \limsup_{n\to\infty} \frac1n \log_2 \left( 1+\tsr_{\!\K}\left(\rho^{\otimes n}\right) \right) + \liminf_{n\to\infty} \frac1n \log_2 (1-2\epsilon_n) - \limsup_{n\to\infty} \frac1n \log_2 (1+2\delta_n) \\
&= \tl_{\!\K} (\rho) - \limsup_{n\to\infty} \frac1n \log_2 (1+2\delta_n) \\
&= \tl_{\!\K} (\rho)\, ,
\ee
where in the last line we used the fact that
\bb
\lim_{n \to \infty} \frac{\log_2 \delta_n}{n} = 0
\ee
by hypothesis. The rest of the proof of the first part of the Theorem is then exactly the same as in Theorem~\ref{main_tool_thm}.

The second part of the proof is very similar in that it follows Theorem~\ref{main_tool_thm}, but it goes directly to the tempered negativity $N_\tau$ rather than the intermediate quantity $\tl_{\!\K}$. Taking now any feasible sequence $(\Lambda_n)_{n\in \N}$ with $\Lambda_n \in \kp^N_{\delta_n}$, we have
\bb
2^{\floor{Rn}} &= 1 + R_\K^s\left( \Phi_2^{\otimes \floor{Rn}}\right) \\
&\textgeq{(i)} (1+2\delta_n)^{-1} \frac{ 1 + \left\| \Lambda_n\left(\Phi_2^{\otimes \floor{Rn}}\right)^\Gamma\right\|_1 }{2} \\
&\textgeq{(ii)} (1+2\delta_n)^{-1} \frac{ 1 + N_\tau \left(\left. \Lambda_n\left(\Phi_2^{\otimes \floor{Rn}}\right) \right| \rho^{\otimes n} \right) }{2} \\
&\textgeq{(iii)} (1+2\delta_n)^{-1} (1-2\epsilon_n) \frac{  1+  N_\tau \left(\rho^{\otimes n}\right) }{2},
\ee
where in (i) we used Lemma~\ref{lem:approx_monotone2}, in (ii) Proposition~\ref{elementary_prop}(b), and in (iii) the $\epsilon$-lemma (Lemma~\ref{crucial_lemma}). This gives
\bb
R &= \lim_{n\to\infty} \frac{\floor{Rn}}{n} \\
&\geq \limsup_{n \to \infty}  \frac1n \log_2 \frac{  1+ N_\tau \left(\rho^{\otimes n}\right)  }{2} + \liminf_{n\to\infty} \frac1n \log_2 (1-2\epsilon_n) - \limsup_{n\to\infty} \frac1n \log_2 (1+2\delta_n)\\
&\geq \limsup_{n\to\infty} \left(\frac1n \log_2  N_\tau(\rho^{\otimes n}) - \frac1n\right) - \limsup_{n\to\infty} \frac1n \log_2 \delta_n \\
&\geq \limsup_{n\to\infty} \left(\frac1n \log_2 N_\tau(\rho)^n - \frac1n\right) \\
&= E_N^\tau(\rho)
\ee
using the super-multiplicativity of the tempered negativity (Proposition~\ref{elementary_prop}(e)) and the assumption that $\delta_n = 2^{o(n)}$.
\end{proof}

As the final ingredient that will be required in the proof of Theorem~\ref{irreversibility_thm2}, we need to show that the distillable entanglement cannot increase even if we allow sub-exponential entanglement generation.

\begin{lemma}\label{lem:distillation_allequal}
Consider any state $\rho_{AB}$ and let $\K=\SEP$ or $\K=\PPT$. For any $\epsilon \in [0,1)$ and any non-negative sequence $(\delta_n)_n$ 
it holds that
\bb
E_{d,\,\kp^s_{(\delta_n)}}^\epsilon\!(\rho) = E_{d,\,\kp^N_{(\delta_n)}}^\epsilon\!(\rho)= E_{d,\,\kp^g_{(\delta_n)}}^\epsilon\!(\rho)\, .
\label{eq:distillation_allequal}
\ee
Moreover, if $\delta_n = 2^{o(n)}$ then also
\bb
E_{d,\,\kp}^\epsilon(\rho) = E_{d,\,\kp^s_{(\delta_n)}}^\epsilon\!(\rho) = E_{d,\,\kp^N_{(\delta_n)}}^\epsilon\!(\rho)= E_{d,\,\kp^g_{(\delta_n)}}^\epsilon\!(\rho)\, .
\label{eq:distillation_really_allequal}
\ee
\end{lemma}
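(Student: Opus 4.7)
The plan is to prove the two statements of the lemma by combining a twirling argument with explicit properties of isotropic states. The easy direction follows from the trivial inclusions: since $R^g_\K \leq R^s_\K$ and $N \leq R^s_\K$ (the latter recalled in~\eqref{std_rob_vs_negativity}), we have $\kp^s_{\delta_n} \subseteq \kp^g_{\delta_n}$ and $\kp^s_{\delta_n} \subseteq \kp^N_{\delta_n}$, yielding $E_{d,\kp^s_{(\delta_n)}}^\epsilon \leq E_{d,\kp^g_{(\delta_n)}}^\epsilon$ and $E_{d,\kp^s_{(\delta_n)}}^\epsilon \leq E_{d,\kp^N_{(\delta_n)}}^\epsilon$. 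Analogously $\kp \subseteq \kp^s_{\delta_n}$ (any strictly $\K$-preserving map has vanishing $R^s_\K$ on separable inputs), giving $E_{d,\kp}^\epsilon \leq E_{d,\kp^s_{(\delta_n)}}^\epsilon$.

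For the reverse inequalities in the first part, I would use the $U\otimes U^{\ast}$ twirl $\mathcal{T}_n$ on the output space, exactly as in the proof of Lemma~\ref{sepp_vs_pptp_lemma}. Given any $\Lambda_n \in \kp^g_{\delta_n}$ (resp.\ $\kp^N_{\delta_n}$) achieving distillation rate $R$ with asymptotic error $\epsilon$, the composition $\mathcal{T}_n \circ \Lambda_n$ achieves the same rate at the same error, since $\mathcal{T}_n$ fixes $\Phi_2^{\otimes \ceil{Rn}}$ and is trace-norm contractive (cf.~\eqref{twirling_decreases_trace_norm}). The crucial point is then that the outputs of $\mathcal{T}_n\circ\Lambda_n$ on any separable input are isotropic states $\rho_F$ on the two-qudit space with $d=2^{\ceil{Rn}}$, and for such states a direct calculation --- using the separable decomposition $\Phi_d = d\sigma_{+} - (d-1)\sigma_{-}$ from the Methods together with an isotropic dual witness of the form $X = \alpha\id + \beta\Phi_d$ --- yields $R^s_\K(\rho_F) = R^g_\K(\rho_F) = \max\{0, Fd-1\}$ and $N(\rho_F) = \max\{0,(Fd-1)/2\}$. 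Hence on isotropic states $R^s_\K = R^g_\K = 2N$; this immediately gives $\mathcal{T}_n\circ\Lambda_n \in \kp^s_{\delta_n}$ whenever $\Lambda_n \in \kp^g_{\delta_n}$, and therefore $E_{d,\kp^g_{(\delta_n)}}^\epsilon \leq E_{d,\kp^s_{(\delta_n)}}^\epsilon$. For the $\kp^N$ case the same argument only yields $\mathcal{T}_n\circ\Lambda_n \in \kp^s_{2\delta_n}$; I expect this intrinsic factor of $2$ to be the main technical subtlety of the first part, and I propose to absorb it by discarding a single output ebit before the twirl so that the distillation target becomes $\Phi_2^{\otimes \ceil{Rn}-1}$, at a vanishing cost of $1/n$ in the achievable rate.

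For the second part ($\delta_n = 2^{o(n)}$), the non-trivial inequality is $E_{d,\kp^s_{(\delta_n)}}^\epsilon \leq E_{d,\kp}^\epsilon$. I would prove it via an approximate monotonicity of the relative entropy of entanglement $E_{r,\K}$: for any $\Lambda \in \kp^g_\delta$, picking the optimal separable $\sigma^{\ast}$ for $\rho$, writing $\Lambda(\sigma^{\ast}) + \delta'\omega = (1+\delta')\sigma''$ with $\sigma'' \in \SEP$ and $\delta'\leq \delta$ per the definition of $R^g_\K$, and using the operator-monotonicity of $-\log$ together with the data-processing inequality for $D(\cdot\|\cdot)$, one derives $E_{r,\K}(\Lambda(\rho)) \leq E_{r,\K}(\rho) + \log(1+\delta)$. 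Chaining this with sub-additivity of $E_{r,\K}$ on tensor powers, with its asymptotic continuity~\cite{Donald1999,tightuniform} applied to the target $\Phi_2^{\otimes \ceil{Rn}}$ at fixed $\epsilon$, and with the hypothesis $\log(1+\delta_n)/n \to 0$, one obtains the same $\epsilon$-dependent upper bound on achievable rates in $\kp^s_{(\delta_n)}$ as in the strict class $\kp$ (the latter being identified with $E_{r,\K}^\infty(\rho)$ in the zero-error limit by~\eqref{distillable_kp_Er_regularised}). Combined with the trivial lower bound $E_{d,\kp}^\epsilon \leq E_{d,\kp^s_{(\delta_n)}}^\epsilon$, this closes the chain of equalities.
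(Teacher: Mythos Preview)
Your twirling-plus-isotropic-state argument for the first part is a genuinely different and in some ways cleaner route than the paper's. The paper instead derives an \emph{exact} one-shot fidelity characterisation
\[
\sup_{\Lambda\in\kp^{s}_\delta}F\!\left(\Lambda(\rho^{\otimes n}),\Phi_2^{\otimes m}\right)
=\sup_{\Lambda\in\kp^{g}_\delta}F\!\left(\Lambda(\rho^{\otimes n}),\Phi_2^{\otimes m}\right)
=\varphi_{n,m}(\delta),
\qquad
\varphi_{n,m}(\delta)\leq\sup_{\Lambda\in\kp^{N}_\delta}F\leq\varphi_{n,m}(2\delta),
\]
by constructing an explicit measure-and-prepare protocol $\Gamma_W\in\kp^{s}_\delta$ from any feasible witness $W$, and then absorbs the factor $2$ by reducing the rate by an arbitrary $\zeta>0$ (since $(1+\delta_n)/2^{\ceil{(R-\zeta)n}}\geq(1+2\delta_n)/2^{\ceil{Rn}}$ eventually). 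Your route avoids the explicit one-shot formula, but your proposed fix for the factor $2$ is not right as stated: tracing out one ebit \emph{before} twirling still leaves $R^{s}=2N\leq 2\delta_n$ on the resulting isotropic state. The order that works is twirl on $d$ first and \emph{then} trace, since an isotropic $\rho_F$ with $N(\rho_F)\leq\delta_n$ traces to an isotropic state on $d/2$ with $R^{s}\leq \delta_n-\tfrac12+O(1/d)\leq\delta_n$ for large $n$.

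The second part, however, has a genuine gap. Your relative-entropy argument yields at best
\[
E^\epsilon_{d,\kp^{s}_{(\delta_n)}}(\rho)\;\leq\;\frac{E^\infty_{r,\K}(\rho)}{1-c\,\epsilon}
\]
for some constant $c$ coming from asymptotic continuity, and exactly the \emph{same} upper bound for $E^\epsilon_{d,\kp}(\rho)$. But matching upper bounds together with the trivial inequality $E^\epsilon_{d,\kp}\leq E^\epsilon_{d,\kp^{s}_{(\delta_n)}}$ does \emph{not} force equality: you would need to know that $E^\epsilon_{d,\kp}$ actually attains that bound, which it does not in general (indeed, by the strong converse one has $E^\epsilon_{d,\kp}=E^\infty_{r,\K}$ for all $\epsilon\in[0,1)$, strictly smaller than $E^\infty_{r,\K}/(1-c\epsilon)$ for $\epsilon>0$). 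The paper circumvents this entirely by using the exact formula $\varphi_{n,m}(\delta)$: when $\delta_n=2^{o(n)}$ one has $(1+\delta_n)/2^{\ceil{Rn}}\leq 1/2^{\ceil{(R-\zeta)n}}$ eventually, whence $\varphi_{n,\ceil{Rn}}(\delta_n)\leq\varphi_{n,\ceil{(R-\zeta)n}}(0)$, so any rate achievable with $\kp^{g}_{(\delta_n)}$ at error $\epsilon$ is achievable (minus $\zeta$) with $\kp$ at the \emph{same} error. That exact transfer at fixed $\epsilon$ is what your asymptotic-continuity bound cannot deliver.
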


\begin{proof}
The proof will proceed in two steps. First, we establish expressions for the minimal error achievable in distillation with $(M, \delta)$-approximately $\K$-preserving operations. Then, we argue that, for sub-exponential $\delta_n$, the contributions from the parameter $\delta_n$ to the performance of the distillation task can be absorbed into the transformation rates, effectively preventing any improvement in the asymptotic distillation error.

Consider first any operation $\Lambda \in \kp^g_\delta\left( A^nB^n\to A_0^m B_0^m\right)$ for a fixed $\delta$, where $m$ is a generic positive integer. We would then like to understand exactly the error in the transformation from $\rho^{\otimes n}$ to the maximally entangled state at some rate $R$, which we will for now quantify using the fidelity $F(\omega,\tau)\coloneqq \|\sqrt{\omega}\sqrt{\tau}\|_1^2$. 
We have
\bb
F\left(\Lambda(\rho^{\otimes n}), \Phi_2^{\otimes m}\right) &= \Tr  \Lambda\left(\rho^{\otimes n}\right) \Phi_2^{\otimes m}\\
&= \Tr \rho^{\otimes n} \Lambda^\dagger\big( \Phi_2^{\otimes m}\big)
\ee
using that $\Phi_2$ is a pure state. Notice now that, by definition of the generalised robustness, for any $\sigma \in \K \cap \D(\HH)$ we have $\Lambda(\sigma) \leq (1+\delta) \sigma'$ for some  $\sigma' \in \K \cap \D(\HH)$. Thus, for any $\sigma \in \K \cap \D(\HH)$,
\bb
\Tr \Lambda^\dagger\big( \Phi_2^{\otimes m}\big) \sigma &\leq \Tr \Phi_2^{\otimes m} (1+\delta) \sigma'\\
&\leq (1+\delta) \frac{1}{2^{m}}
\ee
where in the first line we used the positivity of $\Phi_2$, and in the second that the maximal overlap of $\Phi_2^{\otimes m}$ with a separable state is $\frac{1}{2^{m}}$~\cite{shimony_1995}. Noting also that $0 \leq \Lambda^\dagger\big( \Phi_2^{\otimes m}\big) \leq \id$ due to the fact that $\Lambda$ is positive and trace preserving, we get
\bb\label{eq:brandao_fidelity}
\sup_{\Lambda \in \kp^g_\delta} F\left(\Lambda(\rho^{\otimes n}), \Phi_2^{\otimes m}\right) &= \sup_{\Lambda \in \kp^g_\delta} \Tr \rho \Lambda^\dagger\big(\Phi_2^{\otimes m}\big)\\
&\leq \sup \lset \Tr \rho^{\otimes n} W \bar 0 \leq W \leq \id,\; \Tr W \sigma \leq (1+\delta) \frac{1}{2^{m}} \ \ \forall \sigma \in \K \cap \D(\HH) \rset \\
&\eqqcolon \varphi_{n,m}(\delta)\, .
\ee

The argument for operations $\kp^N_\delta$, where now negativity is the figure of merit, proceeds analogously. We have, for any $\sigma \in \K \cap \D(\HH)$ and any $\Lambda \in \kp^N_\delta$, that
\bb
\Tr \Lambda^\dag\big( \Phi_2^{\otimes m}\big) \sigma &= \Tr \big(\Phi_2^{\otimes m}\big)^\Gamma \Lambda(\sigma)^\Gamma\\
&\leq \left\|\big(\Phi_2^{\otimes m}\big)^\Gamma\right\|_\infty \big\|\Lambda(\sigma)^\Gamma\big\|_1\\
&\leq \frac{1}{2^{m}} (1 + 2 \delta)
\ee
where in the second line we used the Cauchy--Schwarz inequality, and in the third we used that $\frac{1}{2}\left(\big\|\Lambda(\sigma)^\Gamma\big\|-1\right) \leq \delta$ and that $\big\|\Phi_2^\Gamma\big\|_\infty = \frac{1}{2}$. This gives
\bb\label{eq:brandao_fidelity2}
\sup_{\Lambda \in \kp^N_\delta} F\left(\Lambda(\rho^{\otimes n}), \Phi_2^{\otimes m}\right) &\leq \sup \lset \Tr \rho^{\otimes n} W \bar 0 \leq W \leq \id,\; \Tr W \sigma \leq (1+2 \delta) \frac{1}{2^{m}} \ \ \forall \sigma \in \K \cap \D(\HH) \rset \\
&= \varphi_{n,m}(2\delta)\, ,
\ee
exactly the same as in~\eqref{eq:brandao_fidelity} up to the 
substitution $\delta \mapsto 2\delta$, which we will see to be immaterial.

For the other direction, define the separable~\cite{Horodecki1999} state $\tau_m\coloneqq (\id-\Phi_2^{\otimes m})/(4^m-1)\in \K\cap \D(\HH)$, which satisfies also that~\cite{Horodecki1999} $\Phi_2^{\otimes m} + (2^{m}-1) \tau_m \in \K$. Note that $R^s_\K\left(\Phi_2^{\otimes m}\right) = 2^m-1$ by~\eqref{robustness_k_ebits}, and therefore $\tau_m$ is the state that achieves the minimum in the definition of $R^s_\K$ \eqref{std_rob} for $\Phi_2^{\otimes m}$.
Take any operator $W$ such that $0 \leq W \leq \id$ and $\Tr W \sigma \leq (1+\delta) \frac{1}{2^{m}}$ for all $\sigma \in \K \cap \D(\HH)$, where we assume that $(1+\delta) \frac{1}{2^{m}}\leq 1$ without loss of generality. Define the map $\Gamma_W$ by
\begin{equation}\begin{aligned}
    \Gamma_W (X) = \left(\Tr W X\right) \Phi_2^{\otimes m} + \left( \Tr (\id - W) X \right) \tau_m\, .
\end{aligned}\end{equation}
This map is explicitly completely positive and trace preserving, and we can furthermore verify that, 
since $0\leq \Tr W\sigma\leq (1+\delta) \frac{1}{2^{m}}$, for any $\sigma \in  \K \cap \D(\HH)$ we have
\bb
\Gamma_W(\sigma) \in \co\left\{ \tau_m,\, (1+\delta) \frac{1}{2^{m}}\,\Phi_2^{\otimes m} + \left( 1-(1+\delta) \frac{1}{2^{m}} \right) \tau_m \right\} ,
\ee
where $\co$ denotes the convex hull. Since $\tau_m\in \K$ and $\Phi_2^{\otimes m} + (2^{m}-1) \tau_m \in \K$, also
\bb
(1+\delta) \frac{1}{2^{m}}\,\Phi_2^{\otimes m} + \left( 1-(1+\delta) \frac{1}{2^{m}} \right) \tau_m + \delta \tau_m = \frac{1+\delta}{2^m} \left( \Phi_2^{\otimes m} + \left( 2^m-1 \right) \tau_m \right) \in \K\, .
\ee
The convexity of $R^s_\K$ (which follows directly from the convexity of $\K$ itself) then implies that $R^s_\K(\Gamma_W(\sigma)) \leq \delta$, and hence $\Gamma_W \in \kp^s_{\delta}$. 
Noting that $F\big(\Gamma_W(\rho^{\otimes n}), \Phi_2^{\otimes m}\big) = \Tr \rho^{\otimes n} W$, optimising over all feasible $W$ yields
\begin{equation}\begin{aligned}
    \sup_{\Lambda \in \kp^s_\delta} F\left(\Lambda(\rho^{\otimes n}), \Phi_2^{\otimes m}\right) &\geq \sup \lset \Tr \rho^{\otimes n} W \bar 0 \leq W \leq \id,\; \Tr W \sigma \leq (1+\delta) \frac{1}{2^{m}} \ \ \forall \sigma \in \K \cap \D(\HH) \rset \\
    &= \varphi_{n,m}(\delta)\, ,
\end{aligned}\end{equation}
where the function $\varphi_{n,m}(\delta)$ is defined in~\eqref{eq:brandao_fidelity}.
Using the inclusion between the sets of operations $\kp^s_{\delta} \subseteq \kp^g_{\delta}$ and $\kp^s_{\delta} \subseteq \kp^N_{\delta}$, we therefore obtain that for all $n,m,\delta$ such that $(1+\delta)\frac{1}{2^m}\leq 1$
\bb
\sup_{\Lambda \in \kp^s_\delta} F\left(\Lambda(\rho^{\otimes n}), \Phi_2^{\otimes m}\right) &= \sup_{\Lambda \in \kp^g_\delta} F\left(\Lambda(\rho^{\otimes n}), \Phi_2^{\otimes m}\right) = \varphi_{n,m}(\delta)\, ,\\
\varphi_{n,m}(\delta) &\leq \sup_{\Lambda \in \kp^N_\delta} F\left(\Lambda(\rho^{\otimes n}), \Phi_2^{\otimes m}\right) \leq \varphi_{n,m}(2\delta)\, .
\label{fidelity_distillation_bounds}
\ee
We now pass from the fidelity to the trace distance. We claim that for $M \in \{ R^s_\K,\, R^g_\K,\, N \}$ it holds that
\bb
\inf_{\Lambda \in \kp^M_\delta} \frac12 \left\|\Lambda(\rho^{\otimes n}) - \Phi_2^{\otimes m}\right\|_1 = 1 - \sup_{\Lambda \in \kp^M_\delta} F\left(\Lambda(\rho^{\otimes n}), \Phi_2^{\otimes m}\right) .
\label{fidelity_to_trace_norm}
\ee
To see why this is the case, it suffices to observe that we can always twirl the output of $\Lambda$ without loss of generality, i.e.\ we can substitute $\Lambda \mapsto \T_m\circ \Lambda$, where $\T_m$ is defined as in~\eqref{twirling}. Doing so does not change the fact that $\Lambda \in \kp^M_\delta$, simply because $M$ is convex and invariant under local unitaries; furthermore, twirling leaves $F\left(\Lambda(\rho^{\otimes n}), \Phi_2^{\otimes m}\right) = \Tr \Lambda(\rho^{\otimes n}) \Phi_2^{\otimes m}$ invariant and never increases $\frac12 \left\|\Lambda(\rho^{\otimes n}) - \Phi_2^{\otimes m}\right\|_1$ (see~\eqref{twirling_decreases_trace_norm}). At the same time, since the output state of $\Lambda$ is now twirled, and hence it is a convex combination of the orthogonal states $\Phi_2^{\otimes m}$ and $\tau_m$, we have that  
\bb
\frac12 \left\|\Lambda(\rho^{\otimes n}) - \Phi_2^{\otimes m}\right\|_1 = 1 - \Tr \Lambda(\rho^{\otimes n}) \Phi_2^{\otimes m} = 1 - F\left(\Lambda(\rho^{\otimes n}), \Phi_2^{\otimes m}\right) ,
\ee
completing the proof of~\eqref{fidelity_to_trace_norm}. Combining~\eqref{fidelity_distillation_bounds} with~\eqref{fidelity_to_trace_norm} we arrive at
\bb
\inf_{\Lambda \in \kp^s_\delta} \frac12 \left\|\Lambda(\rho^{\otimes n}) - \Phi_2^{\otimes m}\right\|_1 &= \inf_{\Lambda \in \kp^g_\delta} \frac12 \left\|\Lambda(\rho^{\otimes n}) - \Phi_2^{\otimes m}\right\|_1 = 1-\varphi_{n,m}(\delta)\, ,\\
1-\varphi_{n,m}(2\delta) &\leq \inf_{\Lambda \in \kp^N_\delta} \frac12 \left\|\Lambda(\rho^{\otimes n}) - \Phi_2^{\otimes m}\right\|_1 \leq 1-\varphi_{n,m}(\delta)\, .
\label{error_distillation_bounds}
\ee
The above relations imply immediately that
\bb
E_{d,\,\kp^s_{(\delta_n)}}^\epsilon\!(\rho) = E_{d,\,\kp^g_{(\delta_n)}}^\epsilon\!(\rho) \leq E_{d,\,\kp^N_{(\delta_n)}}^\epsilon\!(\rho)
\ee
for all sequences $(\delta_n)_n$. We now proceed to show that in fact also $E_{d,\,\kp^N_{(\delta_n)}}^\epsilon\!(\rho)\leq E_{d,\,\kp^g_{(\delta_n)}}^\epsilon\!(\rho)$, which will complete the proof of the claim~\eqref{eq:distillation_allequal}. To this end, fix $\zeta>0$, and let $R \geq E_{d,\,\kp^N_{(\delta_n)}}^\epsilon\!(\rho) - \zeta$ be an achievable rate for entanglement distillation from $\rho$ at error $\epsilon$ with operations from $\kp^N_{(\delta_n)}$; by definition, this implies that
\bb
\limsup_{n\to\infty} \inf_{\Lambda \in \kp^N_\delta} \frac12 \left\|\Lambda(\rho^{\otimes n}) - \Phi_2^{\otimes \ceil{Rn}}\right\|_1 \leq \epsilon\, ,
\ee
and thus, by~\eqref{error_distillation_bounds}, that $\liminf_{n\to\infty} \varphi_{n,\ceil{Rn}}(2\delta_n)\geq 1-\epsilon$. Now, since $\frac{1+\delta_n}{2^{\ceil{(R-\zeta)n}}}\geq \frac{1+2\delta_n}{2^{\ceil{Rn}}}$ for all sufficiently large $n$, we see by direct inspection of the definition of $\varphi_{n,m}(\delta)$ in~\eqref{eq:brandao_fidelity} that
\bb
\varphi_{n,\, \ceil{(R-\zeta)n}}(\delta_n)\geq \varphi_{n,\ceil{Rn}}(2\delta_n)
\ee
for all sufficiently large $n$, thus implying that also
\bb
\liminf_{n\to\infty} \varphi_{n,\, \ceil{(R-\zeta)n}}(\delta_n)\geq 1-\epsilon\, .
\ee
Again by~\eqref{error_distillation_bounds}, this is equivalent to stating that $R-\zeta$ is an achievable rate for $E_{d,\,\kp^g_{(\delta_n)}}^\epsilon\!(\rho)$. Putting all together, we see that
\bb
E_{d,\,\kp^g_{(\delta_n)}}^\epsilon\!(\rho) \geq R-\zeta \geq E_{d,\,\kp^N_{(\delta_n)}}^\epsilon\!(\rho) - 2\zeta\, .
\ee
Since $\zeta>0$ was arbitrary, we conclude that indeed $E_{d,\,\kp^g_{(\delta_n)}}^\epsilon\!(\rho) \geq E_{d,\,\kp^N_{(\delta_n)}}^\epsilon\!(\rho)$, as desired. This establishes~\eqref{eq:distillation_allequal}.

The reasoning for~\eqref{eq:distillation_really_allequal} is analogous. If $\delta_n=2^{o(n)}$ then $\frac{1+\delta_n}{2^{\ceil{Rn}}}\leq \frac{1}{2^{\ceil{(R-\zeta)n}}}$ for all $\zeta>0$ and all sufficiently large $n$. Therefore, if $R$ is an achievable rate for $E_{d,\,\kp^g_{(\delta_n)}}^\epsilon\!(\rho)$ then $R-\zeta$ must be achievable for $E_{d,\,\kp}^\epsilon(\rho)$; since $\zeta>0$ is arbitrary, this ensures that $E_{d,\,\kp^g_{(\delta_n)}}^\epsilon\!(\rho)\leq E_{d,\,\kp}^\epsilon(\rho)$. The other inequality follows trivially from the inclusion $\kp \subseteq \kp^g_{\delta_n}$ for any $\delta_n$. This proves also~\eqref{eq:distillation_really_allequal}.
\end{proof}

\begin{proof}[Proof of Theorem~\ref{irreversibility_thm2}]
For the entanglement cost, it holds that
\begin{equation}
    1 \texteq{(i)} E_N^\tau(\omega_3) \textleq{(ii)} E_{c,\,\kp^N_{(\delta_n)}}^\epsilon (\omega_3) \textleq{(iii)} E_{c,\,\kp^s_{(\delta_n)}}^\epsilon (\omega_3) \textleq{(iii)} E_{c,\,\kp}^\epsilon(\omega_3) \texteq{(i)} 1
\end{equation}
and so all inequalities must be equalities. Here, we used (i)~our results in Theorem~\ref{irreversibility_thm}, (ii)~Theorem~\ref{main_tool_thm2}, and (iii)~the inclusion $\kp \subseteq \kp^s_{\delta_n} \subseteq \kp^N_{\delta_n}$ for any $\delta_n$.

For the distillable entanglement, Lemma \ref{lem:distillation_allequal} gives
\begin{equation}
    E_{d,\,\kp^s_{(\delta_n)}}\!(\omega_3) = E_{d,\,\kp^g_{(\delta_n)}}\!(\omega_3) = E_{d,\,\kp^N_{(\delta_n)}}\!(\omega_3) = E_{d,\,\kp} (\omega_3) = \log_2 \frac32,
\end{equation}
where the last equality was already shown in Theorem~\ref{irreversibility_thm}.
\end{proof}

\section[Fluctuations in entanglement manipulation protocols and comparison with thermodynamics]{Fluctuations in entanglement manipulation protocols\texorpdfstring{\\}{ }and comparison with thermodynamics} \label{intermezzo_note}

When studying the asymptotic behaviour of quantum systems, allowing for some type of microscopic fluctuation is arguably very natural from a thermodynamical perspective. Indeed, thermodynamics is a theory of macroscopic states and quantities, which should be unaffected by microscopic degrees of freedom and by the fluctuations associated with them.
Although in Supplementary Note~\ref{generation_note} we justified our framework for entanglement manipulation by demonstrating the issues that come with enforcing only asymptotic (rather than exact) entanglement non-generation, it is natural to wonder: Is it not more physically meaningful to allow for fluctuations at the level of microscopic systems, rather than completely forbid entanglement generation? Let us discuss the types of fluctuations that are allowed in our entanglement manipulation framework, to illustrate why our setting is in fact very similar to others considered previously in the literature, and to pinpoint the key difference with Brand\~{a}o and Plenio's approach~\cite{BrandaoPlenio1, BrandaoPlenio2}.

Although fluctuations are a natural component of our theory, not all fluctuations are equal:  in the context of state-to-state transformations in entanglement theory (cf.\ the discussion in Section~\ref{subsec_distillable_cost}), and more generally in that of quantum resource theories~\cite{RT-review}, we find it useful to introduce the following classification.
\begin{enumerate}[(I)]
    \item \emph{Fluctuations at the level of microscopic transformation error.} Specifically, one typically requires that transformations of the form $\rho \to \omega$ are only realised approximately as $\Lambda(\rho^{\otimes n}) \approx \omega^{\otimes m}$ with some non-zero error (usually measured by the trace norm), and only in the limit $n \to \infty$ does the error vanish. Such fluctuations are common to most approaches to quantum resource theories, including quantum thermodynamics as well as entanglement theory~\cite{RT-review}. Crucially, without such fluctuations, even quantum thermodynamics is no longer a reversible theory, and many state transformations become impossible~\cite{fang_2020,regula_2020-2,fang_2022}.
    
    We do note, however, that in some contexts 
    one may in fact be interested in exact, i.e.\ zero-error, entanglement distillation~\cite{Hayashi2006, Hayashi-VV} or dilution~\cite{Terhal-Horodecki,Martin-exact-PPT,Xin-exact-PPT}. In the language we are using here, such settings would explicitly rule out microscopic transformation errors (as in \eqref{distillable_exact}--\eqref{cost_exact}), or would require them to vanish exceedingly --- i.e.\ super-exponentially --- fast~\cite{Duan2016}. For these reasons, these approaches are generally regarded as less physically motivated than the one we take here, which explicitly allows for vanishingly small transformation errors. 
    
    \item \emph{Fluctuations in the resources consumed by the process.} For example, in quantum thermodynamics, one often includes a small ancillary system to activate a desired transformation~\cite{Brandao-thermo,sparaciari_2017}. This is arguably necessary to circumvent the energy super-selection rule that prohibits the creation of coherence between different energy levels under an energy-preserving unitary: without the ancilla, an energy-incoherent state would remain energy-incoherent under thermal operations, thus preventing dilution of resources altogether~\cite{Brandao-thermo}. Including a small ancillary system that carries some coherence in the energy eigenbasis solves the problem, but its small size means that its contributions to the rate will asymptotically vanish, leaving the underlying physics unaffected.

    More detailed insights on sensible physical requirements to impose on the ancillary system can be deduced from~\cite{sparaciari_2017}. There the authors look at transformations of the form
    \bb
    \rho^{\otimes n} \longrightarrow \rho^{\otimes n} \otimes \eta_n \longrightarrow \Tr_2 \left[ U \left(\rho^{\otimes n} \otimes \eta_n\right) U^\dag \right] ,
    \ee
    where $U$ is required to be an energy-preserving unitary, $\eta_n$ is a `small' and `not too energetic' ancilla, and $\Tr_2$ denotes partial trace over the ancillary modes. Although the setting of that work is slightly different than the one considered here or in~\cite{Brandao-thermo}, the conclusions are nevertheless insightful: 
    in~\cite{sparaciari_2017} it is found that, although fluctuations (II) are indeed necessary to enable the reversibility of the theory, ancillae with subexponential dimension $\log \dim\eta_n = O\big(\sqrt{n\log n}\big)$ and sublinear Hamiltonian operator norm $\|H\|_\infty=O\big(n^{2/3}\big)$ suffice to established the desired result. 
    
    Since they play such a key role in thermodynamics, it is certainly meaningful to try to take into account fluctuations of this type in entanglement theory as well. However, although it may not be apparent at first sight, these fluctuations are already implicitly included in 
   our framework --- and in fact, in most of the commonly encountered formulations of entanglement theory. They take the form of sublinear fluctuations in the number of entanglement bits consumed by either the distillation or the dilution process. But in the definitions of $E_d$~\eqref{distillable} and $E_c$~\eqref{cost} we only care about asymptotic rates, so sublinear fluctuations are suppressed in the limit.
    
    In other words: suppose that in our original definitions in Section~\ref{subsec_distillable_cost} we chose to consider a larger class of transformations than $\K$-preserving ones; a general transformation of this new class would be obtained by: (a)~attaching an ancilla of $o(n)$ many qubits per party initialised in any (possibly entangled) state, and (b)~performing a $\K$-preserving operation on the joint system. In this way, the allowed transformations are of the form
\bb
\rho^{\otimes n} \longrightarrow \rho^{\otimes n}\otimes \tau_n \longrightarrow \Lambda_n\left( \rho^{\otimes n} \otimes \tau_n \right) ,
\ee
where $\Lambda_n\in \kp$ is an arbitrary $\K$-preserving operation, and $\tau_n = \tau^{A'_nB'_n}_n$ is an ancillary state over a system $A'_nB'_n$. Our assumptions on this ancilla are as follows:
\begin{enumerate}[(i)]
\item $A'_nB'_n$ can be an arbitrary bipartite system, possibly dependent on $n$, but it has to have subexponential dimension, i.e.\ $\dim (A'_nB'_n) = \dim \tau_n = 2^{o(n)}$; that is, $\log \dim(A'_nB'_n) = o(n)$ is required to be sublinear, or in other words $A'_nB'_n$ should be made of sublinearly many qubits;
\item Apart from this, $\tau_n$ is completely arbitrary. It could be for instance a maximally entangled state composed of sublinearly many ebits.
\end{enumerate}

Formally, the new distillable entanglement and entanglement cost now look like this:
\begin{align}
\widetilde{E}_{d,\,\kp}(\rho_{AB}) &\coloneqq \sup\left\{R>0:\, \lim_{n\to\infty} \inf_{\Lambda_n \in \kp} \frac12 \left\| \Lambda_n\left( \rho_{AB}^{\otimes n}\! \otimes\! \tau_n \right) - \Phi_2^{\otimes \ceil{Rn}} \right\|_1\! =0,\ \lim_{n\to\infty}\! \frac{\log \dim \tau_n}{n} = 0 \right\} , \label{distillable_new} \\[.5ex]
\widetilde{E}_{c,\,\kp}(\rho_{AB}) &\coloneqq \inf\left\{R>0:\, \lim_{n\to\infty} \inf_{\Lambda_n \in \kp} \frac12 \left\| \Lambda_n\left( \Phi_2^{\otimes \floor{Rn}} \otimes \tau_n \right) - \rho_{AB}^{\otimes n} \right\|_1\! = 0,\ \lim_{n\to\infty}\! \frac{\log \dim \tau_n}{n} = 0 \right\} . \label{cost_new}
\end{align}
Here $\dim \tau_n$ denotes the dimension of the $A'_nB'_n$ system of the ancilla.

We can now ask ourselves: can it be that $\widetilde{E}_{d,\,\kp}(\rho_{AB}) > E_{d,\,\kp}(\rho_{AB})$ or $\widetilde{E}_{c,\,\kp}(\rho_{AB}) < E_{c,\,\kp}(\rho_{AB})$? In other words, can it happen that granting a sublinear number of ancillary qubits allows for better rates in either entanglement distillation or entanglement dilution? The answer turns out to be negative, and the fundamental reason is that ancillae made of a sublinear number of qubits cannot change the rate. In full detail, a proof can be constructed as follows.

\begin{lemma}
For any state $\rho_{AB}$, it holds that
\bb
\widetilde{E}_{d,\,\kp}(\rho_{AB}) = E_{d,\,\kp}(\rho_{AB})\qquad \text{and}\qquad \widetilde{E}_{c,\,\kp}(\rho_{AB}) = E_{c,\,\kp}(\rho_{AB})\, .
\ee
\end{lemma}

\begin{proof}
It is clear that $\widetilde{E}_{d,\,\kp}(\rho_{AB}) \geq E_{d,\,\kp}(\rho_{AB})$ and $\widetilde{E}_{c,\,\kp}(\rho_{AB}) \leq E_{c,\,\kp}(\rho_{AB})$, so we only need to prove the converse inequalities. To do so, assume that $R$ is an achievable rate for $\widetilde{E}_{c,\,\kp}(\rho_{AB})$, so that we can find a sequence of ancillae $\tau_n$ and $\K$-preserving operations $\Lambda_n\in \kp$ with the property that
\bb
\frac12 \left\| \Lambda_n\left( \Phi_2^{\otimes \floor{Rn}} \otimes \tau_n \right) - \rho_{AB}^{\otimes n} \right\|_1\tendsn{} 0\, .
\label{error}
\ee
Fix $\delta>0$, and take $n$ large enough so that $2^{n\delta} \geq \dim \tau_n$. Since any state can be prepared from the maximally entangled state of the same dimension via LOCC, we can prepare $\tau_n$ starting from $\Phi_2^{\otimes \ceil{n\delta}}$ and using an LOCC $\Xi_n$, i.e.\ consuming $\ceil{n\delta}$ many ebits. In formula,
\bb
\Xi_n\left( \Phi_2^{\otimes \ceil{\delta n}} \right) = \tau_n\, .
\ee
Now, consider the operation $\Lambda'_n \coloneqq \Lambda_n \circ \left(I^{\otimes \floor{Rn}}\otimes \Xi_n\right)$, where $I^{\otimes \floor{Rn}}$ denotes the identity channel on the first $\floor{Rn}$ qubits. Note that $\Lambda'_n$ is still $\K$-preserving, because the set of $\K$-preserving operations is closed under composition, and LOCCs are $\K$-preserving. Moreover, it satisfies that
\bb
\Lambda'_n\left( \Phi_2^{\otimes (\floor{Rn} +\ceil{\delta n})} \right) = \Lambda_n \left( \Phi_2^{\otimes \floor{Rn}} \otimes \Xi_n\left( \Phi_2^{\otimes\ceil{\delta n}} \right) \right) = \Lambda_n \left( \Phi_2^{\otimes \floor{Rn}} \otimes \tau_n \right) .
\ee
Thus, thanks to~\eqref{error}
\bb
\frac12 \left\| \Lambda'_n\left( \Phi_2^{\otimes (\floor{Rn} +\ceil{\delta n})} \right) - \rho_{AB}^{\otimes n} \right\|_1\tendsn{} 0\, .
\ee
Since now there is no ancilla, we obtain that the rate
\bb
\lim_{n\to\infty} \frac{\floor{Rn} + \ceil{\delta n}}{n} = R+\delta
\ee
is actually achievable for $E_{c,\kp}(\rho_{AB})$. Therefore,
\bb
E_{c,\kp}(\rho_{AB}) \leq R + \delta\, .
\ee
Taking the infimum over $R$ yields that $E_{c,\kp}(\rho_{AB}) \leq \widetilde{E}_{c,\kp}(\rho_{AB}) + \delta$. Since $\delta>0$ was arbitrary, we obtain that in fact
\bb
E_{c,\kp}(\rho_{AB}) \leq \widetilde{E}_{c,\kp}(\rho_{AB}) \, ,
\ee
which is what we wanted to show.

To show that $\widetilde{E}_{d,\,\kp}(\rho_{AB}) \leq E_{d,\,\kp}(\rho_{AB})$, assume that $R$ is an achievable rate for $\widetilde{E}_{d,\,\kp}(\rho_{AB})$, i.e.\ that there exists a valid choice of $(\tau_n)_n$ and $\K$-preserving operations $\Lambda_n\in \kp$ such that
\bb
\frac12 \left\| \Lambda_n\left( \rho^{\otimes n} \otimes \tau_n \right) - \Phi_2^{\otimes \ceil{Rn}} \right\|_1\tendsn{} 0\, .
\ee
Noting that $\tau_n$ is a finite-dimensional system, the optimal decomposition for the standard robustness in~\eqref{std_rob} exists, and we can write
\begin{equation}\begin{aligned}
    \tau_n = \left( 1 + R^s_\K (\tau_n)\right) \sigma_{n}^+ - R^s_\K(\tau_n) \, \sigma_n^-
\end{aligned}\end{equation}
for some normalised states $\sigma_n^\pm \in \K$. Observe then that for any state $\sigma \in \K_{A^nB^n}$, it holds that
\begin{equation}\begin{aligned}
    \Lambda_n(\sigma \otimes \tau_n) &= \left( 1 + R^s_\K (\tau_n) \right) \Lambda_n \big( \sigma \otimes \sigma_n^+ \big) - R^s_\K(\tau_n) \, \Lambda_n \big( \sigma \otimes \sigma_n^- \big)\\
    &= \left( 1 + R^s_\K (\tau_n) \right) \sigma' - R^s_\K(\tau_n) \, \sigma''
\end{aligned}\end{equation}
for some states $\sigma',\sigma'' \in \K$ due to the fact that $\K$ is closed under tensor product. This implies that
\begin{equation}\begin{aligned}
    R^s_\K \left( \Lambda_n(\sigma \otimes \tau_n) \right) \leq R^s_\K (\tau_n) \leq \dim \tau_n - 1,
\end{aligned}\end{equation}
where the last inequality is due to the fact that the value of $R^s_\K(\Phi_{\dim \tau_n}) = \dim \tau_n - 1$ is maximal among states of the same dimension.
As $\dim \tau_n = 2^{o(n)}$ by assumption, this entails that each $\Lambda_n(\cdot \otimes \tau_n)$ is an $(R^s_\K, 2^{o(n)})$-approximately $\K$-preserving operation (cf. Supplementary Note~\ref{generation_note}). The proof is thus concluded by recalling from Lemma~\ref{lem:distillation_allequal} that such operations cannot increase the rate of distillation compared to exactly $\K$-preserving operations, that is, $R \leq E_{d,\,\kp}(\rho)$.
\end{proof}

    In the context of quantum thermodynamics, we can compare this to works which studied the manipulation of quantum systems under so-called Gibbs-preserving operations (e.g.~\cite{Faist2019}), defined to be all channels $\Lambda$ such that $\Lambda(\gamma) = \gamma'$ where $\gamma,\gamma'$ are the equilibrium states of the input and output system, respectively. Although no explicit resource fluctuations are allowed in such a definition, the Gibbs-preserving framework recovers the exact same asymptotic rates as frameworks that do consider fluctuations in the consumed resources~\cite{Brandao-thermo, sparaciari_2017} --- this is precisely because the type-II fluctuations are implicit there, as one only looks at the rates.
    
    \item \emph{Genuine fluctuations in the fundamental physical laws governing the process.} These fluctuations are excluded in most of the literature on quantum thermodynamics: for example, in Refs.~\cite{Brandao-thermo, sparaciari_2017} the unitary operators acting on the joint system (including the ancilla) are required to be \emph{exactly} and not only approximately energy-preserving. This is well justified from a physical perspective: analogously, we would not claim that energy is only approximately preserved in a piece of uranium because we lost track of some neutrinos; include those back into the picture, and you will restore exact energy preservation. It should be noted at this point that although the law of energy conservation is, to the best of our knowledge, exactly obeyed, this fact alone does not play a decisive role for the validity of thermodynamics, which is a macroscopic rather than microscopic theory. Still, the fact that it is conceivable that Nature preserves energy exactly shows that a setting excluding genuine type-III fluctuations --- yet encompassing type-I and type-II fluctuations --- is somewhat reasonable, if not completely satisfactory.
    
    Importantly, in most of the known physical theories including thermodynamics, disallowing type-III fluctuations does not change the underlying physics whatsoever. Although they \emph{can} in principle be considered, as we elaborate below,
    genuine fluctuations of the physical laws governing thermodynamical processes --- such as global unitarity and energy conservation~\cite{Brandao-thermo} --- are 
    not necessary to construct a reversible theory of quantum thermodynamics~\cite{Brandao-thermo, Faist2019, Faist2019-technical, Faist2019b}.

    A possible objection to shis claim could be that the maps considered in~\cite{Brandao-thermo,sparaciari_2017}
    are in fact not exactly but only approximately Gibbs-preserving. While this could seem \emph{prima facie} an example of a type-III fluctuation, it is only a spurious one, essentially because of the aforementioned fact that these maps can be thought of as Gibbs-preserving operations acting on a larger quantum system with an ancilla of sublinear size. These apparent type-III fluctuations are thus in fact type-II fluctuations in disguise, and as such they are included also in our original framework.
     
    What our results show in this context is that Brand\~{a}o and Plenio's asymptotically non-entangling operations~\cite{BrandaoPlenio1, BrandaoPlenio2} admit fluctuations that are \emph{genuinely} of type III (see Section~\ref{generation_note}). 
    Specifically, they cannot be re-absorbed into the type-II category --- if this were the case, then the entanglement cost $E_{c,\, \mathrm{ANE}}$ under asymptotically non-entangling operations~\cite[Definition~III.2]{BrandaoPlenio2} would necessarily be equal to that under non-entangling operations, which we denoted by $E_{c,\, \sepp}$. Since the former is given by the regularised relative entropy of entanglement%
    \footnote{Here we note that, despite the issue with the proof of~\cite{BrandaoPlenio2} uncovered in~\cite{berta_gap}, this part of the argument by Brand\~ao and Plenio is not affected. The proof can also be obtained by combining the framework of~\cite{BrandaoPlenio2} with the asymptotic equipartition property of $R^g_\SEP$, for which an alternative proof is given in ~\cite{datta_2009-2}.}
    ~\cite{BrandaoPlenio2, datta_2009-2}, in formula $E_{c,\, \mathrm{ANE}} = E_{r,\SEP}^\infty$, we would deduce that the entanglement cost under non-entangling operations must also equal $E_{c,\,\sepp} = E_{r,\SEP}^\infty$. But that this is not the case in general is precisely the content of our main result (Theorem~\ref{irreversibility_thm}), whose proof reveals that $E_{c,\, \sepp}(\omega_3) = 1 > \log_2 \frac32 = E_{r,\, \SEP}^\infty(\omega_3)$.

    Nevertheless, one can argue that allowing type-III fluctuations is actually the sensible thing to do in entanglement manipulation, as it allows one to achieve an even greater level of generality and a stricter adherence to the spirit of thermodynamics. 
    This is especially important when discussing no-go results such as asymptotic irreversibility: if reversibility could be restored with just an unequivocally vanishing amount of fluctuations --- even ones of type III --- then, arguably, such irreversibility would not be robust. This is precisely the motivation for our Theorem~\ref{irreversibility_thm2}, where we clarify that it is impossible to have any reversible framework of entanglement that generates only small amounts of it: using as an entanglement quantifier either the standard robustness of entanglement or the negativity, we show that these `fluctuations' must in fact be macroscopically large, putting into question the physicality of any conceivable reversible theory of entanglement.

\end{enumerate}

What is pivotal to understanding the consequences of our results is that thermodynamics is a reversible theory even when fluctuations of type III are not allowed. This is addressed in works such as Ref.~\cite{Brandao-thermo,Faist2019-technical, Faist2019b}, where the manipulation of quantum states under 
thermal operations was considered, and extended also to quantum channels in~\cite{Faist2019} under the Gibbs-preserving framework. That is, fluctuations of types I and II fully suffice to recover the entropy as the unique quantity governing the thermodynamical transformations of macroscopic systems, and the physical constraints of energy conservation can be enforced at all scales, including microscopic ones. In fact, we stress that in the Gibbs-preserving framework ---  which is completely analogous to the 
non-entangling operations used in our work --- even fluctuations of type I are sufficient to establish the reversibility of thermodynamics~\cite{Faist2019}, but one can equivalently consider type-I and type-II fluctuations.

Therefore, our entanglement manipulation framework under non-entangling operations --- which does allow type-I and type-II but forbids type-III fluctuations --- follows exactly the same reasoning as the quantum thermodynamics frameworks of Refs.~\cite{horodecki_2013, Brandao-thermo, weilenmann_2016, thermo-review, yungerhalpern_2016, Faist2019, sparaciari_2017, Faist2019-technical, Faist2019b}, and yet it leads to a completely opposite conclusion, as per Theorem~\ref{irreversibility_thm}. It strikes us as surprising the stark contrast between thermodynamics, which is asymptotically reversible, and entanglement theory, which turns out to be fundamentally irreversible under the same assumptions, no matter how much one strives to avoid it.

That is not to say that there is something inherently wrong with the framework proposed by Brand\~ao and Plenio~\cite{BrandaoPlenio2} --- quite the contrary, there is no intrinsic reason why fluctuations of type~III \emph{cannot} be allowed, even if they are not needed in thermodynamics. However, due to the ambiguity in defining `small' amounts of entanglement in the asymptotic limit, even such permissive fluctuations are not enough in light of our Theorem~\ref{irreversibility_thm2}, unless they are made non-vanishingly large. 
Our work therefore provides insight into how any potential reversible entanglement framework would function and the size of `fluctuations' it would need to allow.

In conclusion, although connections between entanglement theory and thermodynamics such as the one conjectured in~\cite{BrandaoPlenio2} can 
possibly be 
established, our work conclusively shows that 
reversibility of entanglement, if at all possible,
cannot 
play out in the same way as it 
does in the theory of thermodynamics 
(with type-I and type-II fluctuations, but without type-III fluctuations), or indeed not even 
in a similar way (with universally small type-III fluctuations). 
A fundamental and inexorable difference thus exists between the two theories.

\section{Further considerations} \label{further_considerations_note}

\subsection{Confessions of \texorpdfstring{$\boldsymbol{\omega_3}$}{omega\_3}}

Here we explain the intuition behind our choice of the state $\omega_3$ as the counterexample to the reversibility of entanglement manipulation (Theorems~\ref{irreversibility_thm} and~\ref{irreversibility_thm2}). 
It will rely on the relation between two entanglement monotones that we have encountered in the course of this work --- the standard robustness $R^s_\K$ and the generalised robustness $R^g_\K$.
Crucially, the work of Brand\~{a}o and Plenio~\cite{BrandaoPlenio2} connected each of these quantities with the operational tasks of entanglement distillation and dilution. In particular, Ref.~\cite{BrandaoPlenio2} (cf.~\cite{brandao_2011}) showed that, when considering only a single copy of a state rather than an asymptotic rate, the entanglement cost under $\K$-preserving operations is given exactly by the logarithm of $1+R^s_\K(\rho)$. This then implies that the asymptotic exact cost of entanglement (Eq.~\eqref{cost_exact}) is given by
\begin{equation}
E_{c,\,\kp}^\mathrm{exact} (\rho) = \lim_{n \to \infty} \frac{1}{n} \log_2 \left( 1+R^s_\K\left(\rho^{\otimes n}\right) \right).
\label{cost_exact_regularised_rob}
\end{equation}
The entanglement cost $E_{c,\,\kp}$ also needs to account for an asymptotically vanishing transformation error, and it can thus be expressed by suitably `smoothing' the robustness over the $\epsilon$-ball around a given state, i.e.
\begin{equation}\label{smoothed_Rs}
E_{c,\,\kp} (\rho) = \lim_{\epsilon \to 0^+} \lim_{n \to \infty} \frac{1}{n} \inf_{\substack{\rho' \in \D(\HH_{AB}^{\otimes n}),\\ \frac12 \left\|\rho' - \rho^{\otimes n}\right\|_1 \leq \epsilon}} \log_2 \left( 1+R^s_\K(\rho') \right).
\end{equation}
On the other hand, employing a connection established between the generalised robustness $R^g_\K$ and the regularised relative entropy $E_{r,\K}^\infty$~\cite{BrandaoPlenio2,datta_2009-2}, 
the distillable entanglement can be tightly upper bounded as~\cite{BrandaoPlenio2,hayashi_book}
\begin{equation}\label{smoothed_Rg}
E_{d,\,\kp} (\rho) \leq E_{r,\,\K}^\infty(\rho_{AB}) = \lim_{\epsilon \to 0^+} \lim_{n \to \infty} \frac{1}{n} \inf_{\substack{\rho' \in \D(\HH_{AB}^{\otimes n}), \\ \frac12 \left\|\rho' - \rho^{\otimes n}\right\|_1 \leq \epsilon}} \log_2 \left( 1+R^g_\K(\rho') \right) .
\end{equation}
These expressions lead to a very curious fact: showing the irreversibility of entanglement manipulation under the class of $\K$-preserving maps 
can be done by exhibiting a gap between the smoothed regularisations of $R^s_\K$ and $R^g_\K$ in~\eqref{smoothed_Rs} and~\eqref{smoothed_Rg}.

The daunting expressions for the regularised robustness measures do not immediately make studying reversibility any easier. Let us then start by asking a more basic question: is there a gap between the standard and the generalised entanglement robustness, $R^s_\SEP$ and $R^g_\SEP$? We stumbled upon the state $\omega_3$ precisely when looking for a way to demonstrate such a gap. Indeed, we have already seen in the course of proving Theorem~\ref{irreversibility_thm} that
\begin{equation}\begin{aligned}
R^s_\SEP (\omega_3) = \frac{3}{4}.
\end{aligned}\end{equation}
It is also not difficult to show that the decomposition $\omega_3 + \frac{1}{2} \Phi_3 = \frac{1}{2} P_3$ is optimal for the generalised robustness of this state, giving
\begin{equation}\begin{aligned}
R^g_\SEP (\omega_3) = \frac{1}{2}.
\end{aligned}\end{equation}
We thus see an explicit gap between the two robustness measures. The direct connections between $R^s_\SEP$ and $R^g_\SEP$ on one side and, respectively, the entanglement cost and distillable entanglement on the other then motivated us to look into $\omega_3$ as a possible candidate for a state whose cost could be strictly larger than the entanglement that can be distilled from it. 
This leads directly to the considerations expounded in the main text of the paper and to the main results of this work.

\subsection{Strong converses and error-rate trade-offs} \label{strong_converses_note}

According to Definition~\ref{reversibility_def}, the theory of entanglement manipulation is considered reversible under some set of operations if the distillable entanglement and the entanglement cost coincide for all states. As we have seen, these latter two quantities are defined in terms of asymptotic transformations in which the allowed error is required to vanish as the number of copies of the state grows. In information theory, classical as well as quantum, one can study also weaker notions of transformation rates, dubbed \emph{strong converse rates}. In this setting, one requires instead that the error, as measured by half of the trace norm distance between the output and the target state, is bounded away from its maximum value of $1$. 
Intuitively, what this means is that attempting to distill or dilute entanglement at a rate larger than the strong converse one necessarily incurs an error that grows to $1$, making the protocol useless.
Formally, we can define the strong converse rates of distillation and dilution, respectively, by
\begin{equation}
E_{d,\,\kp}^\dag(\rho_{AB}) \coloneqq \sup_{\epsilon\in [0,1)} E_{d,\,\kp}^\epsilon(\rho_{AB})\, ,\qquad E_{c,\,\kp}^\dag(\rho_{AB}) \coloneqq \inf_{\epsilon\in [0,1)} E_{c,\,\kp}^\epsilon(\rho_{AB})\, .
\end{equation}
From these definitions and from the discussion that inspired them it is clear that $E_{d,\,\kp}^\dag(\rho)\geq E_{d,\,\kp}(\rho)$ and $E_{c,\,\kp}^\dag(\rho) \leq E_{c,\,\kp}(\rho)$ for all states $\rho$.

We could thus wonder whether there is a possibility of achieving a larger distillable entanglement or a lower entanglement cost, perhaps even restoring a (substantially weaker) form of reversibility, by passing to the corresponding strong converse rates. In the case of entanglement distillation from the state $\omega_3$ defined in~\eqref{omega_3}, such possibility can be ruled out by exploiting a result of Hayashi~\cite[Theorem~8.7]{hayashi_book}~(see also~\cite{BrandaoPlenio2}), who showed that the regularised relative entropy distance from the set of states in $\K$ defined in~\eqref{Er_regularised} is also an upper bound  on the strong converse distillable entanglement, i.e.\ $E_{d,\,\kp}(\rho_{AB}) \leq E_{d,\,\kp}^\dag(\rho_{AB}) \leq E_{r,\, \K}^\infty(\rho_{AB})$ (cf.~\eqref{eq:distillable_upper_Er}). Our result in Theorem~\ref{irreversibility_thm} then implies that $E_{d,\, \kp}(\omega_3) = E_{d,\,\kp}^\dagger(\omega_3) = \frac32$. Indeed, coupled with Lemma~\ref{lem:distillation_allequal}, the same strong converse bound holds for distillation with operations that generate sub-exponential amounts of entanglement.

As for the entanglement cost, we have not yet been able to prove that no improvement can be obtained by considering the corresponding strong converse rate. However, we deem such a possibility quite unlikely, 
and we present strong evidence against it by showing that any such error would have to be very large, which effectively rules out this approach as a practically viable way of improving entanglement dilution. We leave the complete solution of this problem, which we formalise below, as an open question left for future work.

\begin{cj} \label{strong_converse_Ec_cj}
For $\K=\SEP,\PPT$, the two-qutrit state $\omega_3$ defined by~\eqref{omega_3} satisfies that
\begin{equation}
    \inf_{\epsilon \in [0,1)} E_{c,\, \kp}^\epsilon (\omega_3) = E_{c,\, \kp}^\dag(\omega_3) = 1\, .
\label{strong_converse_Ec}
\end{equation}
\end{cj}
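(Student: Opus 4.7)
The conjecture asks for a full strong converse: at any rate $R<1$ the dilution error for $\omega_3$ should approach the maximal value $1$, not merely remain bounded below by a constant. The proof of Theorem~\ref{irreversibility_thm} already handles $\epsilon<1/2$, and the restriction is traceable to a single step, the $\epsilon$-lemma (Lemma~\ref{crucial_lemma}), whose bound $N_\tau(\rho'|\rho)\geq(1-2\epsilon)N_\tau(\rho)$ becomes vacuous once $\epsilon\geq 1/2$. My plan would be to replace this step by a smoothed-negativity argument that remains informative for all $\epsilon<1$, then re-run the chain of inequalities in the proof of Theorem~\ref{main_tool_thm} to conclude $R\geq 1$.

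Concretely, I would introduce a family of $\epsilon$-smoothed tempered negativities, in the spirit of the hypothesis-testing relative entropies used for strong converses in quantum Shannon theory, by absorbing the error budget directly into the variational problem. A natural candidate is $N_\tau^\epsilon(\rho)\coloneqq\sup\{\Tr X\rho-2\epsilon\|X\|_\infty : \|X^\Gamma\|_\infty\leq 1,\ X\in[-\id,\id]_{\K^*}\}$, which obeys $1+2R_\K^s(\rho')\geq N_\tau^\epsilon(\rho)$ for every $\rho'$ in the $\epsilon$-trace-ball around $\rho$, by direct application of H\"older. Slotting this bound into the argument following \eqref{epsilon_lemma_at_work} reduces the conjecture to showing $\limsup_n \tfrac1n\log_2 N_\tau^\epsilon(\omega_3^{\otimes n})\geq 1$ for every fixed $\epsilon<1$. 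A parallel route, bypassing the tempered object altogether, would exploit the Brand\~ao--Plenio identity~\eqref{cost_exact_regularised_rob} to recast the problem as proving $\limsup_n\tfrac1n\log_2\inf_{\rho'_n}(1+R_\K^s(\rho'_n))\geq 1$, where $\rho'_n$ ranges over the $\epsilon$-trace-ball around $\omega_3^{\otimes n}$; any sub-multiplicative control of the smoothed robustness on the symmetric subspace would yield this directly.

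The main obstacle is the tightness of the operator-norm dependence in the perturbation bound. The witness $X_3$ in \eqref{X_3} certifying $N_\tau(\omega_3)\geq 2$ saturates $\|X_3\|_\infty=\Tr X_3\omega_3=2$, so its tensor power $X_3^{\otimes n}$ has operator norm $2^n$, exactly matching its expectation against $\omega_3^{\otimes n}$. A trace-distance perturbation of magnitude $\epsilon$ therefore shifts $\Tr X_3^{\otimes n}\rho'$ by up to $2\epsilon\cdot 2^n$, annihilating the lower bound whenever $\epsilon\geq 1/2$, and so the naive product ansatz feeds exactly nothing into $N_\tau^\epsilon(\omega_3^{\otimes n})$ beyond what the $\epsilon$-lemma already gave. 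Any successful strategy must therefore construct non-product witnesses whose operator norms are asymptotically smaller than their expectations against $\omega_3^{\otimes n}$ by a factor growing with $n$, while preserving the partial-transpose constraint $\|X^\Gamma\|_\infty\leq 1$ and the super-multiplicative scaling.

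I expect the decisive technical ingredient to be a symmetry-based reduction: since $\omega_3^{\otimes n}$ is permutation-invariant, any $\epsilon$-close state may be replaced by its symmetrisation without loss of generality, which lives in the symmetric subspace of dimension polynomial in $n$. Within this drastically smaller effective space, one can hope to design witnesses of polynomially rather than exponentially large operator norm --- for instance by taking symmetric projections of $X_3^{\otimes n}$ --- and to recover the per-copy rate of $1$ through a de~Finetti or post-selection bound in place of the exact super-multiplicativity of Proposition~\ref{elementary_prop}(e). The core analytic difficulty is ensuring that this symmetric-subspace witness still attains an expectation value of order $2^n$ against $\omega_3^{\otimes n}$ despite its suppressed operator norm, which I suspect requires a genuinely new quantitative input beyond the techniques developed in the Supplementary Information.
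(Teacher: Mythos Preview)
This statement is a \emph{conjecture} in the paper, not a theorem: the authors explicitly leave it open, writing that they ``have not yet been able to prove'' it and formalise it ``as an open question left for future work.'' There is therefore no proof in the paper to compare against. What the paper does supply is a partial result (Lemma~\ref{error_rate_tradeoff_lemma}), which pushes the error threshold in~\eqref{epsilon_max} from $\epsilon_{\max}=1/2$ to $\epsilon_{\max}=2/3$ and gives a nontrivial rate--error trade-off for larger $\epsilon$.

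Your proposal is candid about not being a proof either: you end by saying the key step ``requires a genuinely new quantitative input beyond the techniques developed in the Supplementary Information.'' That is an accurate assessment. You correctly diagnose the bottleneck --- the $\epsilon$-lemma collapses at $\epsilon=1/2$ because the product witness $X_3^{\otimes n}$ has operator norm exactly equal to its expectation on $\omega_3^{\otimes n}$ --- but none of the strategies you outline (the smoothed tempered negativity $N_\tau^\epsilon$, the symmetric-subspace witnesses, the de~Finetti reduction) is carried through, and each would need substantial new work to yield the conjecture.

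It is worth noting that the paper's partial progress takes a route different from any you propose. Rather than symmetrising or smoothing, Lemma~\ref{error_rate_tradeoff_lemma} introduces a one-parameter family of \emph{single-copy} witnesses $X_3(\delta)=\frac{3}{2-3\delta}\big((1-\delta)P_3-\Phi_3\big)$ for $0<\delta<1/3$, still takes tensor powers, and then exploits a more refined spectral lower bound on $X_3(\delta)^{\otimes n}$ (inequalities~\eqref{spectral_structure_X_3}--\eqref{stupid_spectral_structure_X_3}) together with the overlap estimate~\eqref{pretty_eq3}, rather than the H\"older step of the $\epsilon$-lemma. This buys the extension to $\epsilon_{\max}=2/3$ and a quantitative trade-off $R\geq\log_2\frac{3(1-\delta)}{2-3\delta}$, but it visibly stalls short of the full strong converse, precisely because the product structure of the witness is retained. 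Your idea of abandoning product witnesses in favour of permutation-symmetric ones is a reasonable direction for attacking the remaining gap, but as you acknowledge, showing that a symmetric projection of $X_3^{\otimes n}$ keeps expectation of order $2^n$ while having polynomially bounded operator norm is exactly the missing analytic input, and nothing in your proposal supplies it.
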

If the above conjecture is true, then we see that 
\begin{equation}
E_{d,\, \kp}(\omega_3) = E_{d,\, \kp}^\dag(\omega_3) = \log_2 \frac32 < 1 = E_{c,\, \kp}^\dag(\omega_3) = E_{c,\, \kp}(\omega_3)\, ,
\end{equation}
i.e.\ the gap between distillable entanglement and entanglement cost remains even upon passing to the strong converse exponents.

Theorem~\ref{irreversibility_thm} already guarantees that 
\bb
\inf_{\epsilon \in \left[0,\, \epsilon_{\max}\right)} E_{c,\, \kp}^\epsilon (\omega_3) = 1
\label{epsilon_max}
\ee
holds for $\epsilon_{\max}=1/2$. The problem we are faced with now is how to increase the value of $\epsilon_{\max}$ in~\eqref{epsilon_max}. In general, we would like to study more in depth the behaviour of the function $E_{c,\, \kp}^\epsilon (\omega_3)$ for values of $\epsilon$ close to $1$. A partial result is as follows:

\begin{lemma} \label{error_rate_tradeoff_lemma}
For all $\delta\in (0,1]$, it holds that
\bb
\inf_{\epsilon\in [0,\, 1-\delta)} E_{c,\, \sepp}^\epsilon(\omega_3) \geq \inf_{\epsilon\in [0,\, 1-\delta)} E_{c,\, \pptp}^\epsilon (\omega_3) \geq \left\{ \begin{array}{ll} 1 & \text{if $1/3\leq \delta\leq 1$,} \\[1ex] \log_2 \frac{3(1-\delta)}{2-3\delta} & \text{if $0< \delta< 1/3$.} \end{array} \right.
\label{}
\ee
\end{lemma}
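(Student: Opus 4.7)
The plan is to extend the proof of Theorem~\ref{irreversibility_thm} by considering, for each $\lambda \geq 0$, the one-parameter family of witnesses $M_\lambda \coloneqq (\Pi_3 + \lambda X_3)^{\otimes n}$, where $\Pi_3 \coloneqq P_3 - \Phi_3$ is the support projector of $\omega_3$ and $X_3 \coloneqq 2P_3 - 3\Phi_3$ is the operator used in Theorem~\ref{irreversibility_thm}. This family interpolates between the $\Pi_3^{\otimes n}$ regime (effective as the error approaches $1$) and the $X_3^{\otimes n}$ regime (effective for $\epsilon < 2/3$), thereby covering the entire range of errors in the statement.

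First I would compute the relevant spectral data. Since $\Pi_3$ and $X_3$ are both supported on $P_3$ and commute, $\Pi_3 + \lambda X_3 = (1+2\lambda) P_3 - (1+3\lambda) \Phi_3$ is jointly diagonalizable with eigenvalues $\{1+2\lambda,\, -\lambda,\, 0\}$ of multiplicities $2, 1, 6$, and $\omega_3$ lies in its $(1+2\lambda)$-eigenspace. Hence $\Tr M_\lambda\, \omega_3^{\otimes n} = (1+2\lambda)^n = \lambda_{\max}(M_\lambda)$, while the non-zero eigenvalues of $M_\lambda$ are all products $(1+2\lambda)^a(-\lambda)^{n-a}$, with the most negative attained at $a = n-1$, giving $\lambda_{\min}(M_\lambda) = -\lambda(1+2\lambda)^{n-1}$. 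Moreover $(\Pi_3 + \lambda X_3)^\Gamma = (1+2\lambda) P_3 - (\tfrac{1}{3} + \lambda) F_3$ has spectrum $\{\tfrac{2}{3}+\lambda,\, -(\tfrac{1}{3}+\lambda),\, \tfrac{1}{3}+\lambda\}$, so that $\|M_\lambda^\Gamma\|_\infty = (\tfrac{2}{3}+\lambda)^n$.

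Next I would derive a rate--error tradeoff. Let $\Lambda_n$ be a $\K$-preserving dilution protocol (with $\K = \SEP$ or $\PPT$) and set $\rho_n \coloneqq \Lambda_n(\Phi_2^{\otimes m})$, $\epsilon_n \coloneqq \tfrac{1}{2}\|\rho_n - \omega_3^{\otimes n}\|_1$. The separable decomposition of $\Phi_{2^m}$ from the proof of Theorem~\ref{irreversibility_thm} gives $\|\rho_n^\Gamma\|_1 \leq 2^{m+1} - 1$, so by H\"older $\Tr M_\lambda \rho_n = \Tr M_\lambda^\Gamma \rho_n^\Gamma \leq (\tfrac{2}{3}+\lambda)^n(2^{m+1}-1)$. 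Conversely, writing the Jordan decomposition $\rho_n - \omega_3^{\otimes n} = \alpha_n - \beta_n$ with $\alpha_n, \beta_n \geq 0$ and $\Tr \alpha_n = \Tr \beta_n \leq \epsilon_n$, and bounding $\Tr M_\lambda \alpha_n \geq \lambda_{\min}(M_\lambda)\,\Tr \alpha_n$ and $\Tr M_\lambda \beta_n \leq \lambda_{\max}(M_\lambda)\,\Tr \beta_n$, one obtains $\Tr M_\lambda \rho_n \geq (1+2\lambda)^n - (1+2\lambda)^{n-1}(1+3\lambda)\,\epsilon_n$. Combining these estimates, taking logarithms and dividing by $n$, shows that whenever the asymptotic error $\epsilon = \lim_n \epsilon_n$ satisfies $\epsilon < (1+2\lambda)/(1+3\lambda)$, any achievable rate $R$ must satisfy $R \geq \log_2 \frac{1+2\lambda}{\,2/3+\lambda\,}$.

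Finally I would optimize over $\lambda$ subject to the compatibility constraint. The map $\lambda \mapsto (1+2\lambda)/(1+3\lambda)$ strictly decreases from $1$ (at $\lambda = 0$) to $2/3$ (as $\lambda \to \infty$); thus for $\epsilon \leq 2/3$ the constraint holds for every $\lambda \geq 0$, and sending $\lambda \to \infty$ yields $R \geq \log_2 2 = 1$. For $\epsilon \in (2/3, 1)$, the constraint forces $\lambda < (1-\epsilon)/(3\epsilon - 2)$; since $\lambda \mapsto (1+2\lambda)/(2/3+\lambda)$ is strictly increasing (its derivative is $\tfrac{1}{3}(2/3+\lambda)^{-2}$), pushing $\lambda$ to this boundary gives $R \geq \log_2 \frac{3\epsilon}{3\epsilon-1} = \log_2 \frac{3(1-\delta)}{2-3\delta}$, which is exactly the required bound (with $\delta = 1-\epsilon$). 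The first inequality in the lemma's statement reduces to Lemma~\ref{sepp_vs_pptp_lemma}. I expect the main technical hurdle to lie in the careful bookkeeping of $\lambda_{\min}(M_\lambda)$, since the most negative product $(1+2\lambda)^a(-\lambda)^{n-a}$ must be tracked with correct signs for the Jordan-decomposition estimate to remain tight across the whole range $\lambda \in [0, \infty)$.
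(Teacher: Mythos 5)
Your proposal is correct and follows essentially the same route as the paper's proof: after rescaling by $\bigl(\tfrac23+\lambda\bigr)^{-1}$ and setting $\lambda=\delta/(1-3\delta)$, your witness $\Pi_3+\lambda X_3=(1+2\lambda)P_3-(1+3\lambda)\Phi_3$ is exactly the paper's ansatz $X_3(\delta)=\tfrac{3}{2-3\delta}\left((1-\delta)P_3-\Phi_3\right)$, and your H\"older bound against $\left\|\Lambda_n(\Phi_2^{\otimes m})^\Gamma\right\|_1\leq 2^{m+1}-1$ together with the Jordan-decomposition estimate reproduces, with identical constants, what the paper obtains via the tempered negativity and the spectral inequality on $X_3(\delta)^{\otimes n}$. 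The only differences are presentational (optimising over the family $\lambda$ rather than writing down $X_3(\delta)$ directly, and bypassing the $\tn$/$R^s_\K$ formalism), and your handling of the first inequality via Lemma~\ref{sepp_vs_pptp_lemma} is fine.
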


Let us discuss the consequences of this result. First, we see that performing entanglement dilution at the rate $1$ --- which we know is optimal for $E_{c,\,\sepp}$ --- must incur an error of at least $\frac23$ asymptotically. This means that we can in fact take $\epsilon_{\max}=2/3$ in~\eqref{epsilon_max}, which already improves upon the error $\epsilon_{\max}=1/2$ that we obtained in Theorem~\ref{irreversibility_thm}. Furthermore, attempting to perform dilution of $\omega_3$ at a rate below $1$ must yield an even larger error; the smaller the rate, the greater the error. One commonly refers to such behaviour as an \emph{error-rate trade-off}.

\begin{proof}[Proof of Lemma~\ref{error_rate_tradeoff_lemma}]
Let $R$ be an achievable rate for the entanglement cost of $\omega_3$ at error threshold $\epsilon<1-\delta$. That is, let there exist a sequence of non-entangling operations $\Lambda_n\in \sepp_{A_0^{\floor{Rn}}B_0^{\floor{Rn}}\to A^nB^n}$, with $A_0,B_0$ being single-qubit systems, such that
\bb
\Omega_n \coloneqq \Lambda_n\left( \Phi_2^{\otimes \floor{Rn}}\right),\qquad \epsilon_n\coloneqq \frac12 \left\| \Omega_n - \omega_3^{\otimes n}\right\|_1\, ,\qquad \limsup_{n\to\infty}\epsilon_n < 1-\delta\, .
\label{pretty_eq0}
\ee
For $\K=\SEP, \PPT$, we have that
\bb
2^{\floor{Rn}} &\textgeq{(i)} 1 + R^s_{\K} \left( \Omega_n\right) \\
&\texteq{(ii)} 1 + \sup_{\Omega'} \tsr_{\K} \left( \Omega_n\, \big|\, \Omega'\right) \\
&\textgeq{(iii)} \frac12 \sup_{\Omega'} \left\{ \tn \left( \Omega_n\, \big|\, \Omega'\right) + 1 \right\} \\
&\geq \frac12\, \tn \left( \Omega_n\, \big|\, \omega_3^{\otimes n} \right) .
\label{pretty_eq1}
\ee
Here, the inequality in~(i) is just a rephrasing of that in step~(ii) of~\eqref{epsilon_lemma_at_work}. In the subsequent lines of~\eqref{pretty_eq1}, the identity in~(ii) and the inequality in~(iii) follow from Proposition~\ref{elementary_prop}, items~(a) and~(d), respectively.

To continue our argument, we need to lower bound $\tn \left( \Omega_n\, \big|\, \omega_3^{\otimes n} \right)$, as usual. While in the case of vanishing error the operator $X_3$ defined by~\eqref{X_3} was the right ansatz for the optimisation~\eqref{tn} defining the tempered negativity, this time we will make a more sophisticated choice. Using the notation defined in~\eqref{P_3_and_Phi_3}, set
\bb
X_3(\delta) \coloneqq \left\{ \begin{array}{ll} 2P_3 - 3\Phi_3 & \text{if $1/3\leq \delta\leq 1$,} \\[1ex] \frac{3}{2-3\delta}\left((1-\delta) P_3 - \Phi_3 \right) & \text{if $0< \delta< 1/3$.} \end{array}\right.
\label{X_3_delta}
\ee
We now verify that (a)~$\left\|X_3(\delta)^\Gamma\right\|_\infty=1$ and (b)~$\left\|X_3(\delta)\right\|_\infty = \Tr X_3(\delta)\omega_3$. Thanks to~\eqref{irreversibility_proof_eq5}--\eqref{irreversibility_proof_eq6}, we can limit ourselves to the case where $0<\delta< 1/3$. As for~(a), we have that
\bb
\left\|X_3(\delta)^\Gamma\right\|_\infty = \frac{3}{2-3\delta} \left\| (1-\delta) P_3 - \frac13 F_3\right\|_\infty = \frac{3}{2-3\delta} \max\left\{ \frac23 - \delta,\, \frac13 \right\} = 1\, ,
\ee
where $F_3=\sum_{j,k=1}^3 \ketbraa{jk}{kj}$ is as usual the flip operator, and the second equality is elementary because $P_3$ and $F_3$ commute. Concerning~(b), it suffices to observe that
\bb
\left\|X_3(\delta)\right\|_\infty = \frac{3}{2-3\delta} \max\left\{ \delta,\, 1-\delta \right\} = \frac{3(1-\delta)}{2-3\delta} = \Tr X_3(\delta)\omega_3\, ,
\label{X_3_delta_operator_norm}
\ee
where again the computation of the operator norm is elementary because $P_3$ and $\Phi_3$ commute. From the above claims~(a) and~(b), we deduce immediately that also (a')~$\left\|\left(X_3(\delta)^{\otimes n}\right)^\Gamma\right\|_\infty=1$ and (b')~$\left\|X_3(\delta)^{\otimes n}\right\|_\infty = \Tr \left[ X_3(\delta)^{\otimes n} \omega_3^{\otimes n}\right]$. This in turn ensures that $X_3(\delta)^{\otimes n}$ is a feasible ansatz for the optimisation that defines $\tn \left( \Omega_n\, \big|\, \omega_3^{\otimes n} \right)$ (cf.~\eqref{tn}). Hence,
\bb
\tn \left( \Omega_n\, \big|\, \omega_3^{\otimes n} \right) \geq \Tr\left[ X_3(\delta)^{\otimes n} \Omega_n\right] .
\label{pretty_eq2}
\ee
To further lower bound the right-hand side, we need to have a closer look at the spectral structure of $X_3(\delta)^{\otimes n}$. Let us distinguish the two cases $0<\delta< 1/3$ and $1/3\leq \delta\leq 1$. In the former case, according to~\eqref{X_3_delta_operator_norm} the largest eigenvalue of this operator, $\frac{3^n(1-\delta)^n}{(2-3\delta)^n}$, corresponds to the eigenspace with projector $(P_3-\Phi_3)^{\otimes n}$. The smallest eigenvalue, instead, is easily seen to be $-\frac{3^n(1-\delta)^{n-1}\delta}{(2-3\delta)^n}$. We deduce that
\bb
X_3(\delta)^{\otimes n} &\geq \frac{3^n(1-\delta)^n}{(2-3\delta)^n} (P_3-\Phi_3)^{\otimes n} - \frac{3^n(1-\delta)^{n-1}\delta}{(2-3\delta)^n} \left( \id - (P_3-\Phi_3)^{\otimes n} \right) \\&= \frac{3^n(1-\delta)^{n-1}}{(2-3\delta)^n} \left( (P_3-\Phi_3)^{\otimes n} - \delta \id \right) .
\label{spectral_structure_X_3}
\ee
The above operator inequality is valid for $0<\delta< 1/3$. When $1/3\leq \delta\leq 1$, instead, an analogous reasoning yields more simply
\bb
X_3(\delta)^{\otimes n} \geq 2^n (P_3-\Phi_3)^{\otimes n} - 2^{n-1} \left( \id - (P_3-\Phi_3)^{\otimes n} \right) = 2^{n-1} \left( 3(P_3-\Phi_3)^{\otimes n} - \id\right) .
\label{stupid_spectral_structure_X_3}
\ee
Before completing the proof, let us observe that by virtue of~\eqref{pretty_eq0} we have
\bb
\Tr \left[ (P_3-\Phi_3)^{\otimes n} \Omega_n \right] = 1 + \Tr \left[(P_3-\Phi_3)^{\otimes n} \left( \Omega_n - \omega_3^{\otimes n}\right)\right] \geq 1-\epsilon_n\, .
\label{pretty_eq3}
\ee
Continuing from~\eqref{pretty_eq2}, when $0<\delta< 1/3$ we infer
\bb
\tn \left( \Omega_n\, \big|\, \omega_3^{\otimes n} \right) &\textgeq{(iv)} \frac{3^n(1-\delta)^{n-1}}{(2-3\delta)^n} \Tr\left[ \left( (P_3-\Phi_3)^{\otimes n} - \delta \id \right) \Omega_n \right] \\
&\textgeq{(v)} \frac{3^n(1-\delta)^{n-1}}{(2-3\delta)^n} \left( 1 - \delta - \epsilon_n \right) \\
&= \left( \frac{3(1-\delta)}{2-3\delta}\right)^n \left( 1- \frac{\epsilon_n}{1-\delta}\right) .
\ee
Here, the inequality in~(iv) is obtained by plugging~\eqref{spectral_structure_X_3} into~\eqref{pretty_eq2}, while that in~(v) comes from~\eqref{pretty_eq3}. Remembering~\eqref{pretty_eq1}, we obtain that
\bb
2^{\floor{Rn}} \geq \frac12 \left( \frac{3(1-\delta)}{2-3\delta}\right)^n \left( 1- \frac{\epsilon_n}{1-\delta}\right) .
\ee
Computing the logarithm of both sides, dividing by $n$, and using~\eqref{pretty_eq0} to take the limit for $n\to\infty$ yields $R\geq \log_2 \frac{3(1-\delta)}{2-3\delta}$. The reasoning for the case where $1/3\leq\delta\leq 1$ is entirely analogous, leading to $R\geq 1$. Since the rate $R$ was completely arbitrary, the proof is complete.
\end{proof}

\subsection{Implications for coherence theory}

The resource theory of quantum coherence~\cite{Baumgraz2014,coherence-review} is concerned with the study of the operational manipulation of the resource represented by superposition in a fixed basis $\{\ket{i}\}_i$ of a Hilbert space --- very much analogously to the way that entanglement has been studied in this work. Indeed, one can define the asymptotic quantities such as coherence cost $C_c$ and distillable coherence $C_d$ in a manner completely equivalent to the entanglement-based definitions of our work~\cite{Winter2016}. It is, in particular, known that the manipulation of coherence is asymptotically reversible under the class of maximally incoherent operations (MIO)~\cite{Winter2016,one-shot-c-dilution}, which are the coherence theory equivalent of non-entangling transformations: they map any incoherent (diagonal) state into another incoherent state. For such maps, it then holds that $C_{d,\mathrm{MIO}} (\omega) = C_{c,\mathrm{MIO}} (\omega)$ for any state.

A strong parallel between the theories of coherence and entanglement was noticed when the latter is restricted to the so-called maximally correlated states~\cite{Winter2016}, of the form
\begin{equation}\begin{aligned}
    \rho = \sum_{i,j} \rho_{ij} \ketbraa{ii}{jj}
\end{aligned}\end{equation}
for some bases of $\HH_A$ and $\HH_B$ and coefficients $\rho_{ij}$. Such states are separable if and only if they are diagonal in the given basis, and their operational properties --- including the values of entanglement cost and distillable entanglement under LOCC --- exactly match the values of coherence cost and distillable coherence of the corresponding single-party state
\begin{equation}\begin{aligned}
    \wt{\rho} = \sum_{i,j} \rho_{ij} \ketbraa{i}{j}
\end{aligned}\end{equation}
when coherence manipulation is considered under the class of incoherent operations (IO)~\cite{Baumgraz2014,Winter2016}.

Our result, however, shows a major difference between the theories of entanglement for maximally correlated states and quantum coherence, and breaks the quantitative equivalence when the transformations under non-entangling operations NE and maximally incoherent operations MIO are considered. Crucially, our counterexample $\omega_3$ is a maximally correlated state. This implies that
\begin{equation}\begin{aligned}
    E_{c,\sepp} (\omega_3) > E_{d,\sepp} (\omega_3) = C_{d,\mathrm{MIO}} (\wt{\omega}_3) = C_{c,\mathrm{MIO}} (\wt{\omega}_3),
\end{aligned}\end{equation}
where the first equality is due to the fact that both $E_{d,\sepp}(\rho)$ and $C_{d,\mathrm{MIO}}(\wt{\rho})$ are given by the relative entropy of coherence of $\wt{\rho}$~\cite{Winter2016,one-shot-c-dilution,Rains2001} for any maximally correlated state. Intuitively, any MIO transformation can be seen to give rise to a transformation which is non-entangling, but only when restricted to the maximally correlated subspace; our result shows that such maps cannot always be extended to a transformation which is non-entangling for any input state, and so NE operations are weaker at manipulating maximally correlated states than MIO operations are at manipulating their corresponding single-party systems.

\subsection{More open questions}

Our results motivate a number of extensions and follow-up results that would strengthen the understanding of general entanglement manipulation. We have already remarked several of them throughout this Supplementary Information; let us collect them here and discuss other open questions.

First, although our bound $E_{c,\, \sepp} (\rho) \geq E_N^\tau(\rho)$ is good enough to establish the irreversibility of entanglement manipulation, one could ask whether a tighter computable bound can be obtained. For instance, does the regularised quantity $\tl_{\!\K}(\rho)$ in Theorem~\ref{main_tool_thm} admit a single-letter expression? Even more ambitiously, one could ask whether an exact expression for the entanglement cost $E_{c,\, \sepp}$ itself can be established. Such questions are interesting not only from an axiomatic perspective, but also because any such result would provide improved bounds on $E_{c,\, \locc}$.

Additionally, as mentioned in Supplementary Note~\ref{main_results_note}, several quantum states which have previously been used as examples of irreversibility under smaller sets of operations are actually reversible under $\sepp$ and $\pptp$. An understanding of what exactly makes a state such as $\omega_3$ irreversibile under \emph{all} non-entangling transformations, and --- more generally --- a complete characterisation of all irreversible states could help shed light on the stronger type of irreversibility that we have revealed in this work.

Another result uncovered in Supplementary Note~\ref{generation_note} of our work is that, although Brand\~ao and Plenio's framework~\cite{BrandaoPlenio1, BrandaoPlenio2} suggests that entanglement can be reversibly manipulated while generating only small (asymptotically vanishing) amounts of the generalised robustness $R^g_\SEP$, requiring that the standard robustness $R^s_\SEP$ also be small completely breaks reversibility. It would be very interesting to understand exactly why such a `phase transition' in the task of entanglement dilution occurs, and whether one can tighten our and Brand\~ao and Plenio's results with other choices of monotones.

Also, the phenomenon of catalysis~\cite{Plenio-catalysis} is a remarkable feature that can significantly enhance feasible entanglement transformations: the fact that a transformation $\rho \to \omega$ is impossible does not necessarily mean that $\rho \otimes \tau \to \omega \otimes \tau$ cannot be accomplished with the same set of allowed processes. One could then ask about transformation rates where, instead of requiring that $\Lambda(\rho^{\otimes n}) \to \omega^{\otimes m}$, one asks that $\Lambda(\rho^{\otimes n} \otimes \tau) \to \omega^{\otimes m} \otimes \tau$ for some state $\tau$, and define the distillable entanglement $E_d$ and entanglement cost $E_c$ correspondingly. Additionally, the operations $\sepp$ and $\pptp$ are rather curious in that they are not closed under tensor product, in the sense that $\Lambda, \Lambda' \in \sepp$ does not mean that $\Lambda \otimes \Lambda' \in \sepp$ when acting on a larger system. Such properties leave open the possibility of significantly different behaviour when catalysis is employed, something that was already remarked in~\cite{BrandaoPlenio1, BrandaoPlenio2}.

Finally, a fundamentally important question is: what is it about entanglement that makes it irreversible? As discussed in the main text, several examples of quantum resource theories have been shown to be reversible when all resource--non-generating operations (counterparts to non-entangling or PPT-preserving maps) are allowed~\cite{RT-review}. Although there is no \emph{a priori} reason to expect reversibility to be a generic property of quantum resources, one can note that the framework and main results of Brand\~{a}o and Plenio 
may be adapted to more general convex resource theories~\cite{Brandao-Gour}, 
suggesting~\cite{berta_gap} that reversibility can 
be guaranteed when \emph{asymptotically} resource--non-generating transformations are allowed and the generated resource is quantified with a generalised robustness--type measure $R^g_\K$. Our question then reduces to: in what types of resources can we enforce strict resource non-generation and still maintain reversibility? Is entanglement truly unique in its general irreversibility?

\section{Subtleties related to infinite dimension} \label{infinite_dim_note}

Throughout this section we will elaborate on some issues that arise specifically in dealing with infinite-dimensional spaces.

\subsection{On the definition of partial transpose}

The definition of the partial transposition~\cite{PeresPPT} requires some further care when one deals with infinite-dimensional spaces, the main reason being that this operation does not preserve the space of trace class operators. We first define it on the dense subspace $\T(\HH_A)\otimes \T(\HH_B)\subseteq \T(\HH_{AB})$ of finite linear combinations of simple tensors by the expression $\Gamma(X_A\otimes Y_B) = (X_A\otimes Y_B)^\Gamma \coloneqq X_A \otimes Y_B^\intercal$, already encountered in~\eqref{PT_simple_tensors}, extended by linearity.

Now, one observes that the linear map $\Gamma:\T(\HH_A)\otimes \T(\HH_B) \to \B(\HH_{AB})$ we just constructed has operator norm
\begin{equation*}
    \left\|\Gamma\right\|_{1\to\infty} \coloneqq \sup_{0\neq Z\in \T(\HH_A)\otimes \T(\HH_B)} \frac{\left\| Z^\Gamma \right\|_\infty}{\|Z\|_1} = 1\, ,
\end{equation*}
where $\|Z\|_\infty \coloneqq \sup_{\ket{\psi}}\left\| Z\ket{\psi}\right\|$. To see this, it suffices to write the singular value decomposition of any $Z\in \T(\HH_A)\otimes \T(\HH_B)\subseteq \T\left( \HH_{AB} \right)$ as $Z = \sum_{i=0}^\infty \lambda_i \ketbraa{\Psi_i}{\Phi_i}$, where $\|Z\|_1 = \sum_{i=0}^\infty |\lambda_i|$, and then notice that $\left\|\ketbraa{\Psi}{\Phi}^\Gamma\right\|_\infty \leq 1$ for all pairs of pure states $\ket{\Psi},\ket{\Phi}\in \HH_{AB}$, so that
\begin{equation*}
    \left\|Z^\Gamma \right\|_\infty \leq \sum_{i=0}^\infty |\lambda_i| \left\| \ketbraa{\Psi}{\Phi}^\Gamma \right\|_\infty \leq 1\, .
\end{equation*}
Finally, one may use the continuous extension theorem to lift $\Gamma$ to a continuous map $\Gamma:\T(\HH_{AB}) \to \B(\HH_{AB})$. It still holds that $\left\|\Gamma\right\|_{1\to\infty}=1$.

\subsection{Topological properties of the cone of PPT operators}

An important property of the two cones of separable and of PPT operators is that they are closed with respect to the topology induced on $\T(\HH_{AB})$ by its pre-dual, the Banach space of compact operators on $\HH_{AB}$. We will refer to this topology as the \emph{weak*-topology}. The fact that $\SEP_{AB}$ is weak*-closed has been established in Ref.~\cite[Lemma~25]{taming-PRA}. An analogous statement for $\PPT_{\!AB}$ can be proved even more directly (cf.~\cite[Lemma~13]{achievability}).

\begin{lemma} \label{weak*_closed_PPT_lemma}
The cone $\PPT_{\!AB}\subseteq \T(\HH_{AB})$ defined by~\eqref{PPT} is weak*-closed, i.e.\ closed with respect to the topology induced on $\T(\HH_{AB})$ by its pre-dual, the Banach space of compact operators on $\HH_{AB}$. 
\end{lemma}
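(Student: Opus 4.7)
The plan is to decompose the cone $\PPT_{\!AB}$ as an intersection of weak*-closed sets. Explicitly, I would write
\[
\PPT_{\!AB}\ =\ \Tp(\HH_{AB})\, \cap\, \bigcap_{\ket{\psi}\in \HH_{AB}} \left\{ T\in \T(\HH_{AB}):\, \braket{\psi| T^\Gamma |\psi} \geq 0 \right\} ,
\]
and show that each of the sets appearing on the right-hand side is weak*-closed. Since the intersection of any family of weak*-closed sets is weak*-closed, this will suffice. That $\Tp(\HH_{AB})$ itself is weak*-closed is already elementary: it is exactly $\bigcap_{\ket{\psi}} \{T:\, \Tr[T\ketbra{\psi}]\geq 0\}$, and each of these half-spaces is cut out by a linear functional of the form $T\mapsto \Tr[TK]$ with $K=\ketbra{\psi}$ compact, hence weak*-continuous by the very definition of the weak*-topology.

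The heart of the argument is therefore to show that, for each fixed $\ket{\psi}\in \HH_{AB}$, the linear functional $T\mapsto \braket{\psi|T^\Gamma|\psi} = \Tr\big[T^\Gamma \ketbra{\psi}\big]$ is weak*-continuous on $\T(\HH_{AB})$. My plan is to rewrite this functional by dualising the partial transpose, namely via the identity
\[
\Tr\!\left[ X^\Gamma\, Y\right] \ =\ \Tr\!\left[ X\, Y^\Gamma\right] \qquad \forall\ X,Y \in \T(\HH_{AB})\, .
\]
This identity holds trivially on the algebraic subspace $\T(\HH_A)\otimes \T(\HH_B)$ of finite combinations of simple tensors, and extends to all of $\T(\HH_{AB})$ by joint trace-norm continuity, combined with the trace-norm density of the algebraic tensor product and with the bound $\|\Gamma\|_{1\to\infty}=1$ established earlier in the excerpt. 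Applied to $X=T$ and $Y=\ketbra{\psi}$, it yields $\Tr[T^\Gamma\ketbra{\psi}] = \Tr\!\left[T\, (\ketbra{\psi})^\Gamma\right]$.

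Next, I would show that $K_\psi\coloneqq (\ketbra{\psi})^\Gamma$ is a compact operator on $\HH_{AB}$; this is the step that lets the above functional be interpreted as pairing $T$ against an element of the pre-dual $\K(\HH_{AB})$. Writing the Schmidt decomposition $\ket{\psi} = \sum_{i=0}^\infty \sqrt{\lambda_i}\ket{e_i}\ket{f_i}$, the truncations $\ket{\psi_n}\bra{\psi_n}$ lie in the algebraic tensor product and converge to $\ketbra{\psi}$ in trace norm, so by $\|\Gamma\|_{1\to\infty}=1$ their partial transposes $(\ket{\psi_n}\bra{\psi_n})^\Gamma$, which are manifestly of finite rank, converge to $K_\psi$ in operator norm. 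Since compact operators are operator-norm closed, $K_\psi\in \K(\HH_{AB})$.

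Putting these pieces together, $T\mapsto \Tr[T K_\psi] = \Tr[T^\Gamma \ketbra{\psi}]$ is a weak*-continuous linear functional on $\T(\HH_{AB})$, so each of the sets $\{T:\, \braket{\psi|T^\Gamma|\psi}\geq 0\}$ is weak*-closed and the displayed intersection formula delivers the lemma. The main obstacle I expect is not conceptual but notational: one must handle $(\ketbra{\psi})^\Gamma$ carefully as a bounded (not necessarily trace-class) operator and justify the dualisation identity in the infinite-dimensional setting through the trace-norm approximation outlined above, exactly as in~\cite[Lemma~13]{achievability}.
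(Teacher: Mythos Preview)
Your proof is correct and follows essentially the same strategy as the paper's: express the positivity of $T^\Gamma$ as a family of linear constraints $\Tr[T\,K]\geq 0$ with compact test operators $K$, obtained by dualising the partial transpose. The only cosmetic difference is that the paper tests against operators $Y\geq 0$ with \emph{finite expansion} in fixed product bases (whose partial transposes are automatically finite rank, avoiding your approximation step), whereas you test against all rank-one projectors $\ketbra{\psi}$ and separately argue that $(\ketbra{\psi})^\Gamma$ is compact via Schmidt truncation; both routes are equally valid and yield the same conclusion.
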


\begin{proof}
Note that $\PPT_{\!AB} = \Tp(\HH_{AB})\cap \left(\Gamma(\Tp(\HH_{AB})) \cap \T(\HH_{AB}) \right)$, where the intersection with $\T(\HH_{AB})$ in the second factor reminds us of the fact that $\Gamma(\Tp(\HH_{AB}))\subseteq \B(\HH_{AB})$ contains operators that are not of trace class. Since $\Tp(\HH_{AB})$ is well known to be weak*-closed, it suffices to show that so is $\Gamma(\Tp(\HH_{AB})) \cap \T(\HH_{AB})$ as well. Pick local orthonormal bases $\{\ket{n}_A\}_{n\in \N}$ and $\{\ket{m}_B\}_{m\in \N}$. Let us say that an operator $Y\in \T(\HH_{AB})$ has a finite expansion if $\braket{n,m|Y|n',m'}\neq 0$ only for a finite number of quadruples $(n,m,n',m')\in \N^4$. If this is the case, it is simple to verify that also the partial transpose $Y^\Gamma$ has a finite expansion; in particular, both $Y$ and $Y^\Gamma$ are compact operators.

We now claim that $X\in \T(\HH_{AB})$ satisfies that $X\in \Gamma(\Tp(\HH_{AB}))$ if and only if $\Tr\left[X Y^\Gamma\right]\geq 0$ for all $Y\geq 0$ with a finite expansion. To see why, note that a necessary and sufficient condition for $X^\Gamma\geq 0$ is that all (finite) principal minors of $X^\Gamma$ are positive. In particular, $X^\Gamma\geq 0$ if and only if $\Tr\left[ X^\Gamma Y\right] = \Tr\left[ X Y^\Gamma\right] \geq 0$ for all $Y\geq 0$ with a finite expansion. Since any such $Y^\Gamma$ is compact, the functionals $X \mapsto \Tr\left[ X Y^\Gamma\right]$ are weak*-continuous; we deduce immediately that $\Gamma(\Tp(\HH_{AB}))\cap \T(\HH_{AB})$ is weak*-closed. This concludes the proof.
\end{proof}


\end{document}